\DeclareMathAlphabet{\mathpzc}{OT1}{pzc}{m}{it}
\theoremstyle{plain}
\newtheorem{theorem}{Theorem}[section]
\newtheorem*{theorem*}{Theorem}
\newtheorem*{maintheorem}{Theorem}
\newtheorem{lemma}[theorem]{Lemma}
\newtheorem*{claim*}{Claim}
\newtheorem{proposition}[theorem]{Proposition}
\newtheorem{corollary}[theorem]{Corollary}
\newtheorem{conjecture}[theorem]{Conjecture}
\theoremstyle{definition}
\newtheorem{definition}[theorem]{Definition}
\newtheorem{example}[theorem]{Example}
\newtheorem{remark}[theorem]{Remark}
\numberwithin{equation}{section}
\numberwithin{figure}{section}
\newcommand{\Sf}{\Stsigma/\Stsigma f(t)} %{\mathbf{S}_f}
\newcommand{\Sfnew}{\mathbb{S}_{f}}
\newcommand{\Shnew}{\mathbb{S}_{h}} %{\mathbf{S}_f}
\newcommand{\Sone}[1]{\Stsigma/\Stsigma (t^{\de}-#1)} %{\mathbf{S}_{t^m-#1}}
\newcommand{\Sonenew}[1]{\mathbb{S}_{t^\de-#1}}
\newcommand{\Kf}{\Ktsigma/\Ktsigma f(t)}
\newcommand{\Kfnew}{\mathbb{K}_{f}}
\newcommand{\Khnew}{\mathbb{K}_{h}}
\newcommand{\ring}{S}
\newcommand{\field}{K}
\newcommand{\Ktsigma}{K[t;\sigma]}
\newcommand{\Stsigma}{S[t;\sigma]}
\newcommand{\Kone}[1]{K[t;\sigma]/K[t;\sigma](t^{\de}-#1)}
\newcommand{\Konenew}[1]{\mathbb{K}_{t^\de-#1}}
\newcommand{\Fqone}[1]{\F_q[t;\sigma]/\F_q[t;\sigma](t^{\de}-#1)}
\newcommand{\fixedts}{S_0^\tau}
\newcommand{\scz}{\mathsf{z}}  %in the first proposition
\DeclareMathOperator{\id}{id}
\newcommand{\os}{\mathbf{m}}  %temporary symbol for the order of $\sigma$
\newcommand{\de}{\mathbf{n}}  %temporary symbol for the degree of $f$
\newcommand{\N}{\mathbb{N}}
\newcommand{\F}{\mathbb{F}}
\DeclareMathOperator{\Gal}{Gal}
\DeclareMathOperator{\Aut}{Aut}
\keywords{skew cyclic codes, skew polycyclic codes, quantum error-correcting codes, nonassociative algebra}
\subjclass[2020]{Primary: 94B40, 94B05; Secondary: 17A35, 94B05}
\begin{document}
\title[Isometry and equivalence]{When isometry and equivalence for skew constacyclic codes coincide}

\author{Monica Nevins}
\address{Department of Mathematics and Statistics, University of Ottawa, Ottawa, Canada K1N 6N5}
\email{mnevins@uottawa.ca}
\thanks{The first author's research is supported by NSERC Discovery Grant RGPIN-2025-05630.}

\author{Susanne Pumpl\"un}
\address{School of Mathematical Sciences, University of Nottingham,
Nottingham NG7 2RD}
\email{Susanne.Pumpluen@nottingham.ac.uk}

\keywords{Skew polycyclic codes, skew polynomials, skew constacyclic codes, nonassociative algebras}

\subjclass[2020]{94B15, 11T71, 17A99, 81P70}

\date{\today}

\begin{abstract}
We work in the  setting of %skew polynomial rings $S[t;\sigma]$ and
linear skew constacyclic codes over a commutative base ring $S$.
We show that the notions of $(\de,\sigma)$-isometry and $(\de,\sigma)$-equivalence introduced by Ou-azzou \emph{et al}
coincide for most skew  $(\sigma,a)$-constacyclic codes of length $\de$.  To prove this, we show that all Hamming-weight preserving isomorphisms between their ambient rings which extend some automorphism $\tau$ of $S$ that commutes with $\sigma$ must have degree one, when those rings are not associative.   In the process we determine  isomorphisms between their nonassociative ambient rings, the Petit rings %, prioritizing the ambient rings
$\Sone{a}$, which give rise to skew constacyclic codes.  As a consequence, we  propose new definitions of equivalence and isometry of skew constacyclic codes that exactly capture all  Hamming-weight preserving isomorphisms between the ambient rings of  skew constacyclic codes which extend  $\tau\in \Aut(S)$ that commute with $\sigma$, and lead to tighter classifications.
\end{abstract}

\maketitle

\section{Introduction}

Let $\ring$ be a unital commutative (associative) ring, $a\in \ring$ invertible and $\sigma\in \Aut(S)$. Then a linear code $C\subset S^{\mathbf{n}}$ is called a skew  $(\sigma, a)$-constacyclic code, if for every  codeword $(c_0,\dots,c_{\mathbf{n}-1})\in  C$, the ``shifted'' codeword  $(\sigma(c_{\mathbf{n}-1})a,\sigma(c_0),\dots,\sigma(c_{\mathbf{n}-2}))$ also lies in $C$. More generally, a linear code is a skew polycyclic code, or more precisely called a  skew  $(f,\sigma)$-polycyclic code, if every codeword satisfies a similar but more elaborate shift, this time determined by the coefficients of a pre-chosen reducible skew polynomial $f(t)\in \Stsigma$.

\emph{Petit rings} are nonassociative unital rings that can be considered as the ambient rings of these skew polycyclic codes,  as there is a one-one correspondence between a skew polycyclic code and a principal left ideal in a Petit ring \cite{Pumpluen2017, Pumpluen2025}.
 For instance, skew $(\sigma, a)$-constacyclic codes  correspond to the left principal ideals in their nonassociative ambient Petit ring $\mathbb{S}_{(t^\de-a)}$, whose underlying module is $\Sone{a}$ (notation as in Section~\ref{S:quotient}).  Left multiplication of any element in a left principal ideal of the ambient (albeit potentially nonassociative) Petit ring  by the indeterminate $t$ then creates the shift employed in the definition of a skew polycyclic code.  Thus while skew $(\sigma, a)$-constacyclic codes are often viewed only as $\Stsigma$-submodules of $\Stsigma/\Stsigma (t^\de-a)$, using the perspective of  nonassociative Petit rings significantly enriches the algebraic tools available for classification, as this paper illustrates.

 The \emph{Hamming-weight of a codeword} $c=(c_0,\dots,c_{\mathbf{n}-1})$ is defined as the number of nonzero coefficients of $c$, the \emph{Hamming-weight of a polynomial} $p(t)$ is defined as the number of nonzero coefficients of $p(t)$.
Two skew polycyclic  codes %over a finite field
are called \emph{isometric} if there is a Hamming-weight preserving isomorphism between them that comes from a \emph{monomial} isomorphism between their corresponding Petit rings, that means the isomorphism, denoted here $G_{\tau,\alpha,k}$, maps $t$ to $\alpha t^k$ for some positive integer $k$ and invertible $\alpha\in S$. 
  Such an isomorphism between Petit rings, also called an \emph{isometry}, preserves the Hamming-weight of a polynomial and,
thus gives a correspondence between the sets of associated codes that preserves code rate and Hamming distance as well as  length and dimension  \cite{Pumpluen2025}.

 There has been significant work to understand the isometries of skew polycyclic codes. In \cite{OuazzouHorlemannAydin2025, OuazzouNajmeddineAydin2025}, the authors consider a subclass of such isometries (namely, those monomial isomorphisms that act as the identity on the finite field $\mathbb{F}_q$) to propose the notions of $(\de,\sigma)$-equivalence and $(\de,\sigma)$-isometry between skew constacyclic codes (and of $(\de,\sigma)$-equivalence for skew polycyclic codes) over $\mathbb{F}_q$, generalizing \cite{ChenDinhLiu2014}.
The same subclass of isometries was used, but between the algebras  $\Fqone{a}$ and $\Fqone{b}$, to define equivalences of skew constacyclic codes
in \cite{BoulanouarBatoulBoucher2021}
and later again in  \cite{LobilloMunoz2025}.  However, this latter paper only considered a restricted case: where $\de$ is a multiple of the order of $\sigma$
and $a$ fixed by $\sigma$; in this case the ambient ring is associative.

In this paper we generalize and prove that the notions of $(\de,\sigma)$-equivalence and $(\de,\sigma)$-isometry coincide for all classes of skew constacyclic codes over $S$ whose ambient Petit rings are not associative. %arising from proper nonassociative  Petit rings,
This result was only previously known for when the order $\os$ of $\sigma$  satisfied $\os\geq \de-1$ and when $\ring=K$ was a field \cite{BrownPumpluen2018}.

 Readers interested in codes can simply skip the technical Sections~\ref{S:3}, \ref{S:4} and \ref{S:nonmonomial} that prove the requisite properties of  isometries of Petit rings, and proceed to Section~\ref{S:applications} directly, which applies these  results to skew constacyclic codes.

The motivation for the present work was
two-fold: first, we wanted  to understand what turned out to be an error in
\cite[Theorem 4]{OuazzouNajmeddineAydin2025},  which suggested the existence of isometries of higher degree for all skew constacyclic codes.  Searching for these led us to discover a rather startling result.

To state it succinctly, let $S=K$ be a finite field.  Write $\Konenew{a}$ for the Petit ring  corresponding to $\Kone{a}$.  It is said to be  \emph{proper nonassociative} when it is not an associative ring; by \cite[(15)]{Petit1967} this occurs exactly when $\sigma(a)\neq a$ or $\os\nmid \de$.

\begin{maintheorem}[Corollary~\ref{C:onlyweightone}]
Let $\field$ be a field with $\sigma\in \Aut(\field)$ of finite order $\os$, and let $a_1,a_2\in \field^\times$.  If the Petit rings $\Konenew{a_i}$
are proper nonassociative, then all nonzero Hamming-weight preserving homomorphisms $G$ between them such that $G|_{\field}$ commutes with $\sigma$ are monomial of degree one, that is, satisfy $G(t)=\alpha t$ for some $\alpha \in \field^\times$.
\end{maintheorem}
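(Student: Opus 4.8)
The plan is to pin down $G(t)$ from weight preservation and the twist, and then to extract a contradiction for $k\neq 1$ from a single well-chosen multiplicativity constraint that detects the failure of associativity.

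First I would note that, since $t$ has Hamming weight $1$ and $G$ preserves weight, $G(t)$ has weight one, so $G(t)=\alpha t^k$ for some $\alpha\in\field^\times$ and $0\le k\le\de-1$ (with $\alpha\neq 0$ as $G\neq 0$). Writing $\tau:=G|_{\field}$, multiplicativity forces $G(t^i)=\beta_i\,t^{ik\bmod\de}$, where $\beta_i\in\field$ is the coefficient of the left-nested product $G(t)\cdot(G(t)\cdots(G(t)\cdot 1))$ in $\Kone{a_2}$, and hence $G\big(\sum_i c_it^i\big)=\sum_i\tau(c_i)\beta_i\,t^{ik\bmod\de}$. Testing weight preservation on $t^i+t^{i'}$ shows $i\mapsto ik\bmod\de$ must be injective, so $\gcd(k,\de)=1$ (in particular $k\ge 1$). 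Applying $G$ to $tc=\sigma(c)t$ and using $\tau\sigma=\sigma\tau$ gives $\sigma^{k-1}=\id$ on $\tau(\field)$; as $\sigma$ restricts to an order-$\os$ automorphism of the $\sigma$-stable subfield $\tau(\field)$, this yields $k\equiv 1\pmod{\os}$, so $\sigma^k=\sigma$. Since proper nonassociativity forces $\os\ge 2$, it remains only to exclude $k\ge 1+\os$.

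The heart of the argument uses that $\Kone{a_2}$ is proper nonassociative, i.e.\ $\os\nmid\de$ or $\sigma(a_2)\neq a_2$, in two cases. When $\os\nmid\de$, I would impose multiplicativity on $t^i\cdot(c\,t^j)$ for arbitrary $c\in\field$: in $\Kone{a_1}$ the scalar $c$ is carried past $t^i$ as $\sigma^i(c)$, whereas in $\Kone{a_2}$ it is carried past $G(t^i)$ as $\sigma^{ik\bmod\de}(c)$. Comparing the $c$-dependence of the two sides forces $\sigma^{\,i-(ik\bmod\de)}=\id$ on $\tau(\field)$ for every $i$, i.e.\ $\os\mid\de\lfloor ik/\de\rfloor$ for all $i$. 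For $k\ge 2$ one can pick $i$ with $\lfloor ik/\de\rfloor=1$ (a single wrap-around, which occurs at some $i\le\de-1$), and this forces $\os\mid\de$, a contradiction.

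When instead $\os\mid\de$ (so necessarily $\sigma(a_2)\neq a_2$), the previous constraint is vacuous, and I would instead use $t\cdot t^{j}=t^{j}\cdot t$ in $\Kone{a_1}$, which forces $G(t)\cdot G(t^{j})=G(t^{j})\cdot G(t)$ in $\Kone{a_2}$ for all $j$. Taking the least $j$ for which $G(t^{j})$ has acquired a reduction factor $a_2$ (this happens at some $j\le\de-1$ precisely because $k\ge 2$ and $\gcd(k,\de)=1$), the left factor $G(t)=\alpha t^k$ applies the extra twist $\sigma^k=\sigma$ to that $a_2$-factor while the right factor does not; since $\os\mid\de$ gives $ik\bmod\de\equiv ik\pmod{\os}$, the accompanying powers of $\alpha$ match on both sides, and the surviving discrepancy collapses to $\sigma(a_2)=a_2$, again a contradiction. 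Hence $k\ge 1+\os$ is impossible, so $k=1$ and $G(t)=\alpha t$. The main obstacle is the bookkeeping of the $\sigma$-twists accumulated by the iterated reductions defining the $\beta_i$; the structural point making both cases clean is that $ik\bmod\de$ and $ik$ agree modulo $\os$ exactly when $\os\mid\de$, which is what decides whether the coefficient-twist or the $a_2$-twist comparison exposes the obstruction.
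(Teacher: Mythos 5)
Your proposal is correct, and its two pivotal computations are the same ones the paper uses; what differs is the packaging. The paper proves the Main Theorem through Corollary~\ref{C:onlyweightone}, which invokes the full classification of degree-$k$ monomial homomorphisms (Theorem~\ref{prop:G}); that theorem rests on Lemma~\ref{L:k=1} (your $k\equiv 1\pmod{\os}$ step), on the relation $G(t^s)\tau(d)=\tau(\sigma^s(d))G(t^s)$ forcing $\os\mid\de$ (your case $\os\nmid\de$), and on the forward direction of Theorem~\ref{T:powerassociative}, where comparing $\scz\scz^{r}$ with $\scz^{r}\scz$ at the first wrap-around $r=\lceil\de/k\rceil$ gives $\alpha b=\sigma^{\de}(\alpha)\sigma^{k}(b)$, hence $\sigma(b)=b$ once $\os\mid\de$ and $k\equiv 1\pmod{\os}$ (your case $\os\mid\de$). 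Arguing by contradiction in two cases lets you skip the sufficiency half of the paper's machinery (Lemma~\ref{L:keystep} and the closed formula \eqref{G(t)^s} for all powers) as well as the norm evaluation $G(t^{\de})=G(a)$, since proper nonassociativity of the \emph{target} already kills the case $\os\mid\de$; you also extract $\gcd(k,\de)=1$ from weight preservation on $t^i+t^{i'}$, where the paper only gets this for isomorphisms via an inverse map. The trade-off is that your argument covers the Main Theorem as stated (both algebras proper nonassociative) but not the slightly stronger Corollary~\ref{C:onlyweightone}, which needs only one of $\os\nmid\de$, $a\notin\ring_0$, $b\notin\ring_0$ and therefore must also handle a proper nonassociative source mapping to an associative target, which the paper does via the norm condition. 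One detail you should make explicit in your case $\os\mid\de$: when $(j+1)k\geq 2\de$, both orderings $G(t)G(t^{j})$ and $G(t^{j})G(t)$ acquire the same second reduction factor $\sigma^{(j+1)k-2\de}(a_2)$ (powers of $t$ commute in $\Kone{a_2}$), so it cancels and your conclusion $\sigma(a_2)=a_2$ survives; note that the paper's own proof of Theorem~\ref{T:powerassociative} asserts $(r+1)k<2\de$, which can fail when $k>2\de/3$, and needs the identical one-line patch.
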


In particular, the theorem implies that these rings admit no nonzero Hamming-weight preserving homomorphisms
of the form $G(t)=\alpha t^k$ for any $\alpha\in \field^\times$ and $1<k<\de$.

Our approach is novel, and applies in an  important greater generality, where the field $K$ is replaced by any commutative unital (associative) ring $\ring$ (such as a Galois ring, for example).  The first step is to characterize the power-associative monomials $\alpha t^k \in \mathbb{S}_{(t^\de-b)}$, 
where $\alpha,b \in \ring^\times$. The answer is quite elegant and presented in Section~\ref{S:3}.
\newtheorem*{T:power}{Theorem \ref{T:powerassociative}}
\begin{T:power}
    Suppose $\alpha \in \ring$ is not a zero divisor and $1\leq k < \de$.  The monomial $\alpha t^k \in \mathbb{S}_{(t^\de-b)}$ %$=\Sone{b} $
    is power-associative if and only if
    \begin{equation}\label{E:introassoc}
    \alpha b = \sigma^{\de}(\alpha)\sigma^k(b) \in\mathbb{S}_{t^\de-b}.
    \end{equation}
\end{T:power}

The proof turns on a careful combinatorial argument which occupies most of Section~\ref{S:3}.

As a corollary, we give a classification of all \emph{monomial homomorphisms} $G:\Stsigma\to \mathbb{S}_{t^\de-b} $ %\Sone{b}$
of the form $G|_S=\tau$ and $G(t)=\alpha t^k$ for some $\tau\in \Aut(S)$ commuting with $\sigma$, $\alpha\in S^\times$ and $1\leq k <\de$ 
(Corollary~\ref{C:StsigmatoSb}).
In particular, if such a homomorphism with $k>1$ exists,
then this entails $\os|\de$, $k\equiv 1 \mod \os$ and thus $\sigma(b)=b$, whence $\mathbb{S}_{t^\de-b}$ 
is associative.

 We then leverage this to prove in Corollary~\ref{C:onlyweightone} that if $\mathbb{S}_{t^\de-b}$ 
 is proper nonassociative, then in fact there are no nonzero monomial homomorphisms $G:\mathbb{S}_{f}\to \mathbb{S}_{t^\de-b} 
 $ of degree other than one. Thus  $(\de,\sigma)$-isometry and $(\de,\sigma)$-equivalence as introduced in \cite{OuazzouNajmeddineAydin2025} coincide for  classes of skew constacyclic codes that are arise from proper nonassociative Petit ambient rings.

We then explain how our results lead to the failure of \cite[Theorem 4]{OuazzouNajmeddineAydin2025}  in Section~\ref{S:applications}, providing explicit counterexamples.  We also recalculate the number of $(\de,\sigma)$-equivalence classes of skew constacyclic codes with ambient algebras $\Konenew{a}$, for $\field$  a finite field, in Theorem~\ref{c:Ouazzoufinite}.

In Section~\ref{S:applications} we go one to define more general notions of equivalence and isometry that perfectly capture the set of isomorphisms that will preserve Hamming distance.  This offers a tighter classification of equivalence and isometry classes of codes (see Definition~\ref{D:isometric}).   The advantage of this perspective is clearly demonstrated in  Example \ref{e:important}.
For associative algebras, we can prove a variation of \cite[Theorem 4]{OuazzouNajmeddineAydin2025}  (Theorem \ref{t:Ouazzou}).
  In particular, the existence of a monomial isomorphism of degree $k>1$ between $\mathbb{S}_{t^\de-a}$ and $\mathbb{S}_{t^\de-b}$ 
  in the associative case gives rise to monomial isomorphisms of degree one from $\mathbb{S}_{t^\de-a}$ to $\mathbb{S}_{t^\de-b^k}$.

  Section~\ref{S:nonmonomial} addresses our second motivation for this work which is to eventually extend
our parametrization of proper nonassociative Petit division algebras begun in \cite{NevinsPumpluen2025}. In that paper, we solved the case that $\de=\os$ is prime and $f(t)=t^{\de}-a$.
 These division algebras, in turn, are in one one correspondence with \emph{maximum rank distance codes}.   Any parametrization will thus yield parametrizations of the corresponding codes. For $\de=\os$, the division algebras $\Kone{a}$ have just been used in a variation of Learning with Errors, a lattice based approach to postquantum cryptography  \cite{MendelsohnLing2025}.

Crucially, to extend the result in \cite{NevinsPumpluen2025} to any  $\de$ and $\os$ one needs to understand all possible isomorphisms between two division algebras $\Konenew{a}$ and $\Konenew{b}$, not only the monomial ones.
When $\ring$ is an integral domain, we prove that  polynomial, \emph{i.e.},  non monomial, morphisms between proper nonassociative algebras of the form $\mathbb{S}_{t^\de-a}$ 
do not occur, subject to a technical hypothesis we denote \eqref{stareq}.
Consequently, it follows from  Theorem  \ref{P:polyndivm} that for integral domains $\ring$, $a,b\in \ring^\times$, and $\os \nmid \de$, every nonzero homomorphism
 $G:\mathbb{S}_{t^\de-a}\to \mathbb{S}_{t^\de-b} 
 $  whose restriction to $\ring$ is given by some automorphism $\tau$ commuting with $\sigma$, must be monomial of degree one (Corollary \ref{c:new}).

Skew constacyclic and skew polycyclic codes over finite rings and mixed alphabets are currently among the most studied kinds of linear codes.
 %for their applications to constructions of quantum error-correcting codes (such as via Gray maps and  the CSS construction \cite{BagDinhUpadhyayBandiYamaka2020, VermaPrakashSinghIslam2023}), with many papers appearing on this topic in recent years, including for example \cite{BagPanario2025, PatelIslamPrakash2022, YadavSinghPrakash2023,  PrakashPatelIslam2025,RaiSinghGupta2024}.
Our current work serves to improve the de-duplication of codes by classifying \emph{all} isometries between them.
%, which will in turn make the search for good quantum error-correcting codes more efficient.

%In particular, when considering the CSS construction for quantum error-correcting codes, one needs to  concentrate on codes $C$ such that the dual code $C^\perp$ satisfies $C^\perp\subset C$ as explained in \cite{DinhBagUpadhyayBandiTansuchat2021}.
%Thus an important future extension of our work is to consider when our isometries preserve some suitable notion of duality for these codes.

We point out that our results for isometries between the ambient Petit algebras also have consequences for skew polycyclic codes over finite fields with the rank metric, where we need $\alpha\in K'$ for a suitable subfield of $K$  we consider the rank metric over. In particular,  they will  help understand the different  MRD codes obtained from these \cite[Proposition 3.13]{OuazzouHorlemannAydin2025}.

\section{Preliminaries}\label{S:prel}

Let $\ring$ be a commutative, associative unital ring, $S^\times$ its set of invertible elements, and suppose $\sigma\in \Aut(\ring)$ has finite order $\os$.  Important examples in applications to coding theory are $\ring=\field$ a finite field, $\ring$  a Galois ring or more generally $\ring$ a chain ring (also called a
finite local ring); see for example \cite[\S5.1]{Pumpluen2017}).   In all these examples, every non-invertible element is a zero divisor.

 Write $\ring_0=\{s\in \ring: \sigma(s)=s\}$ for the subring fixed by $\sigma$.
If $\tau\in \Aut(\ring)$ is an automorphism commuting with $\sigma$ then we write $\fixedts=\{s\in S_0: \tau(s)=s\}$ for the common fixed subring under $\tau$ and $\sigma$.

The ring $R=\Stsigma$ is a unital associative noncommutative skew polynomial ring, with addition the usual addition of polynomials, and with multiplication defined by the relation
$$
t^ja=\sigma^j(a)t^j
$$
for all $a\in \ring$ and all $j\in \N$.

Let  $f(t)=t^{\de}-f_0(t)\in \ring[t;\sigma]$ be a monic polynomial of degree $\de\geq 1$.
For all $g(t)\in  \ring[t;\sigma]$ of degree $\ell\geq \de$, there exist uniquely
determined $r(t),q(t)\in  \Stsigma$ with
 ${\rm deg}(r)<{\rm deg}(f)$, such that
$g(t)=q(t)f(t)+r(t)$, e.g. see \cite[Proposition 1]{Pumpluen2017}.
Since the remainders are uniquely determined, we may identify the additive left cosets of the left ideal $\Stsigma f(t)$ with the set
$$
\Sf = \{ \sum_{i=0}^{\de-1} b_it^i : b_i\in \ring\} =: \Stsigma_\de.
$$
 This is a familiar module over $\Stsigma$, but \emph{a priori} it is not clear that there is always a well-defined ring multiplication on this set of representatives $\Stsigma_\de$ 
 as well.

 We call $f(t)\in R$  \emph{two-sided}, if  $\Stsigma f(t)$ is a two-sided ideal. It is well-known that for two-sided skew polynomials $f(t)$ (\emph{e.g.}, $f(t)$ that lie in the center of $\Stsigma$), we can factor out the two-sided ideal $\Stsigma f(t)$  in the associative ring  $\Stsigma $ to obtain the associative \emph{quotient ring } $ \Stsigma/\Stsigma f(t)$ whose multiplication of any two polynomials of degree less than $\de$ is defined by
$$
g\circ h=gh \,\,{\rm mod}_r f,
$$
where $g \,{\rm mod}_r f$ denotes the remainder of  \emph{right} division of $g$ by $f$.

 \subsection{Petit rings are not quotient rings}\label{S:quotient}
Even when $f(t)\in \Stsigma $ is not two-sided, we define an operation on $\Stsigma_\de$ by the rule %the operation
$$
g\circ h=gh \,\,{\rm mod}_r f
$$
and this endows $\Stsigma_\de$ with the structure of a unital nonassociative ring. We call the unital ring $(\Stsigma_\de,\circ)$
 defined in this way a \emph{Petit ring} and note that it can also be viewed as an algebra over $S$ (or  over any subring of $S$). Throughout this paper we write  $\mathbb{S}_f=(\Stsigma_\de,\circ)$ for this nonassociative ring following current literature, but often also simply $\Sf$ if we want to make clear which $S$ and which $\sigma$ are involved. It is then clear from the context that we talk about nonassociative Petit rings.  When $S=K$ is a field we may write $\Kfnew$ for $\Kf$.

This multiplication is not associative in general. 
We know that $\Sfnew=\Sf$
is an associative ring exactly when $\Stsigma f(t)$ is a two-sided ideal \cite{Petit1967}.
 For example, this happens when $\ring=\field$, $f(t)=t^{\de}-\sum_{i=0}^{\de-1}a_it^i\in \ring_0[t]$ and $\sigma^{\de-i}=\id$ for all those $i$ where $a_i\not=0$ (\cite[(15)]{Petit1967}.
 In that setting, the two-sided elements $f(t)$ are all of the form $ac(t)t^\ell$, where $a\in K^\times$, where $c(t)$ lies in the center of $K[t;\sigma]$ and $\ell\geq 0$ is an integer
 \cite[Theorem 1.1.22]{J96}.

The definition of this nonassociative ring structure on the representatives of $\Sf$ can be seen as the canonical generalization of the classical quotient ring.

In particular,  for all $0\leq i,j<\de$ we have
\begin{equation}\label{Equation:tde+1simplified}
t^it^j = \begin{cases}
    t^{i+j} & \text{if $i+j<\de$;}\\
    t^{i+j-\de}f_0(t) & \text{if $i+j\geq \de$.}
\end{cases}
\end{equation}

\subsection{The element $t$ is not always power associative}\label{S:associativity}

We are particularly interested in the special case that $f(t) = t^{\de}-a$ for some $a\in \ring^\times$.
One must take care to simplify products only with \emph{right} division by $t^{\de}-a$.
  A key observation from \eqref{Equation:tde+1simplified} is that in $\mathbb{S}_{t^\de-a}$ 
  we have
$$
t^{\de} t = a t \quad \text{whereas} \quad t t^{\de} = t a = \sigma(a) t.
$$
This example shows that when $a\not\in S_0$, we have
$$t^{\de} t\not= t t^{\de}$$
even though $t^it^j =  t^{i+j} $ for $i+j<\de$, by \eqref{Equation:tde+1simplified} %\eqref{Equation:multiplicationSf}
(with $f_0(t)=a$).  In general, the element $t$ is power-associative in the ring $\mathbb{S}_f$  (which is equivalent to saying all powers of $t$ commute), if and only if $t^{\de} t= t t^{\de}$, which by \eqref{Equation:tde+1simplified}  is equivalent to $f(t)t\in \Stsigma f(t)$; this, in turn, is equivalent to $t$ lying in the \emph{right nucleus} of $\mathbb{S}_f$
  \cite[(5), p. 13-06]{Petit1967}.

Our next concern is to consider the class of homomorphisms between Petit rings that will induce Hamming-weight preserving maps between the corresponding $(\de,\sigma)$-skew polycyclic codes (Section~\ref{SectionSkewpolycycliccodes}).

\subsection{Monomial homomorphisms between Petit rings}
We require some additional notation.   For any $i\in \mathbb{N}$, $\sigma\in \Aut(\ring)$ and $\beta\in \ring$, write
$$
N_i^{\sigma}(\beta) = \prod_{j=0}^{i-1}\sigma^j(\beta).
$$
Note that $\beta \sigma(N_i^\sigma(\beta))=N_{i+1}^\sigma(\beta)$ and thus
\begin{equation}\label{E:Normrelation}
N_{i+j}^\sigma(\beta)= N_{i}^\sigma(\beta) \cdot \sigma^{i}(N_j^\sigma(\beta))
\end{equation}
for all $i,j\geq 0$ (as also observed in \cite[Proposition 2.1]{Cherchem2016} for the case of  finite fields).
If $\ring=\field$ is a field, $\field/\field_0$ a cyclic Galois extension, and $\Gal(\field/\field_0)=\langle \sigma \rangle$, where $\sigma$ has order $\os$, then  $N_{\os}^\sigma$ is simply the norm map $N_{\field/\field_0}$ corresponding to $\field/\field_0$.  Even when $S$ is not a field, under these hypotheses $N_{\os}^\sigma(\beta)\in \ring_0$ for all $\beta\in \ring$ so we distinguish this case with the notation
$N_{\os}^\sigma=N_{\ring/\ring_0}$.

Let us define the class of homomorphisms between Petit rings that we will show in Section~\ref{SectionSkewpolycycliccodes} correspond to Hamming-weight preserving homomorphisms of skew polycyclic codes.

\begin{definition}\label{D:Gtaualpha}
    Let $f,h\in \Stsigma$ be monic polynomials of degree $\de$.  Suppose $\tau\in \Aut(\ring)$, $\alpha \in \ring$ is nonzero and $k\in \mathbb{N}$.  If there exists a ring homomorphism 
    $$G:\mathbb{S}_f\to \mathbb{S}_h$$     that
      is  defined via $G|_{S}=\tau$ and $G(t)=\alpha t^k$,    that is,
\begin{equation}\label{E:formulaG(t)2}
G\left( \sum_{i=0}^{\de-1}a_it^i \right) = \sum_{i=0}^{\de-1}\tau(a_i) (\alpha t^k)^i.
\end{equation}
     then we call it a \emph{monomial homomorphism  of degree $k$} and write $G=G_{\tau,\alpha,k}$.  When $k=1$, we write $G_{\tau,\alpha}$ in place of $G_{\tau,\alpha,1}$, in which case we have simply
$$    G_{\tau,\alpha}\left(\sum_{i=0}^{\de-1}a_it^i\right) = \sum_{i=0}^{\de-1}\tau(a_i)N_i^\sigma(\alpha)t^i.    $$
\end{definition}

By abuse of notation, we will later denote ring homomorphism $G: R \to R$ or $G: R\to \mathbb{S}_h$ that are also  defined via $G|_{S}=\tau$ and $G(t)=\alpha t^k$ by $G_{\tau,\alpha,k}$ as well.

As we shall discuss in Section~\ref{S:3}, there is significant subtlety in the well-definedness of the term $(\alpha t^k)^i$ in the nonassociative case.

In this paper, we expand the set of cases for which it is known that \emph{all} isomorphisms of Petit rings are monomial  of type $G_{\tau,\alpha}$, when  $\tau \in {\rm Aut}(S)$ commutes with $\sigma$.  This was previously known in several cases, but so far only for isomorphisms, \emph{e.g.} see \cite[Theorems 28, 29]{BrownPumpluen2018}. The monomial isomorphisms $G_{\tau,\alpha,k}$ can be used to classify skew polycyclic codes, in particular skew constacyclic codes, which is why they are of particular interest (cf. Section \ref{subsec:iso}).

The following results were originally proved only for $\ring_0$-isomorphisms for the case that $\ring$ is a field in \cite[Theorems 28, 29]{BrownPumpluen2018}, and generalized to the present setting in  \cite{Pumpluen2025}.
To state them, set
 $$
    f(t) =t^{\de}- \sum_{i=0}^{\de-1} a_it^i, \quad h(t) = t^{\de}-\sum_{i=0}^{\de-1}b_it^i.
    $$

\begin{theorem}\label{T:general_isomorphism_theorem2} \cite[Theorem 4.1]{Pumpluen2025}
Let $\alpha\in S^\times$ and assume that $\tau \in {\rm Aut}(S)$ commutes with $\sigma$. Then
   $$ G_{\tau,\alpha}:  \Sfnew \to \Shnew
    $$
    is an  isomorphism  between nonassociative unital rings that restricts to the identity on  $S_0^\tau$, 
    if and only if
\begin{equation}  \label{equ:necII}
\tau(a_i) =
N_{\de-i}^{\sigma}(\sigma^{i} (\alpha))b_i
\end{equation}
 for all $i \in\{ 0, \ldots, \de-1\}$.
\end{theorem}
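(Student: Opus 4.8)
The plan is to exploit the fact that $G:=G_{\tau,\alpha}$ is built to be additive and $\ring$-semilinear, so that being an isomorphism splits into a cheap bijectivity check and the genuine content, multiplicativity. First I would record that $G$ is bijective whenever $\alpha\in\ring^\times$ and $\tau\in\Aut(\ring)$: by \eqref{E:formulaG(t)2} it sends the basis element $t^i$ to $N_i^\sigma(\alpha)t^i$ with coefficients twisted by $\tau$, and since each $N_i^\sigma(\alpha)$ is a unit and $\tau$ is bijective, the equation $G(\sum a_it^i)=\sum c_it^i$ has a unique solution in the $a_i$. Thus $G$ is a $\fixedts$-algebra isomorphism if and only if it is a ring homomorphism, and everything reduces to comparing $G(g\circ_f k)$ with $G(g)\circ_h G(k)$, where I write $\circ_f,\circ_h$ for the products of $\Sf,\Sh$ and $\pi_f,\pi_h$ for the reduction maps $\mathrm{mod}_r f$, $\mathrm{mod}_r h$.

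Second, I would reduce multiplicativity to a family of identities on powers of $t$. Writing $g=\sum_i a_it^i$, $k=\sum_j c_jt^j$ and using that both products are bi-additive with $(a t^i)\circ(ct^j)=a\,\sigma^i(c)\,(t^i\circ t^j)$, together with the semilinearity $G(sx)=\tau(s)G(x)$ and the hypothesis $\tau\sigma=\sigma\tau$, the comparison collapses term-by-term to the requirement
\begin{equation*}
G(t^i\circ_f t^j)=N_i^\sigma(\alpha)\,\sigma^i(N_j^\sigma(\alpha))\,(t^i\circ_h t^j)=N_{i+j}^\sigma(\alpha)\,(t^i\circ_h t^j)\qquad(0\le i,j<\de),
\end{equation*}
where the second equality is \eqref{E:Normrelation}; call this condition $(\star)$. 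Since $\tau$ is surjective the common coefficient $\tau(a_i)\sigma^i(\tau(c_j))$ may be isolated, so $G$ is a homomorphism exactly when $(\star)$ holds for all $i,j$.

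Third, the key structural observation is that $t^i\circ_f t^j=\pi_f(t^{i+j})$ depends only on $n:=i+j$ (there is no $\sigma$-twist between pure powers of $t$), and that reduction intertwines with left multiplication by $t$, giving the recursion $\pi_f(t^{n+1})=t\circ_f\pi_f(t^n)$ because $t\cdot qf$ is right-divisible by $f$. I would therefore replace $(\star)$ by the single-index statement $G(\pi_f(t^n))=N_n^\sigma(\alpha)\pi_h(t^n)$ for all $n\ge0$. For $n<\de$ this is immediate, and the case $n=\de$ is where \eqref{equ:necII} is born: there $\pi_f(t^\de)=f_0(t)=\sum_l a_lt^l$ and $\pi_h(t^\de)=h_0(t)$, so the identity reads $\sum_l\tau(a_l)N_l^\sigma(\alpha)t^l=N_\de^\sigma(\alpha)\sum_l b_lt^l$; comparing the coefficient of $t^l$ and simplifying $N_\de^\sigma(\alpha)N_l^\sigma(\alpha)^{-1}=\sigma^l(N_{\de-l}^\sigma(\alpha))=N_{\de-l}^\sigma(\sigma^l(\alpha))$ via \eqref{E:Normrelation} yields precisely $\tau(a_l)=N_{\de-l}^\sigma(\sigma^l(\alpha))b_l$, i.e.\ \eqref{equ:necII}. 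This both proves the ``only if'' (a homomorphism satisfies $(\star)$ at $n=\de$) and supplies the base of the induction for the converse.

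Finally, assuming \eqref{equ:necII}, I would upgrade the $n=\de$ identity to the intertwining property $G(t\circ_f p)=(\alpha t)\circ_h G(p)$ for every $p\in\Sf$ (checking it on $p=t^j$: trivial for $j<\de-1$, and exactly the $n=\de$ computation for $j=\de-1$), and then induct on $n$ via the recursion: $G(\pi_f(t^{n+1}))=G(t\circ_f\pi_f(t^n))=(\alpha t)\circ_h G(\pi_f(t^n))=(\alpha t)\circ_h N_n^\sigma(\alpha)\pi_h(t^n)=N_{n+1}^\sigma(\alpha)\pi_h(t^{n+1})$, again using \eqref{E:Normrelation}. This establishes $(\star)$ for all $i,j$, hence multiplicativity, and with the bijectivity of the first step completes the equivalence. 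The main obstacle is precisely the nonassociative ``wrap-around'' in the cases $i+j\ge\de$: one cannot reassociate a product $t^i\circ_f(t^j\circ_f t^l)$ to peel powers of $t$ off the left, so the decisive move is to route everything through the fully reduced powers $\pi_f(t^n)$ and their clean recursion, where a single application of \eqref{equ:necII} at the top degree $n=\de$ propagates to all higher powers.
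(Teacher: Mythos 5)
Your proof is correct. Note, however, that the paper does not actually prove Theorem~\ref{T:general_isomorphism_theorem2}: it imports it from \cite[Theorems 28, 29]{BrownPumpluen2018} and \cite{Pumpluen2025}, and the only argument of this kind carried out in the paper is the proof of Proposition~\ref{L:Gtaualpha}, i.e.\ the special case $f(t)=t^{\de}-a$, $h(t)=t^{\de}-b$. Measured against that, your skeleton is the same (bijectivity is automatic from $\alpha\in S^\times$ and $\tau\in\Aut(S)$; additivity, $\tau$-semilinearity, $\tau\sigma=\sigma\tau$ and the norm relation \eqref{E:Normrelation} reduce multiplicativity to products of pure powers of $t$), but you diverge exactly where the general case gets hard: the paper's wrap-around computation uses the binomial shape of $f$ to write $t^st^\ell=\sigma^{s+\ell-\de}(a)t^{s+\ell-\de}$ in a single step, which has no analogue for a general $f_0(t)$, since $\pi_f(t^n)$ for $n>\de$ is no longer a monomial. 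Your substitute --- the recursion $\pi_f(t^{n+1})=t\circ_f\pi_f(t^n)$, proved in the associative ring $\Stsigma$ before reducing and hence immune to the nonassociativity of $\Sf$, together with the observation that the single identity at $n=\de$ is precisely \eqref{equ:necII} (via $N_{\de}^{\sigma}(\alpha)=N_l^{\sigma}(\alpha)\,N_{\de-l}^{\sigma}(\sigma^l(\alpha))$) and then propagates by induction to all $n\le 2\de-2$ --- is a clean, self-contained mechanism that genuinely proves the general statement rather than only the constacyclic case. Two small points you should make explicit: the ``only if'' direction needs $\de\ge 2$ so that $n=\de$ is realized as $i+j$ with $0\le i,j<\de$ (for $\de=1$ the statement is degenerate), and the final equality of your inductive step also invokes the $h$-side recursion $t\circ_h\pi_h(t^n)=\pi_h(t^{n+1})$. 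Neither affects correctness.
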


When $S=K$ is a field, $\Kfnew$ and $\Khnew$ are proper nonassociative Petit rings, and $\Aut(\field)$ is abelian, we can say even more.

\begin{theorem}\label{T:BPonlyonesareGtaualpha} \cite[Theorem 4.2]{Pumpluen2025}
  Let $\os \geq \de-1$ and assume that $f$ and $h$  do not generate two-sided ideals in $\Ktsigma$.  Suppose $\Aut(\field)$ is abelian.  For any subfield $F\subset \field_0$, the \emph{only} ring isomorphisms
  $$
  G:\Kfnew\to \Khnew
  $$
  that restrict to the identity on $F$
  are of the form $G=G_{\tau,\alpha}$ such that $\tau\in \Aut(\field)$, $\field_0^\tau\subset F$, and $\alpha \in \field^\times$ satisfying Theorem~\ref{T:general_isomorphism_theorem2}.
\end{theorem}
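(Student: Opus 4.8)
The plan is to show that any $F$-algebra isomorphism $G:\Kf\to\Kh$ must (i) carry the copy of $\field$ inside $\Kf$ onto the copy inside $\Kh$, (ii) restrict there to an automorphism $\tau$ commuting with $\sigma$, and (iii) send $t$ to a scalar multiple of $t$; the conditions of Theorem~\ref{T:general_isomorphism_theorem2} then follow formally. For step (i) I would invoke the fact that the nucleus of a nonassociative algebra is preserved by any isomorphism, together with the structural computation of the nuclei of Petit algebras: when $\Kf$ is proper nonassociative (equivalently, $f$ generates no two-sided ideal), its left and middle nuclei collapse to the base field, so $\mathrm{Nuc}(\Kf)=\field$. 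Hence $G(\field)=G(\mathrm{Nuc}(\Kf))=\mathrm{Nuc}(\Kh)=\field$, and $\tau:=G|_\field$ is a bijective ring endomorphism of $\field$ fixing $F$ pointwise, i.e.\ $\tau\in\Aut(\field)$ with $F\subseteq\field_0^\tau$. Because $\Aut(\field)$ is abelian by hypothesis, $\tau$ commutes with $\sigma$.

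With $G(\field)=\field$ established, write $x:=G(t)=\sum_{i=0}^{\de-1}x_it^i\in\Kh$, which is nonzero by injectivity. Applying $G$ to the defining relation $tc=\sigma(c)t$, valid in $\Kf$ for every $c\in\field$, gives $x\tau(c)=\tau(\sigma(c))x$; substituting $d=\tau(c)$ and using $\tau\sigma=\sigma\tau$ yields $xd=\sigma(d)x$ for all $d\in\field$. Both products stay in degree ${<}\,\de$, so no reduction modulo $h$ intervenes, and expanding in the basis $\{t^i\}$ and comparing coefficients gives $x_i(\sigma^i(d)-\sigma(d))=0$ for all $d$ and all $i$. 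Thus $x_i\neq 0$ forces $\sigma^{i-1}=\id$, that is $\os\mid(i-1)$. Here the hypothesis $\os\geq\de-1$ is decisive: proper nonassociativity forces $\os\geq 2$, so the case $i=0$ (requiring $\os\mid -1$) is impossible, while for $1\leq i\leq\de-1$ we have $0\leq i-1\leq\de-2<\os$, so divisibility forces $i-1=0$. Hence $x_i=0$ for all $i\neq 1$ and $G(t)=\alpha t$ with $\alpha=x_1\in\field^\times$, so that $G=G_{\tau,\alpha}$.

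Finally, since $G=G_{\tau,\alpha}$ is by assumption an isomorphism and $\tau$ commutes with $\sigma$, Theorem~\ref{T:general_isomorphism_theorem2} applies directly and yields the coefficient relations \eqref{equ:necII} between $f$ and $h$; the field-of-linearity bookkeeping ($\field_0^\tau\subseteq F$, the reverse inclusion being the $F$-linearity of $G$) is routine. I expect the main obstacle to be step (i): securing the intrinsic, isomorphism-invariant characterization of $\field$ inside $\Kf$ as its nucleus, and verifying that it is precisely proper nonassociativity — the assumption that $f$ and $h$ generate no two-sided ideals — that prevents the nucleus from growing beyond $\field$. The degree-pinning argument of the second paragraph is short and self-contained, but it rests entirely on $G(\field)=\field$, so the structural work sits in establishing $\mathrm{Nuc}(\Kf)=\field$.
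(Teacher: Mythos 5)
Your proof is essentially the standard argument for this result: the paper itself gives no proof of Theorem~\ref{T:BPonlyonesareGtaualpha} (it is imported from \cite[Theorems 28, 29]{BrownPumpluen2018} and \cite{Pumpluen2025}), and the known proof proceeds exactly along your lines --- first pin $G(\field)=\field$ by nucleus-invariance, then pin $\deg G(t)=1$ from the relation $tc=\sigma(c)t$. Your second paragraph is correct and complete: the exclusion of $i=0$ because proper nonassociativity forces $\os\geq 2$, and of $2\leq i\leq \de-1$ because $\de-2<\os$, is precisely where the hypotheses $\os\geq\de-1$ and the non-two-sidedness of $\Ktsigma f$ enter, and abelianness of $\Aut(\field)$ is used exactly where you use it.

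There is, however, one genuine error in step (i) as written, though it is easily repaired. From ``the left and middle nuclei collapse to $\field$'' you conclude $\mathrm{Nuc}(\Kf)=\field$. That inference is false: the full nucleus is $\mathrm{Nuc}_l\cap\mathrm{Nuc}_m\cap\mathrm{Nuc}_r$, and the right nucleus of a proper nonassociative Petit algebra is the eigenring $\{g: fg\in \Ktsigma f\}$, which need not contain $\field$. For instance, for $f=t^{\de}-a$ one computes $\field\cap \mathrm{Nuc}_r(\Kf)=\{c\in\field:\sigma^{\de}(c)=c\}$, the fixed field of $\sigma^{\gcd(\de,\os)}$, which is a \emph{proper} subfield of $\field$ whenever $\os\nmid\de$; so $\mathrm{Nuc}(\Kf)$ can be strictly smaller than $\field$, and invariance of the full nucleus only gives $G(\mathrm{Nuc}(\Kf))=\mathrm{Nuc}(\Kh)$, which does not identify $G(\field)$ with $\field$. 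The repair is to use the invariant you yourself named: any isomorphism of nonassociative rings maps associators to associators, hence preserves the left nucleus (and the middle nucleus) \emph{separately}, and for proper nonassociative $\Kf$, $\Kh$ these equal $\field$ exactly; so $G(\field)=G(\mathrm{Nuc}_l(\Kf))=\mathrm{Nuc}_l(\Kh)=\field$. A second, smaller point: your parenthetical claim that $\field_0^\tau\subseteq F$ is ``routine'' is not justified, and indeed that inclusion is false in general (take $\tau=\id$ and $F$ the prime field, with $\field_0$ larger). What your argument actually yields is the opposite inclusion $F\subseteq\field_0^\tau$, from $F$-linearity of $G$ together with $F\subseteq\field_0$; the inclusion as printed in the theorem appears to be an infelicity of the statement itself, so you should derive $F\subseteq\field_0^\tau$ and not assert the other direction.
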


In the following, while our results are valid for all $\os, \de$, their novelty  is in the open case that $\os < \de -1$, and that we do not require that $\ring$ is a field.

 \subsection{Skew polycyclic and constacyclic codes} \label{SectionSkewpolycycliccodes}

 A \emph{linear code  of length $\mathbf{n}$ over $S$} is a  free submodule of the $S$-module $S^\mathbf{n}$.
Let $$f(t)=t^{\mathbf{n}}-\sum_{i=0}^{\mathbf{n}-1}a_it^i\in \Stsigma$$
be a reducible monic polynomial of degree $\mathbf{n}>1$.
 Then a linear code $C\subset S^{\mathbf{n}}$ is called a   \emph{skew  $(f,\sigma)$-polycyclic code}, or \emph{skew polycyclic code} for short, if for each   $(c_0,c_1,\ldots, c_{\mathbf{n}-1})\in C$,  also
$$
\left( 0, \sigma(c_0),\sigma(c_1),\ldots, \sigma(c_{\mathbf{n}-2})\right) + \sigma(c_{\mathbf{n}-1}) (a_0,a_1,\ldots, a_{\mathbf{n}-1}) \in C.
$$
Note that
$$
\left( 0, \sigma(c_0),\sigma(c_1),\ldots, \sigma(c_{\mathbf{n}-2})\right) +  (\sigma(c_{\mathbf{n}-1}) a_0,\sigma(c_{\mathbf{n}-1})a_1,\ldots, \sigma(c_{\mathbf{n}-1})a_{\mathbf{n}-1})$$
$$
  = ( \sigma(c_{\mathbf{n}-1}) a_0, \sigma(c_0)+\sigma(c_{\mathbf{n}-1})a_1,\sigma(c_1)+\sigma(c_{\mathbf{n}-2})a_2,\ldots, \sigma(c_{\mathbf{n}-2})+\sigma(c_{\mathbf{n}-1})a_{\mathbf{n}-1}).
$$
 A  skew  $(f,\sigma)$-polycyclic code $C\subset S^{\mathbf{n}}$ is called a
 \emph{skew  $(\sigma, a)$-constacyclic code}, if  $f(t)=t^{\mathbf{n}}-a$ for a non-zero $a\in S$, that is
   $$(c_0,\dots,c_{\mathbf{n}-1})\in  C\Rightarrow (\sigma(c_{\mathbf{n}-1})a,\sigma(c_0),\dots,\sigma(c_{\mathbf{n}-2}))\in  C.$$
   As proven in \cite{Pumpluen2025}, the bijective map $\Phi:S^{\mathbf{n}}\rightarrow \Stsigma_{\de}$,
$$(c_0,c_1,\dots,c_{\mathbf{n}-1})\mapsto \sum_{i=0}^{\mathbf{n}-1}c_it^i,$$ sends a skew $(f,\sigma)$-polycylic code $C$  of length $\mathbf{n}$ onto a principal  left ideal $C(t)=\Phi(C)$ of the nonassociative Petit ring $\mathbb{S}_f=\Sf$, the \emph{ambient ring} of the code.  That is, for a skew $(f,\sigma)$-polycylic code $C$  of length $\mathbf{n}$ over $S$ we denote by $C(t)$ the set of skew-polynomials
 $c(t)=\sum_{i=0}^{\mathbf{n}-1}c_it^i$ associated to the codewords $(c_0,\dots,c_{\mathbf{n}-1})\in C$, which is a left principal ideal in  $\mathbb{S}_f$. %$\Sf$.
  Note that the left multiplication by $t$ with an element of a left principal ideal in the ambient ring $\mathbb{S}_f$  %$\Sf$
  then produces exactly the shift that is used in the definition of a skew $(f,\sigma)$-polycyclic code.

It follows that any monomial homomorphism $G_{\alpha,\tau,k}$ of the ambient ring thus induces a Hamming-weight preserving isomorphism between codes.

\section{Monomial homomorphisms from \texorpdfstring{$\Stsigma$}{STsigma} to a Petit ring} \label{S:3}

Homomorphisms between skew polynomial rings with $\ring$ a division algebra were studied in \cite{LamLeroy1992}. Surjective homomorphism between $\Stsigma$ for general rings $\ring$ were considered in \cite{Rimmer1978}.

Let $\ring$ be a commutative unital ring.
We begin with a general restriction on monomial homomorphisms $G:\Stsigma\to\Stsigma$.

\begin{lemma}\label{L:Ktsigmatoitself}
    Let $\tau\in \Aut(\ring)$, $k\in \N$,  and $\alpha\in \ring$ such that $\alpha$ is not a zero divisor.  Then the map generated by $G(c)=\tau(c)$ for all $c\in \ring$ and $G(t)=\alpha t^k$ defines a homomorphism of rings from $\Stsigma$ to itself if any only if $\tau\sigma\tau^{-1}=\sigma^k$.  In particular, if $\tau$ and $\sigma$ commute then this occurs if and only if $k\equiv 1 \mod \os$.

    Moreover, if $\tau$ and $\sigma$ commute, then $G$ is an isomorphism if and only if $k=1$ and $\alpha\in \ring^\times$.
\end{lemma}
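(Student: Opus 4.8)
The plan is to exploit the single defining relation $ta=\sigma(a)t$ of $\Stsigma$, which is the only relation that constrains $G$, and to show that it forces one compatibility condition on $\tau$, $\alpha$ and $k$ that is both necessary and sufficient.

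First I would establish necessity. Since $ta=\sigma(a)t$ holds in $\Stsigma$, any ring homomorphism $G$ with $G|_{\ring}=\tau$ and $G(t)=\alpha t^k$ must satisfy $G(t)G(a)=G(\sigma(a))G(t)$ for every $a\in\ring$. Expanding the left side and applying $t^k c=\sigma^k(c)t^k$ gives $G(t)G(a)=\alpha t^k\tau(a)=\alpha\sigma^k(\tau(a))t^k$, while the right side is $\tau(\sigma(a))\alpha t^k=\alpha\tau(\sigma(a))t^k$, using commutativity of $\ring$ to move $\alpha$ past $\tau(\sigma(a))$. Comparing coefficients of $t^k$ yields $\alpha\sigma^k(\tau(a))=\alpha\tau(\sigma(a))$, and since $\alpha$ is not a zero divisor I may cancel it to obtain $\sigma^k\tau(a)=\tau\sigma(a)$. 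As this holds for all $a\in\ring$, it is precisely the operator identity $\sigma^k\tau=\tau\sigma$, equivalently $\tau\sigma\tau^{-1}=\sigma^k$.

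For sufficiency I would invoke the universal property of the skew polynomial ring: a ring homomorphism $\psi\colon\ring\to T$ together with an element $u\in T$ satisfying $u\,\psi(a)=\psi(\sigma(a))\,u$ for all $a$ extends uniquely to a ring homomorphism $\Stsigma\to T$ sending $t\mapsto u$. Taking $T=\Stsigma$, $\psi=\tau$ and $u=\alpha t^k$, the required compatibility $\alpha t^k\,\tau(a)=\tau(\sigma(a))\,\alpha t^k$ is, by the same computation as above, equivalent to $\sigma^k\tau(a)=\tau\sigma(a)$, which holds for all $a$ exactly when $\tau\sigma\tau^{-1}=\sigma^k$. Thus under this hypothesis $G$ is well defined, and the necessity argument shows no other case yields a homomorphism. (Alternatively one can bypass the universal property and check multiplicativity directly on the monomial basis $\{c\,t^i\}$, but this reduces to the same coefficient comparison.) The displayed special case is then immediate: if $\tau$ and $\sigma$ commute then $\tau\sigma\tau^{-1}=\sigma$, so $\tau\sigma\tau^{-1}=\sigma^k$ becomes $\sigma^{k-1}=\id$; since $\sigma$ has order $\os$ this is equivalent to $\os\mid(k-1)$, that is, $k\equiv 1\pmod{\os}$.

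The only genuinely delicate point is the sufficiency step: confirming that $ta=\sigma(a)t$ is the \emph{only} relation that needs checking, so that compatibility on the generators propagates to all of $\Stsigma$. This is exactly what the universal property guarantees, so I expect no serious obstacle; the necessity direction is a short calculation whose sole subtlety is the cancellation of $\alpha$, which is precisely where the not-a-zero-divisor hypothesis is used.
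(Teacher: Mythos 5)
Your proposal is correct and follows essentially the same route as the paper: both reduce the problem to checking the single defining relation $tc=\sigma(c)t$ on generators, compute $G(t)G(c)=\alpha\sigma^k(\tau(c))t^k$ and $G(\sigma(c))G(t)=\tau(\sigma(c))\alpha t^k$, and cancel the non-zero-divisor $\alpha$ to obtain $\tau\sigma\tau^{-1}=\sigma^k$, with the congruence $k\equiv 1 \bmod \os$ following in the commuting case. The only difference is presentational: you justify the sufficiency direction by explicitly invoking the universal property of the Ore extension, whereas the paper simply asserts that multiplicativity is equivalent to the relation holding on the generators, which is the same fact stated without the universal-property language.
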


\begin{proof}
The map $G:\Stsigma \to \Stsigma$ generated by $G(c)=\tau(c)$ for all $c\in \ring$ and $G(t)=\alpha t^k$ is given by
$$
G(\sum_{i=0}^{\ell} a_i t^i) = \sum_{i=0}^\ell \tau(a_i) N_i^{\sigma^k}(\alpha) t^{ik},
$$
for any $\ell\in \mathbb{N}$ and $a_i\in \ring$.
This map is $\ring_0^\tau$-linear and will be multiplicative if and only if  $G(t)G(c)=G(\sigma(c))G(t)$ for all $c\in \ring$.
 We compute
$$
G(t)G(c)=\alpha t^k \tau(c) = \alpha \sigma^k(\tau(c)) t^k
$$
and
$$
G(\sigma(c))G(t)=\tau(\sigma(c))\alpha t^k.
$$
The cancellability of $\alpha$ implies that this holds for all $c\in \ring$ if and only if $\tau\sigma=\sigma^k\tau$.  In particular, if $\sigma$ and $\tau$ commute, we must have $\sigma=\sigma^k$, or  $k\equiv 1 \mod \os$.

Suppose now that $\sigma$ and $\tau$ commute. If $G$ is to be an isomorphism, it must be surjective. For every nonzero $f\in  \Stsigma$, we have $\deg (G(t))=k\deg(f)$, hence all non-zero skew polynomials in the image of $G$ will have degree divisible by $k$, in particular so will $t$ which forces $k=1$.
 Moreover, if $\alpha$ were not invertible, then $t^k$ could not lie in the image of $G$. This implies that $\alpha\in S^\times$ and $k=1$.

Conversely,  if $\sigma$ and $\tau$ commute then every $G$ with $\alpha\in S^\times$ and $k=1$ is clearly bijective.
\end{proof}

A special case of this result (where $\tau=\id$) has appeared as early as \cite[Theorem 3]{Rimmer1978}.

From now on, we assume that $\tau\in \Aut(\ring)$ commutes with $\sigma$, and that $\alpha \in \ring^\times$.
In this setting, the resulting homomorphism $G$ is an isomorphism if and only if $k=1$ and $\alpha \in \ring^\times$.  In that case, for any $a\in \ring^\times$ and $\de\in \N$ we have
$$
G(t^{\de}-a) = N^\sigma_{\de}(\alpha)t^{\de}-\tau(a) = N^\sigma_{\de}(\alpha)(t^{\de} - N^\sigma_{\de}(\alpha^{-1})\tau(a)).
$$
Thus $G$ induces a ring isomorphism leaving $\fixedts$ fixed 
$$
G : \Sonenew{a} \to \Sonenew{b} \quad \text{where $\tau(a)= bN^\sigma_{\de}(\alpha)$.}
$$

In this section we show that these are \emph{all} the  Hamming-weight preserving homomorphisms from $\Stsigma$ to $\Sonenew{b}$.

The first key obstruction for a map into $\Sonenew{b}=\Sone{b}$ to be a homomorphism is the power-associativity of the term $\alpha t^k$.  We require a technical lemma that will in Theorem~\ref{T:powerassociative} be used to reduce the question of associativity of powers of $\alpha t^k$ in $\Sonenew{b}$ to a straightforward relation between $\alpha$ and $b$.  Write $\lfloor x \rfloor$ for the greater integer that is less than or equal to $x$.

\begin{lemma}\label{L:keystep}
    Suppose $\de,k\in \mathbb{N}$ and $\alpha,b\in \ring$ are not zero divisors.  Suppose
\begin{equation}\label{E:conditiononassociativity}
    \alpha b = \sigma^{\de}(\alpha) \sigma^k(b).
    \end{equation}
    Then for all $s,\ell\in \mathbb{N}$ we have
    \begin{equation}\label{E:keyequality}
\sigma^{[sk]_{\de}}
(N_{\ell}^{\sigma^k}(\alpha))
\prod_{j=1}^z\sigma^{sk-j{\de}}(b) =
\sigma^{sk}
(N_{\ell}^{\sigma^k}(\alpha))
\prod_{j=1}^z\sigma^{sk-j{\de}}
(\sigma^{\ell k}(b)),
    \end{equation}
    where $z= \lfloor \frac{sk}{{\de}}\rfloor$ and $[sk]_{\de}=sk-z{\de}$ is the residue mod ${\de}$ of $sk$.
\end{lemma}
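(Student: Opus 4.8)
The plan is to derive \eqref{E:keyequality} entirely by iterating the hypothesis \eqref{E:conditiononassociativity}; all the difficulty is combinatorial bookkeeping. First I would apply the automorphism $\sigma^m$ to \eqref{E:conditiononassociativity} to record, for every $m\ge 0$, the shifted relation
\begin{equation*}
\sigma^m(\alpha)\,\sigma^m(b) = \sigma^{m+\de}(\alpha)\,\sigma^{m+k}(b).\tag{$\ast_m$}
\end{equation*}
This is the only input to the whole argument.

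The heart of the proof is the division-free telescoping identity, valid for all $m,z\ge 0$:
\begin{equation*}
\sigma^{m+z\de}(\alpha)\prod_{p=0}^{z-1}\sigma^{m+p\de+k}(b) = \sigma^m(\alpha)\prod_{p=0}^{z-1}\sigma^{m+p\de}(b).\tag{$\dagger$}
\end{equation*}
I would prove $(\dagger)$ by induction on $z$. The base case $z=0$ is the trivial identity $\sigma^m(\alpha)=\sigma^m(\alpha)$. For the inductive step I extract the top factor $\sigma^{m+z\de+k}(b)$ of the product and use $(\ast_{m+z\de})$ to rewrite $\sigma^{m+(z+1)\de}(\alpha)\,\sigma^{m+z\de+k}(b)$ as $\sigma^{m+z\de}(\alpha)\,\sigma^{m+z\de}(b)$, after which the inductive hypothesis applies to $\sigma^{m+z\de}(\alpha)\prod_{p=0}^{z-1}\sigma^{m+p\de+k}(b)$ and the leftover factor $\sigma^{m+z\de}(b)$ fills out the target product as its $p=z$ term. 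Keeping $(\dagger)$ balanced, rather than dividing, is what makes the whole argument valid over an arbitrary commutative ring in which $\alpha$ and $b$ are not zero divisors.

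To deduce \eqref{E:keyequality}, I recall $N_{\ell}^{\sigma^k}(\alpha)=\prod_{i=0}^{\ell-1}\sigma^{ik}(\alpha)$, so that $\sigma^{sk}(N_{\ell}^{\sigma^k}(\alpha))$ and $\sigma^{[sk]_{\de}}(N_{\ell}^{\sigma^k}(\alpha))=\sigma^{sk-z\de}(N_{\ell}^{\sigma^k}(\alpha))$ differ factor by factor only through the exponent shift by $z\de$. Applying $(\dagger)$ with $m=sk-z\de+ik$ for each $i=0,\ldots,\ell-1$ and multiplying the resulting $\ell$ identities yields a balanced relation
\begin{equation*}
\sigma^{sk}(N_{\ell}^{\sigma^k}(\alpha))\prod_{i=0}^{\ell-1}\prod_{p=0}^{z-1}\sigma^{sk-z\de+(i+1)k+p\de}(b) = \sigma^{sk-z\de}(N_{\ell}^{\sigma^k}(\alpha))\prod_{i=0}^{\ell-1}\prod_{p=0}^{z-1}\sigma^{sk-z\de+ik+p\de}(b).
\end{equation*}
For each fixed $p$ the inner products over $i$ share all factors with $1\le i\le \ell-1$; cancelling these (legitimate because the $\sigma$-conjugates of $b$ are not zero divisors) leaves the surplus factor $\sigma^{sk-z\de+\ell k+p\de}(b)$ on the left and $\sigma^{sk-z\de+p\de}(b)$ on the right. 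Re-indexing by $j=z-p$, which turns the exponent $sk-z\de+p\de$ into $sk-j\de$, then converts the surviving products into $\prod_{j=1}^{z}\sigma^{sk-j\de+\ell k}(b)$ and $\prod_{j=1}^{z}\sigma^{sk-j\de}(b)$, giving exactly \eqref{E:keyequality}.

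I expect the main obstacle to be the exponent accounting across the nested products and the final telescoping cancellation---tracking precisely which factors survive and matching the re-indexing $j=z-p$ to the claimed form---rather than any conceptual difficulty, since once $(\dagger)$ is established the identity is forced. A cleaner but less elementary alternative would be to pass to the total ring of fractions $Q(\ring)$, in which $\alpha,b$ and all their $\sigma$-conjugates become units; there one may carry out the telescoping as genuine divisions and then descend the resulting equality to $\ring$ along the injection $\ring\hookrightarrow Q(\ring)$.
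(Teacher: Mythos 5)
Your proof is correct, but it executes the telescoping in a genuinely different way from the paper. The paper passes immediately to the total ring of fractions of $\ring$ (localizing at the non-zero-divisors, so that $\alpha$, $b$ and all their $\sigma$-images become invertible), rewrites the hypothesis as $\alpha/\sigma^{\de}(\alpha)=\sigma^k(b)/b$, telescopes this ratio to express $\alpha/\sigma^{z\de}(\alpha)$, applies $N_\ell^{\sigma^k}$ to both sides, applies $\sigma^{[sk]_{\de}}$, and finally cross-multiplies to land back in $\ring$ --- exactly the ``less elementary alternative'' you sketch in your closing remark. Your main route instead never leaves $\ring$: the balanced identity $(\dagger)$, proved by induction from the shifted relations $(\ast_m)$, replaces the paper's ratio telescoping, the product over $i=0,\dots,\ell-1$ replaces the application of $N_\ell^{\sigma^k}$, and a final cancellation of the common non-zero-divisor factors (legitimate since $xc=yc$ with $c$ not a zero divisor forces $x=y$) replaces the cross-multiplication. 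What the paper's version buys is brevity --- division makes the norm step and the final rearrangement one-line manipulations; what yours buys is that it is completely division-free and self-contained, needing neither the construction of the total quotient ring nor the injectivity of $\ring\hookrightarrow Q(\ring)$, and the exponent bookkeeping ($m_i=sk-z\de+ik\geq 0$, the shared factors $1\leq i\leq \ell-1$, and the re-indexing $j=z-p$ giving $sk-z\de+p\de=sk-j\de$) all checks out, including the degenerate case $z=0$.
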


\begin{proof}
   Since $\alpha,b$ are not zero divisors the same is true of every image under a power of $\sigma$, and we may work in the total ring of fractions of $\ring$ obtained by localizing on the set of non zero divisors. Then the hypothesis \eqref{E:conditiononassociativity} can be rephrased as
   $$
    \frac{\alpha}{\sigma^{\de}(\alpha)}=\frac{\sigma^k(b)}{b}.
    $$
It follows that for any positive integer $z$ we have
$$
\frac{\alpha}{\sigma^{z{\de}}(\alpha)} = \prod_{i=0}^{z-1} \sigma^{i{\de}}\left(\frac{\alpha}{\sigma^{\de}(\alpha)}\right)= \prod_{i=0}^{z-1} \sigma^{i{\de}}\left(\frac{\sigma^k(b)}{b}\right)= \prod_{j=1}^{z}\sigma^{z{\de}-j{\de}}\left(\frac{\sigma^k(b)}{b}\right).
$$
Now let $\ell \in \mathbb{N}$.  Applying the function $N_\ell^{\sigma^k}$ to both sides and then simplifying yields
$$
  \prod_{i=0}^{\ell-1}\sigma^{ik}\left( \prod_{j=1}^{z}\sigma^{z{\de}-j{\de}}\left(\frac{\sigma^k(b)}{b}\right) \right)
  = \prod_{j=1}^{z}\sigma^{z{\de}-j{\de}}\left(\prod_{i=0}^{\ell-1}\sigma^{ik}\left(\frac{\sigma^k(b)}{b}\right) \right) = \prod_{j=1}^{z}\sigma^{z{\de}-j{\de}}\left(\frac{\sigma^{\ell k}(b)}{b}\right).
$$
That is, \eqref{E:conditiononassociativity} implies that for all $z,\ell\in \mathbb{N}$ we have
$$
\frac{N_{\ell}^{\sigma^k}(\alpha)}{\sigma^{z{\de}}(N_{\ell}^{\sigma^k}(\alpha))} =  \prod_{j=1}^{z}\sigma^{z{\de}-j{\de}}\left(\frac{\sigma^{\ell k}(b)}{b}\right).
$$
Now let $s\in \mathbb{N}$ and define $z$ by $[sk]_{\de}=sk-z{\de}$.  Applying $\sigma^{[sk]_{\de}}=\sigma^{sk-z{\de}}$ to both sides yields
$$
\frac{\sigma^{[sk]_{\de}}(N_{\ell}^{\sigma^k}(\alpha))}{\sigma^{sk}(N_{\ell}^{\sigma^k}(\alpha))}
=
\prod_{j=1}^{z}\sigma^{sk-j{\de}}\left(\frac{\sigma^{\ell k}(b)}{b}\right).
$$
Finally, cross-multiplying and simplifying yields the desired equality
$$
\sigma^{[sk]_{\de}}(N_{\ell}^{\sigma^k}(\alpha))\prod_{j=1}^{z}\sigma^{sk-j{\de}}(b) = \sigma^{sk}(N_{\ell}^{\sigma^k}(\alpha))\prod_{j=1}^{z}\sigma^{sk-j{\de}+\ell k}(b).
$$
\end{proof}

We now demonstrate that the condition \eqref{E:conditiononassociativity} is exactly the key to the power associativity of a monomial in $\mathbb{S}_{t^\de-b}$.
  %$\Sone{b}$.
Recall that an element $\scz$ is said to be power-associative if $\scz^s \scz^t=\scz^t \scz^s$ for all $s,t\geq 1$, and this is equivalent to the statement that $\scz^{s+t}$ is well-defined.

\begin{theorem}\label{T:powerassociative}
    Let $\alpha, b\in \ring$ and ${\de}\in \N$.  Suppose neither $\alpha$ nor $b$ are zero-divisors and $1\leq k <{\de}$.  Consider the magma generated by the monomial
    $$
    \scz=\alpha t^k
    $$
    in $\mathbb{S}_{t^\de-b}$. 
    If $\scz$ is power associative then
\begin{equation}\label{E:associativityrelation}
    \alpha b = \sigma^{\de}(\alpha)\sigma^k(b).
    \end{equation}
    Conversely, if \eqref{E:associativityrelation} holds, then  for every $s\in \mathbb{N}$, the expression $(\alpha t^k)^s$ is well-defined in $\mathbb{S}_{t^\de-b}$ 
    (that is, $\scz$ is power associative) and we have
    \begin{equation}\label{G(t)^s}
    \scz^s = N_{s}^{\sigma^k}(\alpha)\prod_{i=1}^{\lfloor \frac{sk}{{\de}}\rfloor} \sigma^{sk-i{\de}}(b)t^{[sk]_{\de}}
    \end{equation}
    where $\lfloor \frac{sk}{\de}\rfloor=j$ when $j{\de}\leq sk<(j+1){\de}$ and $[sk]_{\de} = sk-\lfloor \frac{sk}{\de}\rfloor \de$ denotes the residue of $sk$ mod $\de$ in the interval $[0,\de-1]$.
\end{theorem}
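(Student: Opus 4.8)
The plan is to prove the two implications by reducing everything to explicit monomial multiplication in $\Sone{b}$, using that for $0\le i,j<\de$ one has $t^it^j=t^{i+j}$ when $i+j<\de$ and $t^it^j=\sigma^{i+j-\de}(b)\,t^{i+j-\de}$ when $i+j\ge\de$, so that the reduction of $t^it^j$ depends only on the sum $i+j$. For the forward implication, I would first observe that the least $r>0$ with $rk\ge\de$ satisfies $r\ge 2$ and $(r-1)k<\de\le rk<2\de$. Hence the powers $\scz,\scz^2,\dots,\scz^{r-1}$ trigger no reduction, are therefore unambiguous, and equal $N_j^{\sigma^k}(\alpha)t^{jk}$; consequently $\scz^r=\scz^{r-1}\scz=N_r^{\sigma^k}(\alpha)\,\sigma^{rk-\de}(b)\,t^{rk-\de}$ is well defined. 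Writing $m=rk-\de\in[0,\de)$, I then expand both $\scz^r\scz$ and $\scz\scz^r$. Because $t^mt^k=t^kt^m$, the two products share the same (nonzero) $t$-monomial, and since $\alpha,b$ and all their $\sigma$-images are non-zero-divisors, equating coefficients and cancelling the common factor $N_r^{\sigma^k}(\alpha)$ leaves
\[
\sigma^{rk-\de}(\alpha)\,\sigma^{rk-\de}(b)=\sigma^{rk}(\alpha)\,\sigma^{rk-\de+k}(b);
\]
applying $\sigma^{\de-rk}$ to both sides yields precisely \eqref{E:associativityrelation}. This direction is essentially routine bookkeeping.

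For the converse I would not argue by a bare induction but instead introduce the candidate element $P_s:=N_s^{\sigma^k}(\alpha)\prod_{i=1}^{\lfloor sk/\de\rfloor}\sigma^{sk-i\de}(b)\,t^{[sk]_\de}$ (the right-hand side of \eqref{G(t)^s}), with $P_1=\scz$, and prove the single multiplicative identity $P_aP_b=P_{a+b}$ for all $a,b\ge1$. Granting this, both power-associativity and the formula \eqref{G(t)^s} follow at once by strong induction: any bracketing of $s$ copies of $\scz$ factors as a product of a bracketing of $a$ copies with one of $b$ copies (with $a+b=s$), which by induction equals $P_aP_b=P_{a+b}$, independent of the bracketing; hence $\scz^s$ is well defined and equals $P_s$.

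The heart of the matter is thus $P_aP_b=P_{a+b}$. Writing $P_s=c_st^{[sk]_\de}$ and commuting the $t$-power past $c_b$, the $\alpha$-part of $P_aP_b$ becomes $N_a^{\sigma^k}(\alpha)\,\sigma^{[ak]_\de}(N_b^{\sigma^k}(\alpha))$, whose exponent $[ak]_\de$ is \emph{reduced} rather than equal to $ak$. This is exactly the obstruction Lemma~\ref{L:keystep} removes: applying \eqref{E:keyequality} with $s=a$, $\ell=b$, and $z=\lfloor ak/\de\rfloor$ converts $\sigma^{[ak]_\de}(N_b^{\sigma^k}(\alpha))$ into $\sigma^{ak}(N_b^{\sigma^k}(\alpha))$ at the cost of shifting the accompanying $b$-factors by $\sigma^{bk}$, after which $N_a^{\sigma^k}(\alpha)\,\sigma^{ak}(N_b^{\sigma^k}(\alpha))=N_{a+b}^{\sigma^k}(\alpha)$ by multiplicativity of the norm (as in \eqref{E:Normrelation}, specialized to $\tau=\sigma^k$). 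It then remains to collect the factors of $b$ coming from three sources — those already present in $c_b$, the shifted factors produced from $c_a$ by the Lemma, and the single extra factor created exactly when $[ak]_\de+[bk]_\de\ge\de$ — and to verify that their exponents, of the form $\sigma^{(a+b)k-i\de}(b)$, exhaust $i=1,\dots,\lfloor(a+b)k/\de\rfloor$ once each, so that they assemble into $c_{a+b}$; the same inequality governs whether $\lfloor(a+b)k/\de\rfloor$ equals $\lfloor ak/\de\rfloor+\lfloor bk/\de\rfloor$ or one more, and the residues satisfy $[ak]_\de+[bk]_\de\equiv[(a+b)k]_\de\pmod\de$.

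I expect this index-bookkeeping in the product law to be the main obstacle: one must track the three batches of $b$-factors together with the floor/residue arithmetic for $[(a+b)k]_\de$ and $\lfloor(a+b)k/\de\rfloor$, and invoke Lemma~\ref{L:keystep} with precisely the right parameters, in order to see that everything collapses into $P_{a+b}$. By contrast, the forward direction and the reduction of power-associativity to the single identity $P_aP_b=P_{a+b}$ are both straightforward.
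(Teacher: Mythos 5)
Your proposal is correct and takes essentially the same approach as the paper's proof: the converse is reduced to the two-factor product law ($P_aP_b=P_{a+b}$ in your notation; $L(\scz,s)L(\scz,\ell)=L(\scz,s+\ell)$ for left-nested products in the paper's), proved by invoking Lemma~\ref{L:keystep} with exactly the same parameters ($s=a$, $\ell=b$, $z=\lfloor ak/\de\rfloor$) and the same three-way collection of $b$-factors, while the forward direction extracts \eqref{E:associativityrelation} by equating coefficients in $\scz^r\scz=\scz\scz^r$ and applying $\sigma^{\de-rk}$. One incidental merit of your version: the observation that $t^mt^k=t^kt^m$ keeps your forward direction valid even when $(r+1)k\ge 2\de$ (e.g.\ $\de=4$, $k=3$), a case where the paper's assertion that $\de<(r+1)k<2\de$ fails --- harmlessly, since the extra factor of $b$ arising from the second reduction is a non-zero-divisor common to both sides.
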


\begin{proof}
We first prove that if \eqref{E:associativityrelation} holds, then $\scz$ generates a monoid (that is, all powers are well-defined).
    Write $L(\scz,s)$ for the left-nested product
    $$
    \scz(\scz(\scz( \cdots (\scz))))
    $$
    of $\scz$ with itself $s$ times.
    Our first step is to show that this is equal to \eqref{G(t)^s} by induction on $s$.  The case $s=1$ is given.  Assume for some $s\geq 1$ that $L(\scz,s)$ is given by \eqref{G(t)^s}.  We wish to show that
    \begin{equation}\label{E:Lzs+1}
    L(\scz,s+1)= N_{s+1}^{\sigma^k}(\alpha)\prod_{i=1}^{\lfloor \frac{(s+1)k}{\de}\rfloor} \sigma^{(s+1)k-i\de}(b)t^{[(s+1)k]_{\de}}.
    \end{equation}
    We compute
    \begin{align*}
    L(\scz,s+1) &= \scz \cdot L(\scz,s)\\
    &=\alpha t^k \cdot N_{s}^{\sigma^k}(\alpha)\prod_{i=1}^{\lfloor \frac{sk}{\de}\rfloor} \sigma^{sk-i\de}(b)t^{[sk]_{\de}} \\
    &= \alpha \cdot \sigma^k\left( N_{s}^{\sigma^k}(\alpha)\right) \sigma^k\left(\prod_{i=1}^{\lfloor \frac{sk}{\de}\rfloor} \sigma^{sk-i\de}(b)\right)\cdot t^kt^{[sk]_{\de}}\\
    &= N_{s+1}^{\sigma^k}(\alpha) \prod_{i=1}^{\lfloor \frac{sk}{\de}\rfloor} \sigma^{(s+1)k-i\de}(b)\cdot t^kt^{[sk]_{\de}}.
    \end{align*}
    Note that since both $k$ and $[sk]_{\de}$ are less than $\de$, we simply have
    $$
    [(s+1)k]_{\de} = \begin{cases}
        k+[sk]_{\de} & \text{if $k+[sk]_{\de}<\de$, that is, if $\lfloor \frac{(s+1)k}{\de}\rfloor=\lfloor \frac{sk}{\de}\rfloor$};\\
        k+[sk]_{\de}-\de & \text{if $k+[sk]_{\de} > \de$, that is, if $\lfloor \frac{(s+1)k}{\de}\rfloor=\lfloor \frac{sk}{\de}\rfloor+1$.}
    \end{cases}
    $$
    In the former case we are done.  In the latter case, we apply the relation
    $$
    t^{k+[sk]_{\de}} = t^{k+[sk]_{\de}-\de}t^{\de}= t^{[(s+1)k]_{\de}}b=\sigma^{[(s+1)k]_{\de}}(b)t^{[(s+1)k]_{\de}},
    $$
    which gives the desired expression for $L(\scz,s+1),$  as required.

    We now assume that \eqref{E:associativityrelation} holds, and show that this implies power-associativity.  By induction it suffices to prove that
    for all $\ell,s\in \mathbb{N}$ we have
    $$
    L(\scz,s)L(\scz,\ell) = L(\scz,s+\ell).
    $$
This holds for $s=1$ and every $\ell\in \mathbb{N}$.  We compute
\begin{align*}
L&(\scz,s)\; L(\scz,\ell)=\\
&=N_{s}^{\sigma^k}(\alpha)\prod_{i=1}^{\lfloor \frac{s k}{\de}\rfloor} \sigma^{s k-i\de}(b)t^{[s k]_{\de}}\cdot N_{\ell}^{\sigma^k}(\alpha)\prod_{i=1}^{\lfloor \frac{\ell k}{\de}\rfloor} \sigma^{\ell k-i\de}(b)t^{[\ell k]_{\de}}\\
&= N_{s}^{\sigma^k}(\alpha)\prod_{i=1}^{\lfloor \frac{s k}{\de}\rfloor} \sigma^{s k-i\de}(b)\cdot
\sigma^{[s k]_{\de}}\left(N_{\ell}^{\sigma^k}(\alpha)\prod_{i=1}^{\lfloor \frac{\ell k}{\de}\rfloor} \sigma^{\ell k-i\de}(b)\right) \cdot t^{[s k]_{\de} + [\ell k]_{\de}}
\end{align*}
By Lemma~\ref{L:keystep}, our hypothesis \eqref{E:associativityrelation} implies
$$
\sigma^{[sk]_\de}(N_{\ell}^{\sigma^k}(\alpha))\prod_{j=1}^{\lfloor \frac{s k}{\de}\rfloor}\sigma^{sk-j\de}(b) = \sigma^{sk}(N_{\ell}^{\sigma^k}(\alpha))\prod_{j=1}^{\lfloor \frac{s k}{\de}\rfloor}\sigma^{sk-j\de+\ell k}(b).
$$
Thus
\begin{align*}
  L&(\scz,s)\; L(\scz,\ell)=\\
  &= N_{s}^{\sigma^k}(\alpha) \left(\prod_{i=1}^{\lfloor \frac{s k}{\de}\rfloor} \sigma^{s k-i\de}(b)\cdot
\sigma^{[s k]_{\de}}(N_{\ell}^{\sigma^k}(\alpha)) \right) \sigma^{[sk]_{\de}}\left(\prod_{i=1}^{\lfloor \frac{\ell k}{\de}\rfloor} \sigma^{\ell k-i\de}(b)\right) \cdot t^{[s k]_{\de} + [\ell k]_{\de}}\\
&= N_{s}^{\sigma^k}(\alpha)
\left( \sigma^{sk}(N_{\ell}^{\sigma^k}(\alpha))\prod_{j=1}^{\lfloor \frac{s k}{\de}\rfloor}\sigma^{sk-j\de+\ell k}(b)
\right)
\prod_{i=1}^{\lfloor \frac{\ell k}{\de}\rfloor} \sigma^{[sk]_\de+\ell k-i\de}(b) \cdot t^{[s k]_{\de} + [\ell k]_{\de}}.
\end{align*}
Since $[sk]_{\de}=sk-\lfloor \frac{sk}{\de}\rfloor \de$, we may reindex the last product to yield
\begin{align*}
&= N_{s+\ell}^{\sigma^k}(\alpha)
\prod_{j=1}^{\lfloor \frac{s k}{\de}\rfloor}\sigma^{sk-j\de +\ell k}(b)
\prod_{j=\lfloor \frac{s k}{\de}\rfloor+1}^{\lfloor \frac{s k}{\de}\rfloor+\lfloor \frac{\ell k}{\de}\rfloor}\sigma^{sk-j\de+\ell k}(b)  \cdot t^{[s k]_{\de} + [\ell k]_{\de}}\\
&= N_{s+\ell}^{\sigma^k}(\alpha)\prod_{j=1}^{\lfloor \frac{s k}{\de}\rfloor+\lfloor \frac{\ell k}{\de}\rfloor} \sigma^{(s+\ell)k-j\de }(b)  \cdot t^{[s k]_{\de} + [\ell k]_{\de}}.
\end{align*}
If $[sk]_{\de}+[\ell k]_{\de} = [(s+\ell)k]_{\de}$ then this is simply
$$
N_{s+\ell}^{\sigma^k}(\alpha)\prod_{j=1}^{\lfloor \frac{(s+\ell)k}{\de}\rfloor } \sigma^{(s+\ell)k-j\de}(b)  \cdot t^{[(s+\ell) k]_{\de}} = L(\scz,s+\ell).
$$
Otherwise, we must have
$$
\lfloor \frac{(s+\ell)k}{\de}\rfloor  = \lfloor \frac{s k}{\de}\rfloor+\lfloor \frac{\ell k}{\de}\rfloor+1,
$$
whence $
[sk]_{\de}+[\ell k]_{\de} = [(s+\ell)k]_{\de} + \de.
$
Since $[(s+\ell) k]_{\de}=(s+\ell)k-\lfloor \frac{(s+\ell)k}{\de}\rfloor \de$ we have
$$
t^{[s k]_{\de} + [\ell k]_{\de}}=t^{[(s+\ell) k]_{\de}+\de}= \sigma^{(s+\ell)k-\lfloor \frac{(s+\ell)k}{\de}\rfloor \de}(b)t^{[(s+\ell) k]_{\de}},
$$
which again yields $L(\scz,s)L(\scz,\ell)=L(\scz,s+\ell).$  Thus the desired equality holds for all $s,\ell \in \mathbb{N}$, and $\scz$ is power associative.

We now prove the condition is necessary.  Suppose $\scz$ is power-associative.  Then  $\scz\scz^r=\scz^r\scz$, for $r>0$ the least integer such that $rk\geq \de$.  We show that then  \eqref{E:associativityrelation} holds.  We begin by noting that $\scz^r$ is well-defined, since for all $s,\ell\geq 1$ such that $s+\ell=r$,
$$
\scz^s=N_s^{\sigma^k}(\alpha)t^{sk}, \quad \text{and} \quad \scz^\ell = N_\ell^{\sigma^k}(\alpha)t^{\ell k},
$$
by the associativity of powers less than $\de$ in $\Sone{b}$.  Thus we readily compute as above that
$$
\scz^r=\scz^s\scz^\ell= N_s^{\sigma^k}(\alpha)\sigma^{sk}(N_\ell^{\sigma^k}(\alpha))t^{(s+\ell)k}=N_r^{\sigma^k}(\alpha)\sigma^{rk-\de}(b)t^{rk-\de},
$$
independently of the choice of association, and this is equal to $L(\scz,r)$.

Now by our choice of $r$ (and $k$), we have $\de<(r+1)k < 2\de$ so that $\lfloor (r+1)/\de \rfloor=1$ and $[(r+1)k]_{\de}=(r+1)k-\de$.  Thus \eqref{E:Lzs+1} simplifies to
$$
\scz \scz^r=L(\scz,r+1)=N_{r+1}^{\sigma^k}(\alpha)\sigma^{(r+1)k-\de}(b)t^{(r+1)k-\de}
$$
whereas
$$  \scz^r \scz=N_r^{\sigma^k}(\alpha) \sigma^{rk-\de}(b)t^{rk-\de} \alpha t^k
= N_r^{\sigma^k}(\alpha) \sigma^{rk-\de}(\alpha)\sigma^{rk-\de}(b)t^{(r+1)k-\de}.
$$
If these are equal then we may equate these coefficients of $t^{(r+1)k-\de}$. If we divide both expressions by $N_r^{\sigma^k}(\alpha)$ and simplify, we obtain the equality
$$
\sigma^{rk}(\alpha)\sigma^{rk-\de}(\sigma^k(b))=\sigma^{rk-\de}(\alpha)\sigma^{rk-\de}(b),
$$
which, upon applying the automorphism $\sigma^{\de-rk}$ to both sides, is simply \eqref{E:associativityrelation}.
\end{proof}

 In particular, Theorem~\ref{T:powerassociative} shows that the condition \eqref{E:associativityrelation} is both necessary and sufficient for full power associativity of the element $\alpha t^k$ in $\Sonenew{b}$.

\begin{corollary}\label{C:StsigmatoSb}
    Let $k\in \N$ and $\alpha \in \ring$ not be a zero divisor. Let $\tau$ be an automorphism of $\ring$ commuting with the order-$\os$ automorphism $\sigma$.  Then the map
    $$
    G =G_{\tau,\alpha,k}: \Stsigma \to \mathbb{S}_{t^\de-b}  %\Sone{b}
    $$
    given by $G(a)=\tau(a)$ and $G(t) = \alpha t^k$ induces a ring homomorphism leaving $\fixedts$ fixed %$\fixedts$-algebras
    if and only if
    $$
    b \in \ring_0^\times, \quad k\equiv 1 \mod \os, \quad \text{and}\quad \os|\de.
    $$
\end{corollary}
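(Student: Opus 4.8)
The plan is to prove both implications by testing the multiplicativity of $G$ against three families of defining relations of $\Stsigma$ — the commutation $tc=\sigma(c)t$, its iterate $t^rc=\sigma^r(c)t^r$ at the first ``wrap‑around'' exponent, and the power relation $t\cdot t^r=t^r\cdot t$ — and reading off the constraint on $k$, $\de$, and $b$ that each one forces. Throughout I may assume $1\le k<\de$, the range covered by Theorem~\ref{T:powerassociative}: a monomial $\alpha t^k$ with $k\ge\de$ reduces via $t^{\de}=b$ to a monomial of degree $[k]_{\de}$ with non‑zero‑divisor leading coefficient, and once $\os\mid\de$ one has $k\equiv [k]_{\de}\bmod\os$, so this reduction preserves all three asserted conditions.

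For the forward direction, suppose $G=G_{\tau,\alpha,k}$ is a homomorphism of $\fixedts$-algebras. Applying $G$ to the identity $tc=\sigma(c)t$ and comparing coefficients of $t^k$ (which, since $k<\de$, requires no reduction in $\Sone{b}$) reproduces exactly the computation of Lemma~\ref{L:Ktsigmatoitself}: as $\alpha$ is not a zero divisor and $\tau,\sigma$ commute, we get $\sigma^k=\sigma$ on $\ring$, i.e. $k\equiv 1\bmod\os$. Next let $r>0$ be least with $rk\ge\de$. Writing $\scz=\alpha t^k$, multiplicativity lets us compute the single element $G(t^{r+1})$ both as $G(t^r)G(t)=\scz^r\scz$ and as $G(t)G(t^r)=\scz\,\scz^r$ (with $\scz^r$ well defined because its proper subpowers have degree $<\de$), so $\scz^r\scz=\scz\,\scz^r$. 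By Theorem~\ref{T:powerassociative} this yields the power‑associativity relation $\alpha b=\sigma^{\de}(\alpha)\sigma^k(b)$.

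The crux is the third relation, and I expect the cleanest statement of it to be the main obstacle. Using $\sigma^k=\sigma$ and $(r-1)k<\de\le rk<2\de$, a direct reduction gives $\scz^r=N_r^\sigma(\alpha)\,\sigma^{rk-\de}(b)\,t^{rk-\de}$, so the ``geometric'' twist picked up when a scalar $c'\in\ring$ is moved past $\scz^r$ in the codomain is $\sigma^{rk-\de}$. On the other hand, applying $G$ to $t^rc=\sigma^r(c)t^r$ demands $\scz^r\tau(c)=\sigma^r(\tau(c))\scz^r$, forcing the ``algebraic'' twist inherited from $t^r$ in the domain to be $\sigma^r$. Cancelling the non‑zero‑divisors $N_r^\sigma(\alpha)$ and $\sigma^{rk-\de}(b)$ gives $\sigma^{rk-\de}=\sigma^r$ on $\ring$; since $k\equiv 1\bmod\os$ implies $rk\equiv r\bmod\os$, we conclude $\os\mid rk-\de-r$ and hence $\os\mid\de$. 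Finally $\os\mid\de$ makes $\sigma^{\de}=\id$, so the power‑associativity relation collapses to $\alpha b=\alpha\sigma(b)$, whence $b=\sigma(b)$; as $b\in\ring^\times$ (built into $\Sone{b}$), this places $b\in\ring_0^\times$.

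For the converse, assume $b\in\ring_0^\times$, $k\equiv 1\bmod\os$, and $\os\mid\de$. Then $\sigma^{\de}=\id$ and $\sigma(b)=b$, so by the Petit criterion $\Sone{b}$ is associative. Viewing $\tau\colon\ring\to\ring\subset\Sone{b}$ together with $\beta=\alpha t^k$, the only compatibility to check is $\beta\,\tau(c)=\tau(\sigma(c))\,\beta$, which reduces via $\sigma^k=\sigma$ to an identity; the universal property of the Ore extension $\Stsigma$ then extends these data to a ring homomorphism into the associative algebra $\Sone{b}$, and it is $\fixedts$-linear because $\tau$ fixes $\fixedts$. This recovers precisely $G_{\tau,\alpha,k}$, completing the equivalence.
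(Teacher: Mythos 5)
Your proposal is correct and takes essentially the same route as the paper's proof: the degree-one commutation relation yields $k\equiv 1 \bmod \os$ via the computation of Lemma~\ref{L:Ktsigmatoitself}, Theorem~\ref{T:powerassociative} supplies the relation $\alpha b=\sigma^{\de}(\alpha)\sigma^k(b)$ and the monomial form of the powers of $\alpha t^k$, the commutation constraint on those higher powers forces $\os\mid\de$, and everything then collapses to $\sigma(b)=b$, with the converse resting on the associativity of $\Sone{b}$. The only differences are cosmetic: you test the commutation at the single wrap-around exponent $r$ where the paper quantifies over all $s$, you flesh out the converse via the universal property of the Ore extension where the paper merely asserts that $G$ descends, and you add a (harmless, and in fact more careful than the paper) reduction for $k\geq\de$.
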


\begin{proof}
    By Lemma~\ref{L:Ktsigmatoitself}, the condition $k\equiv 1 \mod \os$ ensures that $G(ta)=G(\sigma(a)t)$.    By Theorem~\ref{T:powerassociative}, if $G$ is well-defined  then $G(t^s)=\lambda t^{[sk]_{\de}}$ for some $\lambda \in \ring^\times$.
    Thus the requirement that
    $G(t^s)G(a)=G(\sigma^s(a))G(t^s)$ for all $a\in \ring$ and all $s\in \mathbb{N}$ is equivalent to
    $$
    \sigma^{[sk]_{\de}}(\tau(a)) = \tau(\sigma^{s}(a)).
    $$
    Since $\tau$ commutes with $\sigma$ this holds if and only if  $s-[sk]_{\de}\equiv 0 \mod \os$; since $k\equiv 1 \mod \os$ this implies $\os|\de$.  Thus $\de= e \os$ for some positive integer $e$.
    Under this hypothesis, the condition $\alpha b = \sigma^{\de}(\alpha)\sigma^k(b)$
     simplifies to  $\alpha b = \sigma^{e\os}(\alpha)\sigma^k(b)=\alpha \sigma^k(b)$,
     since $\sigma$ has order $\os$ and implies that $\sigma(b)=b$ or $b\in \ring_0^\times$, and
    thus $ \mathbb{S}_{t^\de-b} =\Sone{b}$ is associative.
    Consequently
    the function $G=G_{\tau,\alpha,k}$
    descends to a well-defined ring homomorphism $G:\Stsigma\to \Sone{b}$.
\end{proof}

Note that the homomorphism $G$ in Corollary~\ref{C:StsigmatoSb} is a surjection when additionally $\alpha \in \ring^\times$ and $\gcd(k,\de)=1$.

These conditions are quite restrictive, reflecting the constrained nature of these nonassociative algebras. For example, one deduces that even with $k=1$, the map $G_{\tau,\alpha,1}$ is a homomorphism from $\Stsigma\to \Sone{b}$ only when $b\in \ring_0^\times$ and $\os|\de$.  This is in fact to be expected: $\Stsigma$ is associative whereas $ \mathbb{S}_{t^\de-b} $ %$\Sone{b}$
is associative if and only if these two strong conditions hold \cite[(15)]{Petit1967}.

\section{Monomial homomorphisms between Petit algebras} \label{S:4}

We now turn to the question of when there exist monomial homomorphisms $G_{\tau,\alpha,k}$ from a Petit algebra $\mathbb{S}_{f}$,
where $f$ has degree $\de$, into the algebra $\mathbb{S}_{t^\de-b}$. 

 It is not evident that $G$ is forced to be the reduction of a homomorphism $\tilde{G}:\Stsigma \to \Sone{b}$ as in Corollary~\ref{C:StsigmatoSb}.
Namely, in  $\mathbb{S}_{f}$, 
the associativity of all powers of $t$ is a strong condition \cite[p.~13-06]{Petit1967}: it is equivalent to $t^{\de} t=t t^{\de}$ and to $f(t) t \in \Stsigma f(t)$.  In particular, if all powers of $t$ are associative in  $\mathbb{S}_{f}$,
then they are contained in the right nucleus of $\mathbb{S}_{f}$,   %$\Sf$,
which is an associative subalgebra, and thus it is only in this case that Theorem \ref{T:powerassociative} applies directly, as in the following example.

\begin{example}
    Suppose $f(t)\in \ring_0[t]$.  Then the associative algebra $\ring_0[t]/\langle f(t)\rangle$ is a subalgebra of $\mathbb{S}_{f}$ %$\Sf$
    containing all powers of $t$. It thus follows from  Theorem~\ref{T:powerassociative} that if $G:\mathbb{S}_{f}\to \mathbb{S}_{t^\de -b}$
    is a monomial homomorphism,
    then we must have
    $$
\alpha b = \sigma^{\de}(\alpha) \sigma^k(b)
$$
and by \eqref{G(t)^s}, for  every $s\in \mathbb{N}$, the expression $G(t^s)=G(t)^s$ is equal to
\begin{equation}\label{G(t)^sextra}
    N_{s}^{\sigma^k}(\alpha)\prod_{i=1}^{\lfloor \frac{sk}{\de}\rfloor} \sigma^{sk-i\de}(b)t^{[sk]_{\de}}.
\end{equation}
\end{example}

When instead $\mathbb{S}_{f}$ % $\Sf$
is proper nonassociative, then  $t^s$ is only well-defined for $0\leq s <\de$, and the full power associativity of Theorem~\ref{T:powerassociative} is not required.

To illustrate this, we derive the following generalization of Theorem~\ref{T:general_isomorphism_theorem2}, about the existence of monomial homomorphisms of degree $1$, between Petit algebras of our chosen form.

\begin{proposition}\label{L:Gtaualpha}
    Let $\tau\in \Aut(\ring)$ commute with $\sigma$ and suppose $\alpha\in \ring$ is not a zero divisor.  Then the map
    $$
    G_{\tau,\alpha}: \mathbb{S}_{t^\de-a} \to \mathbb{S}_{t^\de-b}
    $$
    induces an homomorphism  if and only if
    $$
    \tau(a) = N_{\de}^\sigma(\alpha) b,
    $$
    and this is an isomorphism if and only if $\alpha \in \ring^\times$.
\end{proposition}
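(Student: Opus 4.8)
The plan is to verify the homomorphism property directly on the standard basis $\{1,t,\dots,t^{\de-1}\}$ of $\Sone{a}$, rather than attempting to realize $G_{\tau,\alpha}$ as the reduction of a map out of $\Stsigma$ (which, as Corollary~\ref{C:StsigmatoSb} shows, would be far too restrictive to capture this case). The crucial point is that, because we use a degree-one monomial and reduce every product using the multiplication table of $\Sone{a}$, we never need the full power associativity of $\alpha t$ in $\Sone{b}$ established in Theorem~\ref{T:powerassociative}: we only ever manipulate genuine products $t^i\circ t^j$ of basis elements with $0\le i,j<\de$, all of which are well-defined.

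First I would record that $G=G_{\tau,\alpha}$ is the $\fixedts$-linear map determined by $G(t^i)=N_i^\sigma(\alpha)t^i$, so it is automatically well-defined as a map of modules and the content of the statement is multiplicativity. Since the multiplication in both algebras is $\fixedts$-bilinear and $\tau$ commutes with $\sigma$, the skew relation $tc=\sigma(c)t$ is respected (exactly as in Lemma~\ref{L:Ktsigmatoitself} with $k=1$), and the $\ring$-coefficients factor through cleanly. It therefore suffices to check $G(t^i\circ t^j)=G(t^i)\circ G(t^j)$ for all $0\le i,j<\de$.

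I would then split into two cases. When $i+j<\de$, both sides equal $N_{i+j}^\sigma(\alpha)t^{i+j}$ by the norm relation \eqref{E:Normrelation}, with no condition on $a,b$ required. When $i+j\ge \de$, set $m=i+j-\de$; in $\Sone{a}$ we have $t^i\circ t^j=\sigma^m(a)t^m$, so $G(t^i\circ t^j)=\sigma^m(\tau(a))N_m^\sigma(\alpha)t^m$, whereas in $\Sone{b}$ one computes, using \eqref{E:Normrelation}, that $G(t^i)\circ G(t^j)=N_{\de+m}^\sigma(\alpha)\sigma^m(b)t^m$. Equating coefficients of $t^m$ yields the family of conditions $\sigma^m(\tau(a))N_m^\sigma(\alpha)=N_{\de+m}^\sigma(\alpha)\sigma^m(b)$ indexed by $m$.

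The one genuinely structural step is to see that this whole family collapses to the single relation $\tau(a)=N_\de^\sigma(\alpha)b$ obtained at $m=0$ (from $t^{\de-1}\circ t=a$). This follows by expanding $N_{\de+m}^\sigma(\alpha)$ in the two ways allowed by \eqref{E:Normrelation}, namely $N_\de^\sigma(\alpha)\sigma^\de(N_m^\sigma(\alpha))=N_m^\sigma(\alpha)\sigma^m(N_\de^\sigma(\alpha))$, together with commutativity of $\ring$: applying $\sigma^m$ to the $m=0$ relation and multiplying by $N_m^\sigma(\alpha)$ then recovers every other member of the family, so the forward implication needs only $m=0$ while the converse delivers all relations at once. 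The isomorphism claim is then immediate, since the formula shows $G$ scales the basis vector $t^i$ by $N_i^\sigma(\alpha)$: the map is bijective precisely when every $N_i^\sigma(\alpha)$ is invertible, i.e. precisely when $\alpha\in\ring^\times$, while conversely surjectivity forces $t=G(\sum a_it^i)$ for some element, and matching the $t$-coefficient gives $\tau(a_1)\alpha=1$, so $\alpha$ is a unit. I expect the coefficient-collapsing step to be the only place requiring care, with everything else amounting to bookkeeping with \eqref{E:Normrelation}.
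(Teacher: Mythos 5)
Your proposal is correct and follows essentially the same route as the paper's proof: both verify multiplicativity on the basis monomials $t^i$, split into the cases $i+j<\de$ and $i+j\ge\de$, and use the norm relation \eqref{E:Normrelation} together with the cancellability of $N_m^\sigma(\alpha)$ and the commutativity of $\tau$ with $\sigma$ to reduce everything to the single condition $\tau(a)=N_{\de}^\sigma(\alpha)b$. Your explicit ``collapsing of the family of conditions indexed by $m$'' and the surjectivity argument for the unit claim are just slightly more spelled-out versions of what the paper does.
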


In particular, there is no requirement on the relationship between $\de$ and $\os$.

\begin{proof}
    Write $G=G_{\tau,\alpha}$; then $G(t)=\alpha t$ and so for all $1\leq s < \de$ we unambiguously have $G(t^s)=N_s^\sigma(\alpha)t^s$ since these powers are associative in both rings.   Thus since $G$ acts by $\tau$ on $\ring$, $G$ extends  to an $\ring_0^\tau$-linear isomorphism of the underlying $\ring_0^\tau$-modules.

    Furthermore, for all $c\in \ring$ and $0<s<\de$ we have $t^sc=\sigma^s(c)t^s$.  We find
    $$
    G(t^s)G(c)=N_s^\sigma(\alpha)t^s \tau(c)=N_s^\sigma(\alpha)\sigma^s(\tau(c))t^s = \tau(\sigma^s(c))N_s^\sigma(\alpha)t^s=G(\sigma^s(c))G(t^s)
    $$
    and thus this identity always holds.

    We now verify that $G(pq)=G(p)G(q)$ for all $p,q\in \Sone{a}$. By the above, it suffices to choose arbitrary $1\leq s,\ell <\de$ and prove this for $p=t^s$, $q=t^\ell$.

    Consider the product $t^{s}t^\ell$ in $\Sone{a}$.  If $s+\ell<\de$ the result is $t^{s+\ell}$ and we correspondingly have
    $$
    G(t^s)G(t^\ell) = N_s^\sigma(\alpha)t^s N_\ell^\sigma(\alpha)t^\ell = N_{s}^\sigma(\alpha)\sigma^s(N_\ell^\sigma(\alpha))t^{s+\ell} = N_{s+\ell}^\sigma(\alpha)t^{s+\ell}
    $$
    which is equal to $G(t^{s+\ell})$.
    If $s+\ell\geq \de$, however, then we have $t^st^\ell = t^{s+\ell-\de}a=\sigma^{s+\ell-\de}(a)t^{s+\ell-\de}$ whose image under $G$ is
    $$
    G(t^st^\ell)=\tau(\sigma^{s+\ell-\de}(a))N_{s+\ell-\de}^\sigma(\alpha)t^{s+\ell-\de}
    $$
    while
    $$
    G(t^s)G(t^\ell) = N_{s+\ell}^\sigma(\alpha)t^{s+\ell-\de}b=N_{s+\ell}^\sigma(\alpha)\sigma^{s+\ell-\de}(b)t^{s+\ell-\de}.
    $$
    Since $N_{s+\ell}^\sigma(\alpha)=N_{s+\ell-\de}^\sigma(\alpha)\sigma^{s+\ell-\de}(N_{\de}^\sigma(\alpha))$ by \eqref{E:Normrelation}
   and  $N^\sigma_{s+\ell-\de}(\alpha)$ is assumed to be cancellable, we deduce that these two expressions are equal whenever
    $$
    \tau(\sigma^{s+\ell-\de}(a))=\sigma^{s+\ell-\de}(N_{\de}^\sigma(\alpha))\sigma^{s+\ell-\de}(b).
    $$
    Therefore, since $\tau$ and $\sigma$ commute, our map is a homomorphism of $\ring_0^\tau$-algebras if and only if $\tau(a) = N_{\de}^\sigma(\alpha) b$.  It is invertible exactly when $\alpha^{-1}\in \ring$.
\end{proof}

For example, when $\os=\de$, $\ring=\field$ is a Galois field extension of $\field_0$ and $\tau\in {\rm Gal}(\field/\field_0)$, we recover the condition that $\tau(a) \in b N_{\field/\field_0}(\field^\times)$ from  \cite[Corollary 32]{BrownPumpluen2018}.

We now turn to the case that the degree $k$ of the monomial homomorphism $G_{\tau,\alpha,k}$ is strictly greater than $1$.
We begin with a lemma that is a version of Lemma~\ref{L:Ktsigmatoitself} for the present setting.

\begin{lemma}\label{L:k=1}
    Suppose $f,h\in \Stsigma$ are monic polynomials of degree $\de$.  If for some cancellable $\alpha\in \ring$ and degree $k>1$, the map $G_{\tau,\alpha,k}:\mathbb{S}_f\to\mathbb{S}_h$
    %$G_{\tau,\alpha,k}:\Sf\to\Sh$
    is a monomial homomorphism, then $k\equiv 1 \mod \os$; if it is an isomorphism then $\gcd(k,\de)=1$ and $\alpha\in \ring^\times$.
\end{lemma}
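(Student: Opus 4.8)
The plan is to handle the two assertions separately, using throughout only the defining relation $tc=\sigma(c)t$ in $\Sf$ and the fact that right-reduction by a monic polynomial is left-$\ring$-linear. I take $1\le k<\de$, the meaning of ``degree $k$'', so that $\alpha t^k$ and every $t^j$ with $j<\de$ are genuine monomials, and I use that $\tau$ commutes with $\sigma$.

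\emph{The congruence $k\equiv 1 \mod\os$.} This is the exact analogue of Lemma~\ref{L:Ktsigmatoitself}. In $\Sf$ we have $tc=\sigma(c)t$ for every $c\in\ring$, a degree-one identity (valid as $\de\ge 2$). Applying $G=G_{\tau,\alpha,k}$ gives $G(t)G(c)=G(\sigma(c))G(t)$, and since $k<\de$ both sides simplify in $\Sg$ without any $g$-reduction: $G(t)G(c)=\alpha t^k\tau(c)=\alpha\,\sigma^k(\tau(c))\,t^k$ while $G(\sigma(c))G(t)=\tau(\sigma(c))\,\alpha\,t^k$. These are scalar multiples of the single basis monomial $t^k$, so I equate coefficients; cancelling $\alpha$ (using commutativity of $\ring$ and cancellability of $\alpha$) yields $\sigma^k(\tau(c))=\tau(\sigma(c))$ for all $c\in\ring$. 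As $\tau$ commutes with $\sigma$ and is injective, this forces $\sigma^{k-1}=\id$, that is $k\equiv 1 \mod\os$.

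\emph{The unit conclusion $\alpha\in\ring^\times$.} The key computation is that, because $G(t^i)=G(t)^i$ is the (unambiguous) left-nested product and reduction by $g$ is left-$\ring$-linear, one obtains for every $i$ the clean formula
$$
G(t^i)=(\alpha t^k)^i=N_i^{\sigma^k}(\alpha)\,\bigl(t^{ik}\,{\rm mod}_r\, g\bigr),
$$
valid for arbitrary monic $g$. Thus every coordinate of $G(t^i)$ in the monomial basis of $\Sg$ lies in the principal ideal $\bigl(N_i^{\sigma^k}(\alpha)\bigr)$. Writing $G(t^i)=\sum_j M_{ji}t^j$, the map $G$ is a $\tau$-semilinear $\ring$-module bijection precisely when $M=(M_{ji})\in\mathrm{GL}_\de(\ring)$; since column $i$ of $M$ lies in $\bigl(N_i^{\sigma^k}(\alpha)\bigr)$, the Leibniz expansion places $\det M$ in the ideal $\bigl(\prod_{i=0}^{\de-1}N_i^{\sigma^k}(\alpha)\bigr)$. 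As $\det M$ is a unit, each factor must be a unit; taking $i=1$ gives $N_1^{\sigma^k}(\alpha)=\alpha\in\ring^\times$.

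\emph{The coprimality $\gcd(k,\de)=1$, and the main obstacle.} Once $\alpha\in\ring^\times$, all $N_i^{\sigma^k}(\alpha)$ are units, so $\{G(t^i)\}_{i=0}^{\de-1}$ is an $\ring$-basis of $\Sg$ if and only if $\{t^{ik}\,{\rm mod}_r\, g\}_{i=0}^{\de-1}$ is. When the target has the constacyclic form $\Sone{b}$, the power formula of Theorem~\ref{T:powerassociative} gives $G(t^i)=(\text{unit})\cdot t^{[ik]_\de}$, so $M$ is a generalized permutation matrix; were $i\mapsto[ik]_\de$ not injective, two columns would be supported on the same row and $\det M$ would vanish, so invertibility forces $i\mapsto[ik]_\de$ to be a bijection of $\Z/\de\Z$, i.e. $\gcd(k,\de)=1$. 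I expect the genuine difficulty to sit exactly here for a fully general monic $g$: the reduced powers $t^{ik}\,{\rm mod}_r\, g$ need no longer be pure monomials, so $M$ is only a ``staircase'' matrix, and ruling out $\gcd(k,\de)>1$ requires tracking leading terms through the reduction — using that right-division by $g$ strictly lowers degree — and, crucially, invoking the multiplicativity of $G$ to pin down enough structure of $g$. Everything else is the routine skew-commutation bookkeeping carried out above.
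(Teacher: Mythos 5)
Your congruence claim ($k\equiv 1\bmod\os$) is proved exactly as in the paper (it is the same computation as Lemma~\ref{L:Ktsigmatoitself}), and your determinant argument for $\alpha\in\ring^\times$ is correct: the formula $G(t^i)=N_i^{\sigma^k}(\alpha)\,\bigl(t^{ik}\,\mathrm{mod}_r\,g\bigr)$ does hold for arbitrary monic $g$, because reduction mod $g$ is a map of left $\Stsigma$-modules, and the Leibniz expansion then forces each $N_i^{\sigma^k}(\alpha)$ to be a unit. This is a different and more robust route than the paper's, which disposes of $\alpha$ in one line (``if $\alpha$ were not invertible, then $t^k$ could not lie in the image of $G$''). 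The genuine gap is the coprimality claim: you establish $\gcd(k,\de)=1$ only when $g=t^{\de}-b$, and explicitly leave general monic $g$ open, whereas the lemma asserts it for all monic $g$. The paper closes this step with an idea you did not use: the inverse $H=G^{-1}$ of a monomial isomorphism is again monomial, say $H(t)=\beta t^{\ell}$, and then $t=H(G(t))=\tau^{-1}(\alpha)N_k^{\sigma^{\ell}}(\beta)\,\bigl(t^{\ell k}\,\mathrm{mod}_r\,f\bigr)$ is read as forcing $\ell k\equiv 1\bmod \de$, hence $\gcd(k,\de)=1$.

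That said, your suspicion that the general case hides a real difficulty is well founded, and no amount of tracking leading terms through the reduction will close it: the paper's final deduction silently uses that $t^{\ell k}\,\mathrm{mod}_r\,f$ is a unit multiple of $t^{[\ell k]_{\de}}$, which is automatic when $f=t^{\de}-a$ but fails for general monic $f$. In fact the isomorphism half of the lemma is false in the stated generality. Take $\ring=\F_2$, $\sigma=\tau=\id$ (so $\os=1$), $f=g=t^4-t$, $\alpha=1$, $k=2$: then $G_{\id,1,2}$ is the Frobenius map $x\mapsto x^2$ on $\F_2[t]/(t^4-t)\cong\F_2\times\F_2\times\F_4$, a monomial automorphism of degree $2$, yet $\gcd(2,4)=2$. (Here $H=G^{-1}=G$ is itself monomial with $\ell=2$, and $H(G(t))=t^4\,\mathrm{mod}_r\,(t^4-t)=t$ even though $\ell k\equiv 0\bmod 4$; this is precisely the step of the paper's proof that breaks for non-binomial $f$.) So both your argument and the paper's really belong to the setting in which reduced powers of $t$ stay monomial, i.e.\ to sources and targets of the form $\Sone{a}$ --- which is the only case in which the paper later invokes this lemma (Theorem~\ref{prop:G}). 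Restricted to that case (with $b$ not a zero divisor, as Theorem~\ref{T:powerassociative} requires), your proof is complete and correct.
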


\begin{proof}
    If $G=G_{\tau,\alpha,k}$ is a homomorphism, then the identity $G(td)=G(\sigma(d)t)$ implies $\alpha t^k \tau(d) = \tau(\sigma(d)) \alpha t^k$ for all $d\in \ring$, whence we must have
$$
\sigma^k(\tau(d)) = \tau(\sigma(d))
$$
for all $d\in \ring$.  Thus since $\tau$ and $\sigma$ commute, $G(td)=G(\sigma(d)t)$ is equivalent to  $k \equiv 1 \mod \os$.

If $G$ is to be an isomorphism, there must exist an inverse homomorphism $H$.
Evidently $H(d)=\tau^{-1}(d)$ for all $d\in \ring$.  Since $G$ is monomial, so is its inverse, and thus we may assume $H(t)=\beta t^\ell$ for some $\beta \in \ring^\times$ and $1\leq \ell < \de$.  Since $H(G(t))=t$ we must have $\ell k \equiv 1 \mod \de$ and in particular $\gcd(k,\de)=1$. Moreover, if $\alpha$ were not invertible, then $t^k$ could not lie in the image of $G$.
\end{proof}

\begin{theorem}\label{prop:G}
    Suppose $\tau\in \Aut(\ring)$ commutes with $\sigma$ and $\alpha \in \ring$ is cancellable.  Then for any $2\leq k < \de$, the map
    $$
    G=G_{\tau,\alpha,k}:\mathbb{S}_{t^\de -a}\to \mathbb{S}_{t^\de -b}
    $$
    is a homomorphism of rings if and only if all of the following conditions hold:
    \begin{enumerate} \setlength{\itemsep}{0pt}
        \item \label{C:k=1} $k\equiv 1 \mod \os$;
        \item \label{C:n|m} $\os|\de$;
        \item \label{C:inF} $a, b\in \ring_0$;
        \item \label{C:norm} $(N_{\ring/\ring_0}(\alpha))^{\de/\os}b^k=\tau(a)$.
    \end{enumerate}
    It is an isomorphism if and only if, in addition, we have
    \begin{enumerate} \setcounter{enumi}{4}
        \item \label{C:relprime} $\gcd(k,\de)=1$ and $\alpha\in \ring^\times$.

    \end{enumerate}
\end{theorem}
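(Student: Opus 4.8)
The plan is to prove both biconditionals by reducing, as usual, to checking the homomorphism property on the generators $t$ and $d\in\ring$ and on the products $t^it^j$, using the power associativity of $\scz=\alpha t^k$ to give meaning to $G(t^i)=\scz^i$. Throughout I would exploit that $\tau$ commutes with $\sigma$, so that $\tau$ preserves $\ring_0$ and $N_{\ring/\ring_0}(\alpha)\in\ring_0$, and that $a,b\in\ring^\times$ by the very definition of the ambient algebras.

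For the ``only if'' direction, suppose $G$ is a homomorphism. First, the relation $G(td)=G(\sigma(d)t)$ forces $k\equiv 1\bmod\os$ by Lemma~\ref{L:k=1}, giving condition~\eqref{C:k=1} and hence $\sigma^k=\sigma$ and $N^{\sigma^k}_s=N^\sigma_s$. Since $G$ is a homomorphism, each $G(t^s)=\scz^s$ is well-defined for $0\le s\le\de$; and for the least $r$ with $rk\ge\de$, which satisfies $r+1\le\de$ because $k\ge 2$ forces $r\le\lceil\de/2\rceil\le\de-1$, the identity $t^{r+1}=t\cdot t^r=t^r\cdot t$ in $\Sone{a}$ yields, upon applying $G$, that $\scz\,\scz^r=\scz^r\,\scz$. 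This is exactly the hypothesis of Theorem~\ref{T:powerassociative}, so \eqref{E:associativityrelation} holds, $\scz$ is fully power associative, and $G(t^s)=\scz^s$ is given by \eqref{G(t)^s}. Writing $\scz^i=\lambda_i t^{[ik]_{\de}}$, the relation $G(t^i)G(d)=G(\sigma^i(d))G(t^i)$ reduces, after cancelling $\lambda_i$ and using $\tau\sigma=\sigma\tau$, to $[ik]_{\de}\equiv i\bmod\os$; since $k\equiv 1$ this says $\lfloor ik/\de\rfloor\,\de\equiv 0\bmod\os$ for all $i$, and taking $i=r$ (where $\lfloor rk/\de\rfloor=1$) gives $\os\mid\de$, condition~\eqref{C:n|m}, whence $\sigma^{\de}=\id$. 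Feeding $\sigma^{\de}=\id$ and $\sigma^k=\sigma$ into \eqref{E:associativityrelation} and cancelling $\alpha$ gives $\sigma(b)=b$. Finally $t^{\de}=a$ gives $\tau(a)=G(t^{\de})=\scz^{\de}$, and evaluating \eqref{G(t)^s} at $s=\de$ produces $\scz^{\de}=N_{\ring/\ring_0}(\alpha)^{\de/\os}b^k$, which is condition~\eqref{C:norm}; since the right-hand side lies in $\ring_0$ and $\tau(\ring_0)=\ring_0$, we also get $a\in\ring_0$, completing condition~\eqref{C:inF}.

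For the converse, assume \eqref{C:k=1}--\eqref{C:norm}. Because $\sigma^{\de}=\id$ and $\sigma(b)=b$, relation \eqref{E:associativityrelation} holds, so Theorem~\ref{T:powerassociative} makes $\scz$ power associative, $G(t^s)=\scz^s$ well-defined, and $G$ a well-defined $\ring_0^\tau$-linear map. I would then verify multiplicativity on generators: the identity $G(t^i)G(d)=G(\sigma^i(d))G(t^i)$ holds because $[ik]_{\de}\equiv i\bmod\os$ follows from \eqref{C:k=1} and \eqref{C:n|m}; and $G(t^i)G(t^j)=\scz^i\scz^j=\scz^{i+j}$ by power associativity equals $G(t^it^j)$ directly when $i+j<\de$, while for $i+j\ge\de$ one writes $\scz^{i+j}=\scz^{i+j-\de}\scz^{\de}$ with $\scz^{\de}=\tau(a)\in\ring_0$ by \eqref{C:norm}, and uses $a\in\ring_0$ so that $\scz^{i+j-\de}\tau(a)=\tau(a)\scz^{i+j-\de}=\tau(\sigma^{i+j-\de}(a))\scz^{i+j-\de}=G(t^it^j)$.

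For the isomorphism statement, the forward direction is immediate from Lemma~\ref{L:k=1}, which gives $\gcd(k,\de)=1$ and $\alpha\in\ring^\times$, condition~\eqref{C:relprime}. For the converse, when $\gcd(k,\de)=1$ the residue map $s\mapsto[sk]_{\de}$ permutes $\{0,\dots,\de-1\}$, and when $\alpha\in\ring^\times$ (with $b\in\ring_0^\times$, since $b\in\ring^\times\cap\ring_0$) every coefficient $\lambda_s$ in $G(t^s)=\lambda_s t^{[sk]_{\de}}$ is a unit; as $\tau$ is bijective on $\ring$, the map $G$ carries a basis to a basis up to units and is therefore bijective. The step I expect to be the main obstacle is the ``only if'' reduction: justifying purely from the homomorphism property (together with the relation $t^{\de}=a$) that $\scz$ must be \emph{fully} power associative, i.e.\ that the hypothesis $\scz^r\scz=\scz\,\scz^r$ of Theorem~\ref{T:powerassociative} is forced, since it is this that unlocks the explicit formula \eqref{G(t)^s} from which conditions~\eqref{C:n|m}--\eqref{C:norm} are extracted.
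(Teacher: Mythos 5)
Your proposal is correct and follows essentially the same route as the paper's own proof: Lemma~\ref{L:k=1} supplies conditions (1) and (5), Theorem~\ref{T:powerassociative} (whose hypothesis $\scz^r\scz=\scz\scz^r$ you force from the identity $t\cdot t^r=t^r\cdot t$ in $\Sone{a}$) yields \eqref{E:associativityrelation} and the formula \eqref{G(t)^s}, and the scalar-commutation relations $G(t^s)G(d)=G(\sigma^s(d))G(t^s)$ then extract $\os\mid\de$, $b\in\ring_0$, and condition (4) exactly as in the paper. The only differences are that you make explicit some steps the paper compresses, namely the choice $s=r$ with $\lfloor rk/\de\rfloor=1$ to force $\os\mid\de$, the monomial-by-monomial verification in the converse, and the basis-permutation argument for bijectivity.
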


\begin{proof}
Let $\tau\in \Aut(\ring)$ commute with $\sigma$. Fix $\alpha\in \ring^\times$ and  $2\leq k<\de$.  Suppose first that the map $ G=G_{\tau,\alpha,k}:\mathbb{S}_{t^\de -a}\to \mathbb{S}_{t^\de -b}$
defined by the rule
\begin{equation}\label{E:formulaG(t)}
G\left( \sum_{i=0}^{\de-1}a_it^i \right) = \sum_{i=0}^{\de-1}\tau(a_i) (\alpha t^k)^i
\end{equation}
is a well-defined homomorphism.

Conditions (\ref{C:k=1}) and (\ref{C:relprime}) follow from Lemma~\ref{L:k=1}.

For $G$ to be well-defined, we require for all $0<s <\de$ that $G(t^s)=G(t)^s$ is well-defined.  Since $k>1$, the minimal value of $r$ such that $rk\geq \de$ satisfies $0<r<\de$ and thus by Theorem~\ref{T:powerassociative}, we conclude that $\alpha b = \sigma^{\de}(\alpha)\sigma^k(b)$.  Conversely, Theorem~\ref{T:powerassociative} further assures us that this hypothesis implies all powers of $\alpha t^k$ are well-defined so \eqref{E:formulaG(t)} gives a well-defined map.

Next, we note from Theorem~\ref{T:powerassociative} that for any $0<s<\de$, the element $G(t^s)=G(t)^s \in \mathbb{S}_{t^\de -b}$ will be a monomial of degree $[sk]_{\de}$, the residue of $sk$ mod $\de$. In order for $G$ to further preserve the relation $t^sd=\sigma^s(d)t^s$ for all $d\in \ring$ we require $G(t^s)\tau(d) = \tau(\sigma^s(d))G(t^s)$, which will hold if and only if
$$
\sigma^{[sk]_{\de}}(\tau(d))=\tau(\sigma^{s}(d))
$$
for all $d\in \ring$. Since $k\equiv 1 \mod \os$, and $\tau\sigma=\sigma\tau$, this is equivalent to the requirement that $\sigma^{sk}=\sigma^{[sk]_{\de}}$ for all $0<s<\de$, which holds if and only if $\sigma^{\de}=1$.  Thus a second necessary condition is (\ref{C:n|m}), that $\os|\de$, and it in return implies $G(t^s)\tau(d) = \tau(\sigma^s(d))G(t^s)$ for all $0<s<\de$ and $d\in \ring$.

Since $\os\mid\de$ and $k\equiv 1 \mod \os$, the relation $\alpha b = \sigma^{\de}(\alpha)\sigma^k(b)$ simplifies to $\sigma(b)=b$, or $b\in \ring_0$, yielding part of the condition (\ref{C:inF}).

Next, $G$ must satisfy $G(t^{\de}-a)=0$.  By the above, we may evaluate $G(t^{\de})=G(t)^{\de}$ using the formula \eqref{G(t)^s}.  With $\sigma^k=\sigma$ and $\sigma^{\de}=1$, this yields
$$
G(t)^{\de} = N_{\de}^{\sigma}(\alpha)\left(\prod_{i=1}^{k} b\right)t^{0} = (N^\sigma_{\os}(\alpha))^{\de/\os} b^k = (N_{\ring/\ring_0}(\alpha))^{\de/\os}b^k.
$$
Now since $G(t^{\de})=G(a)$, this is equivalent to the condition (\ref{C:norm})
$$
(N_{\ring/\ring_0}(\alpha))^{\de/\os}b^k=\tau(a).
$$
In particular $\tau(a)\in \ring_0$, so $a\in \ring_0$, which is the rest of condition (\ref{C:inF}).

Note that if $G$ is an isomorphism, its inverse map must be a homomorphism.  If $G^{-1}(t)=\beta t^\ell$, then $\ell \equiv 1 \mod \os$; in fact, $\ell$ is simply the inverse of $k$ mod $\de$ (as guaranteed by condition (\ref{C:relprime})).

Finally, if $k,\de,\os,\alpha,a,b,\tau$ satisfy the first four conditions, then the $\ring_0^\tau$-linear map $G:\mathbb{S}_{t^\de -a}\to \mathbb{S}_{t^\de -b}$
%$G:\Sone{a}\to \Sone{b}$
given by \eqref{E:formulaG(t)} is a well-defined homomorphism of $\ring_0^\tau$-modules, and respects the relations $G(t^{\de})=G(a)$ and $G(t^su)=G(\sigma^s(u)t^s)$ for all $0<s<\de$ and all $u\in \ring$.  Therefore in particular it satisfies $G(p)G(q)=G(pq)$ and $G(p+q)=G(p)+G(q)$ for all $p,q\in \Sonenew{a}$, which implies it is an homomorphism of $\ring_0^\tau$-algebras.  The final condition ensures it is an isomorphism.
\end{proof}

Note that when $\os\geq \de -1$, Theorem \ref{prop:G} shows the only isomorphisms between $\mathbb{S}_{t^\de -a}$ and $\mathbb{S}_{t^\de -b}$
are monomial of degree one, recovering the result in Theorem \ref{T:BPonlyonesareGtaualpha}.

An immediate consequence is the realization that there are very few monomial homomorphisms of degree greater than one in this setting.

\begin{corollary}\label{C:onlyweightone}
    Suppose $\os\nmid \de$, or that one of $a$ or $b$ is not in $\ring_0$.  Then  any Hamming-weight preserving homomorphism
    $$
    G : \mathbb{S}_{t^\de -a}\to \mathbb{S}_{t^\de -b} 
    $$
    will be of degree one, that is, will be of the form $G=G_{\tau,\alpha}$ for some $\tau\in \Aut(S)$ commuting with $\sigma$, and some $\alpha \in \ring$ that is not a zero divisor.
\end{corollary}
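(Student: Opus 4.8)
The plan is to deduce this corollary from Theorem~\ref{prop:G} once I have reduced to the monomial case. First I would argue that any nonzero Hamming-weight-preserving homomorphism $G:\Sone{a}\to\Sone{b}$ must be monomial, i.e. of the form $G_{\tau,\alpha,k}$. Since $t$ has Hamming weight one, so does $G(t)$, forcing $G(t)=\alpha t^k$ with $\alpha\in\ring$ nonzero and $0\le k<\de$. The value $k=0$ is ruled out because it would send $G(t)$ into $\ring$ and hence collapse every $G\bigl(\sum a_it^i\bigr)=\sum\tau(a_i)\alpha^i$ to a single constant, destroying the weight of any element of weight greater than one. Similarly, weight-preservation on $\ring$ together with additivity and $G(1)=1$ forces $G(\ring)\subseteq\ring$, so that $G|_{\ring}=\tau$ is an injective endomorphism of $\ring$ which, in our setting, lies in $\Aut(\ring)$ and commutes with $\sigma$. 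That $\alpha$ is not a zero divisor follows from weight-preservation applied to the powers $t^s$: each $G(t^s)=(\alpha t^k)^s$ must again be a weight-one monomial, which fails if a relevant product of $\sigma$-powers of $\alpha$ vanishes. Hence $G=G_{\tau,\alpha,k}$ with $1\le k<\de$ and $\alpha$ cancellable.

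Next I would argue by contradiction. Suppose $k\ge 2$. Then Theorem~\ref{prop:G} applies verbatim, and the mere existence of the monomial homomorphism $G_{\tau,\alpha,k}$ forces all four of its conditions; in particular condition~(\ref{C:n|m}) gives $\os\mid\de$ and condition~(\ref{C:inF}) gives $a,b\in\ring_0$. But the hypothesis of the corollary is exactly the negation of this conjunction: it asserts that $\os\nmid\de$, or that one of $a,b$ lies outside $\ring_0$. This contradiction shows that no monomial homomorphism of degree $k\ge 2$ can exist under either hypothesis.

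Therefore $k=1$, so $G=G_{\tau,\alpha}$ with $\tau\in\Aut(\ring)$ commuting with $\sigma$ and $\alpha$ not a zero divisor, which is precisely the asserted conclusion; if desired one records via Proposition~\ref{L:Gtaualpha} the additional constraint $\tau(a)=N_{\de}^{\sigma}(\alpha)\,b$. The main obstacle is not the final logical step, which is an immediate consequence of Theorem~\ref{prop:G}, but rather the reduction at the start: one must carefully confirm that Hamming-weight preservation forces $G$ to be monomial, ruling out the degenerate case $k=0$ and verifying both that $G|_{\ring}$ is an automorphism and that $\alpha$ is cancellable. After that, the corollary follows because the stated hypotheses are precisely designed to violate conditions~(\ref{C:n|m}) and~(\ref{C:inF}).
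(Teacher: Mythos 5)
Your overall architecture matches the paper's proof: weight preservation forces $G(t)$ to have Hamming weight one, hence $G$ is monomial; then Theorem~\ref{prop:G} shows that a monomial homomorphism of degree $k\geq 2$ can only exist when $\os\mid\de$ \emph{and} $a,b\in\ring_0$, which is precisely what the corollary's hypothesis rules out; hence $k=1$. Your additional care in excluding $k=0$ and in checking $G(\ring)\subseteq\ring$ is fine (the paper leaves these points implicit).

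However, your justification that $\alpha$ is not a zero divisor fails, and this is a genuine gap: cancellability of $\alpha$ is both part of the conclusion and a standing hypothesis of Theorem~\ref{prop:G}, so you cannot invoke that theorem without first securing it. You test weight preservation only on the powers $t^s$, arguing that the coefficient of $G(t^s)=(\alpha t^k)^s$ --- a product of $\sigma$-powers of $\alpha$ times units coming from $b$ --- would vanish if $\alpha$ were a zero divisor. That implication is false: a zero divisor need not have vanishing products of its $\sigma$-conjugates. For example, with $\alpha=2$ in $\ring=\mathbb{Z}/6\mathbb{Z}$ one has $2^s\neq 0$ for all $s$, so every $(\alpha t^k)^s$ is a nonzero monomial, even though $\alpha$ is a zero divisor (it kills $3$). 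The correct test --- and the paper's actual argument --- is on the elements $ct$ with $c\in\ring$ nonzero: these have weight one, and $G(ct)=\tau(c)\alpha t^k$, so weight preservation forces $\tau(c)\alpha\neq 0$ for every nonzero $c$; since $\tau$ is surjective, this says exactly that $\alpha$ is not a zero divisor (in the example above, $G(3t)=0$ is what violates weight preservation, not any power of $t$). With this one-line replacement your proof is complete and coincides with the paper's.
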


\begin{proof}
    If $G$ is Hamming-weight preserving, then it must send $t$ to an element of Hamming-weight $1$, hence it is monomial.  Further, the coefficient $\alpha$ in this case cannot be a zero divisor, or else there would be some $c\in \ring$ for which $G(ct)=0$.      By Theorem~\ref{prop:G}, the degree of $G$ cannot be greater than one.
\end{proof}

When $a,b\in \ring_0$, on the other hand, Theorem~\ref{prop:G} gives a plethora of monomial homomorphisms $G:\mathbb{S}_{t^\de -a}\to \mathbb{S}_{t^\de -b}$ 
of degree $k>1$.

\begin{example}\label{Examplewehave}
    Suppose $\os=2,\de=4$ and choose $k=\ell=3$. For any $\ring$, and any $r=\alpha \sigma(\alpha)\in N_{\ring/\ring_0}(\ring^\times)$, the map $G=G_{\id,\alpha,3}$ induces an isomorphism
    $$
  \mathbb{S}_{t^4 -r^2b^3}\to \mathbb{S}_{t^4 -b} 
    $$
    for any $b\in \ring^\times$.
\end{example}

\section{Applications}\label{S:applications}

In Section~\ref{S:4}, we classified all monomial homomorphisms between Petit algebras of the form $ \mathbb{S}_{t^\de -a}$. %$=\Sone{a}$.
Since the left ideals of these algebras give rise to skew constacyclic codes, we have classified the Hamming-weight preserving maps, or isometries, between their corresponding skew constacyclic codes.

\subsection{Isometric constacyclic codes}\label{subsec:iso}

In \cite{OuazzouNajmeddineAydin2025}, the authors explore Hamming-weight preserving isomorphisms between nonassociative algebras of the form $$\Kone{b}$$ over a finite field $\field=\mathbb{F}_{p^r}$ with respect to a Frobenius automorphism $\sigma(a)=a^{p^s}$ for some $s|r$, as an important step towards classifying skew constacyclic codes of length $\de$ with identical performance parameters.  This is necessary to avoid duplicating already existing codes. To this end, they proposed notions of $(\de,\sigma)$-equivalence and $(\de,\sigma)$-isometry, respectively.

We say two rings $\Kone{a}$ and $\Kone{b}$ are \emph{$(\de,\sigma)$-equivalent} if they are isomorphic via an isomorphism of the form $G_{\id,\alpha}$ for some $\alpha\in \field^\times$, and
\emph{$(\de,\sigma)$-isometric} if they are isomorphic via an isomorphism of the form $G_{\id,\alpha,k}$ for some $\alpha\in \field^\times$ and $k\geq 1$.

By Corollary~\ref{C:onlyweightone}, we conclude that these notions coincide whenever $a$ or $b$ lies in $\field_0=\F_{p^s}$, and whenever the order of $\sigma$ (which is $r/s$) does not divide $\de$.

Inspired by their work, we now propose a more complete notion of equivalence for such rings.

\begin{definition}\label{D:isometric}
    Two rings $ \mathbb{S}_{t^\de -a}$ and $ \mathbb{S}_{t^\de -b}$
    %$\Sone{a}$ and $\Sone{b}$
    are called \emph{isometric} if there exists a monomial isomorphism $G_{\tau,\alpha,k}$ between them for some $\alpha \in \ring^\times$ and $1\leq k<\de$. As a special case, we say they are \emph{equivalent} if they are isomorphic via an isomorphism of the form $G_{\tau,\alpha}$ for some $\alpha\in \ring^\times$ and some $\tau\in \Aut(\ring)$ commuting with $\sigma$.

Two classes ${\bf C}_a$  and ${\bf C}_b$ of skew $(\sigma,a)$-constacyclic codes of length $\de$, respectively skew $(\sigma,b)$-constacyclic codes of length $\de$, are called
 \emph{isometric}, if there exists an isometry
$$G_{\tau,\alpha,k}:  \mathbb{S}_{t^\de -a}\to  \mathbb{S}_{t^\de -b}
$$ and \emph{equivalent}  if there exists an isometry
$$G_{\tau,\alpha}: \mathbb{S}_{t^\de -a}\to  \mathbb{S}_{t^\de -b}.
$$
\end{definition}

Thus if two rings are $(\de,\sigma)$-isometric, or $(\de,\sigma)$-equivalent, then they are isometric, and if they are  $(\de,\sigma)$-equivalent then they are equivalent. Moreover, this definition captures all of the possible Hamming-weight preserving isomorphisms between these ambient rings, making it the correct notion of equivalence between codes, as the isometries $G_{\tau,\alpha,k}: \mathbb{S}_{t^\de -a}\to  \mathbb{S}_{t^\de -b}$
preserve the dimension of the codes as well. This seems counterintuitive if $k>1$ so let us give a clarifying example.

\begin{example}
    In the setting of Example~\ref{Examplewehave}, choose $w,b\in S_0$ to satisfy $b^3=w^4$.  Then $G=G_{\id,1,3}:\mathbb{S}_{t^4-b^3} \to \mathbb{S}_{t^4-b} 
    $ is a monomial isomorphism of degree $3$.  In this case, it is easy to check that $t-w$ is a right divisor of $t^4-b^3=(t^3+wt^2+w^2t+w^3)(t-w)$ in $\Stsigma$, so it generates an ideal $C$ (that is, a skew $(4,\sigma)$-constacyclic code) of dimension $4-1=3$.

    On the other hand,  $G(t-w)=t^3-w$ is not a divisor of $t^4-b$ in $\Stsigma$ but (as one computes directly) $G(t^3+wt^2+w^2t+w^3)G(t-w)=0$ in $\mathbb{S}_{t^4-b}$,
    as expected.  Thus it would be erroneous to conclude that $\dim(G(C))=\de-\dim(G(t-w))$.  Instead, we identify the polynomial $c(t)$ of least degree in $G(C)$ (by computing $t^s(t^3-w)$ for $s=0,1,2,\cdots$).  In this case, we find that $G(C)$ is also generated by its skew-cyclic shift $$(-w^{-1}t)(t-w) = t-bw^{-1} \in \mathbb{S}_{t^4-b}=\Sone{b}.$$  We verify this new choice of generator is indeed a right divisor of $t^4-b$ in $\Stsigma$ since
    $$
    t^4-b = (t^3-bw^{-1}t^2-b^2w^{-2}t+b^3w^{-3})(t-bw^{-1}).
    $$
\end{example}

Now suppose two classes ${\bf C}_a$ and ${\bf C}_b$ are  isometric.  Then $t^n -a$ and $t^{n'}-b$ must have the same degree $\mathbf{n}$, so the skew constacyclic codes in ${\bf C}_a$ and ${\bf C}_b$ are $S$-submodules of $S^{\mathbf{n}}$.
 Every skew $(\sigma,a)$-constacyclic code $C$ in ${\bf C}_a$ is generated by a monic $g\in R$ that is a right divisor of $t^\mathbf{n}-a$ in $\Stsigma$, therefore every skew $(\sigma,a)$-polycyclic code $C$ in ${\bf C}_b$ corresponds to a free (and cyclic) $S$-submodule $Rg/R(t^\mathbf{n}-a)$. We also know that $C$ has dimension $k=\mathbf{n}-\deg(g) $ (\emph{e.g.}, see \cite[Theorem 2.9 (i)]{Pumpluen2025}).

\begin{lemma}\label{le:pres}
   If two classes ${\bf C}_a$ and ${\bf C}_b$ are  isometric via some monomial isomorphism $G_{\tau,\alpha,\ell}$ of degree $\ell$, then the skew $(\sigma,a)$-constacyclic  codes in ${\bf C}_a$ have the same Hamming distance, dimension and length as
  the  skew $(\sigma,a)$-constacyclic  codes in ${\bf C}_b$.
   \end{lemma}

 \begin{proof}
By assumption, there exists an isomorphism of nonassociative ambient Petit rings $G_{\tau,\alpha,\ell}:\mathbb{S}_{t^\de-a} \to \mathbb{S}_{t^\de-b} $. This immediately implies that the length of the codes in both classes is $\mathbf{n}$, and so the ring isomorphism $G_{\tau,\alpha,\ell}$ preserves the length of codes. Since the morphism is monomial, it sends monomials of $\Sonenew{a}$ to monomials of $\Sonenew{b}$ and so preserves Hamming distance.

Let $g$ be a right divisor of $t^\de-a$ in $\Stsigma$; then the ideal generated by $g$ in  $\mathbb{S}_{t^\de-a}$, 
call it $C$,  is a skew $(\sigma,a)$-constacyclic code and all $(\sigma,a)$-constacyclic codes arise in this way \cite[Section 2.2]{Pumpluen2025}.  The image of a right ideal of $\mathbb{S}_{t^\de-a}$
under the isomorphism $G_{\tau,\alpha,k}$ is a right ideal of $\mathbb{S}_{t^\de-b}$, 
since $G_{\tau,\alpha,\ell}(ug)=G_{\tau,\alpha,\ell}(u)G_{\tau,\alpha,\ell}(g)$ for all $u\in \Sone{a}$.  Thus we have a one-to-one correspondence of the codes in $\mathbf{C}_a$ and those in $\mathbf{C}_b$.  Since the ring isomorphism $G_{\tau,\alpha,\ell}$ is in particular a $\tau$-semilinear bijective map between the free $S$-modules $\Stsigma/\Stsigma(t^\mathbf{n} -a)$ and $\Stsigma/\Stsigma(t^\mathbf{n} -b)$, it restricts to a  $\tau$-semilinear bijective map between the free $S$-modules $\Stsigma g/\Stsigma(t^\mathbf{n} -a)$ and $\Stsigma G_{\tau,\alpha,\ell}(g)/\Stsigma(t^\mathbf{n} -b)$. Therefore both $C$ and $G_{\tau,\alpha,k}(C)$ must have the same dimension.
\end{proof}

\begin{remark}
When $S$ is a finite ring we can also argue as follows to get some intuition on why the dimension of a skew constacyclic code will be by preserved also under isometries $G_{\tau,\alpha,\ell}$. Let $R=\Stsigma$, for short.  Every ring isomorphism $G_{\tau,\alpha,\ell}$ canonically yields a one-one correspondence between the left ideal $Rg/R(t^\mathbf{n} -a)$ and the left ideal $RG_{\tau,\alpha,\ell}(g)/R(t^\mathbf{n}-b)$, which means that $|Rg/R(t^\mathbf{n}-a)|=|R G_{\tau,\alpha,\ell}(g)/R(t^\mathbf{n}-b)|$ have the same number of elements. We know that both are free submodules of $S^{\mathbf{n}}$, that means $|Rg/R(t^\mathbf{n} -a)|=|S|^k$ and $|G_{\tau,\alpha,\ell}(g)/R(t^\mathbf{n} -b)|=|S|^{k'}$ for some positive integers $k$ and $k'$, so that $k=k'$ and both must have the same dimension.
\end{remark}

\subsection{A surprising consequence}

The surprising scarcity of monomial isomorphisms (as shown here in Theorem~\ref{prop:G})  sheds light on an error in  \cite[Theorem 4]{OuazzouNajmeddineAydin2025}, where four statements are said to be equivalent.  In fact, trying to uncover the truth of this theorem was the impetus for our study.

Set $\field=\mathbb{F}_q=\mathbb{F}_{p^r}$ with primitive element $\xi$, and let  $\sigma(x)=x^{p^s}$. (Note that the order of $\sigma$ is $\frac{r}{\gcd (r,s)}$, not $s$, as stated.)
Then $N_{\de}^\sigma(x)=x^{[\de]_{s}}$ for all $x\in \field$ where
$$
[\de]_{s}=\frac{p^{s\de}-1}{p^{s}-1}=p^{s(\de-1)}+p^{s(\de-2)}+\cdots + p^{s} + 1.
$$
Then the subgroup of $\field^\times$ consisting of all possible values of $N_{\de}^\sigma(\alpha)$ is the one generated by $N_{\de}^\sigma(\xi)=\xi^{[\de]_{s}}$.

 In  \cite{OuazzouNajmeddineAydin2025},  $a,b\in K^\times$ are called \emph{$(\de,\sigma)$-equivalent} if  $\Konenew{a}:=\Kone{a}$ and $\Konenew{b}:=\Kone{b}$ are $(\de,\sigma)$-equivalent
  and
\emph{$(\de,\sigma)$-isometric} if  $\Konenew{a}$ and $\Konenew{b}$ are $(\de,\sigma)$-isometric.
Thus briefly, \cite[Theorem 4]{OuazzouNajmeddineAydin2025} asserted the equivalence of:
\begin{description}
\item[(1)] $\Konenew{a}$ and $\Konenew{b}$ are $(\de,\sigma)$-isometric;
\item[(2) and (3)] $\langle a, N_{\de}^\sigma(\xi)\rangle=\langle b, N_{\de}^\sigma(\xi)\rangle$;
\item[(4)] $\Konenew{b^k}$ and $\Konenew{a}$ are $(\de,\sigma)$-equivalent, where $k$ is the degree of the isometry in (1).\footnote{The actual conditions on $k$ specified were: $0<k<[\os]_{s}$, $\gcd(k,[\os]_{s})$.}
\end{description}
Note that (1) implies (2) and (3) by Proposition~\ref{L:Gtaualpha} (for monomial homomorphisms of degree $1$) and Theorem~\ref{prop:G} (for monomial homomorphisms of degree $k>1$, when they exist).
However, the converse implication fails, as illustrated in the following counterexample.

\begin{example}\label{Eg:1}
   Choose the parameters $p,r,s,\de$ to satisfy that $(p^{r}-1)|[\de]_{s}$.  For example, one may take $p=3$, $r=2$, $s=1$ and $\de=4$, so that $\os=2$ and $p^r-1=8$ (meaning $\xi^8=1$) and $[\de]_{s}=40$.
    Then $\langle \xi^{[\de]_{s}}\rangle = \{1\}$ and in this case there are very few isomorphisms between these Petit rings.

    We now choose $a$ so that $a$ and $b=a^k$ are two primitive elements of the field that are not Galois conjugates of one another; together with the triviality $N_{\de}^\sigma(\alpha)$ for all $\alpha$, this guarantees that $\tau(a)\notin bN_{\de}^\sigma(\alpha)$.

    For example, let $a=\xi$ and $b=a^k$ for some $2\leq k < p^r-1$ such that $k\neq p^\ell$ for any $\ell$ (ensuring $b\neq \tau(a)$ for any $\tau\in \Gal(\field/\mathbb{F}_p)$) and $\gcd(k,p^{2}-1)=1$ (ensuring $b$ is another primitive element).  Then we have
    $$
    \langle a, \xi^{[\de]_{s}}\rangle = \field^\times = \langle b, \xi^{[\de]_{s}}\rangle
    $$
    so that (2) holds.  Now we claim (1) fails.

    Since $N_{\de}^\sigma(\alpha)=1$ for all $\alpha$, Proposition~\ref{L:Gtaualpha} implies that if there exists an monomial isomorphism of degree one $G_{\tau,\alpha}:\Konenew{a}\to\Konenew{b}$ we would have $\tau(a)=b$, which is false. Moreover, since $a\notin \field_0$, condition Theorem~\ref{prop:G}[(3)] fails, so there are no monomial homomorphisms of degree $k$ (for any $k>1$).  Thus $\Konenew{a}$ and $\Konenew{b}$ are not isometric, and (1) fails.
\end{example}

The issue is that the condition for isomorphism, that is, that $a\in b\langle \xi^{[\de]_{s}}\rangle$, is strictly weaker than (2), even if one would consider the full class of isomorphisms of the form $G_{\tau,\alpha,k}$.

Further: taking $k=1$ we see that (1) implies (4), but again our results show that the converse is false.  We offer the same counterexample.

\begin{example}
We return to the setting of Example~\ref{Eg:1}, where $b=a^k$.
Let $S=K$ be a field. Then $\id:\Konenew{a^k}\to\Konenew{b}$ is an isomorphism satisfying (4), but we have already shown that $\Konenew{a}$ and $\Konenew{b}$ are not isometric, so again (1) fails.
\end{example}

These errors stemmed from an expectation of the existence of many monomial homomorphisms of degree other than one.  They
invalidate  the proofs of
   \cite[Theorem 5, Remark 3, Corollary 1,  3, Remark 4]{OuazzouNajmeddineAydin2025} and the proposed classification of skew constacyclic codes up to  $(\de,\sigma)$-isometry in \cite[Sections 5]{OuazzouNajmeddineAydin2025}.

    On the bright side, our results show that in many cases, the notions of $(\de,\sigma)$-isometry and  $(\de,\sigma)$-equivalence coincide (as do our more general notions of isometry and equivalence), so that employing $(\de,\sigma)$-isometry as an equivalence relation to partition skew constacyclic codes will in many cases suprisingly not lead to a finer partition of skew constacyclic codes with the same performance parameters, as we might have all expected initially. This gives considerably more weight to the results obtained in  \cite[Sections 6, 7]{OuazzouNajmeddineAydin2025}.

\subsection{An interesting correction}
Let us now formulate which results do hold and generalize them immediately to the setting of unital commutative rings $\ring$ and allow $\tau$ to be a nontrivial automorphism.

\begin{lemma}\label{le:3.2 general with b in S.II}
Let  $\sigma\in {\rm Aut}(\ring)$ have order $\os$, fixed field $\ring_0$, and let $a,b,\alpha\in S^\times$.
If  there is a Hamming-weight preserving isomorphism
$G:\mathbb{S}_{t^\de-a} \to \mathbb{S}_{t^\de-b}$ with $G|_S=\tau$, where $\tau\in \Aut(\ring)$ commutes with $\sigma$,
 then
 $$\langle b, N_{\de}^\sigma (S^\times) \rangle= \langle \tau(a), N_{\de}^\sigma (S^\times)\rangle.$$
 In particular, the subgroups in $S^\times/N_{\de}^\sigma(S^\times)$ generated by $\tau(a)$ and $b$ are the same.
\end{lemma}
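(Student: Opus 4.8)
The plan is to reduce the existence of a Hamming-weight preserving isomorphism $G:\Sone{a}\to\Sone{b}$ to the explicit structural constraints already extracted in Section~\ref{S:4}, and then read off the claimed equality of subgroups. Since $G$ is Hamming-weight preserving, Corollary~\ref{C:onlyweightone} together with the proof technique there forces $G$ to be monomial, say $G=G_{\tau,\alpha,k}$ with $\alpha\in\ring^\times$ not a zero divisor and $1\leq k<\de$. I would split into the two cases $k=1$ and $k>1$, which are governed respectively by Proposition~\ref{L:Gtaualpha} and Theorem~\ref{prop:G}.

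In the degree-one case, Proposition~\ref{L:Gtaualpha} gives directly $\tau(a)=N_{\de}^\sigma(\alpha)b$. Since $\alpha\in\ring^\times$, the element $N_{\de}^\sigma(\alpha)$ lies in $N_{\de}^\sigma(\ring^\times)$, so $\tau(a)$ and $b$ differ by an element of that subgroup. This immediately yields $\langle \tau(a),N_{\de}^\sigma(\ring^\times)\rangle=\langle b,N_{\de}^\sigma(\ring^\times)\rangle$, as the two generators agree modulo $N_{\de}^\sigma(\ring^\times)$. In the degree-$k$ case with $k>1$, Theorem~\ref{prop:G} tells us $\os\mid\de$, that $a,b\in\ring_0$, that $k\equiv 1\bmod\os$, and crucially condition (\ref{C:norm}):
$$
\tau(a)=\left(N_{\ring/\ring_0}(\alpha)\right)^{\de/\os}b^k.
$$
Here $N_{\ring/\ring_0}(\alpha)=N_{\os}^\sigma(\alpha)$ and one checks $\left(N_{\os}^\sigma(\alpha)\right)^{\de/\os}=N_{\de}^\sigma(\alpha)$ using the norm relation \eqref{E:Normrelation} and $\sigma^{\os}=\id$. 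Thus $\tau(a)=N_{\de}^\sigma(\alpha)\,b^k$, so modulo $N_{\de}^\sigma(\ring^\times)$ we have $\tau(a)\equiv b^k$, whence $\tau(a)\in\langle b,N_{\de}^\sigma(\ring^\times)\rangle$, giving one containment of subgroups.

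For the reverse containment in the $k>1$ case I would use the invertibility guaranteed by condition (\ref{C:relprime}): $\gcd(k,\de)=1$ and $\alpha\in\ring^\times$. Since $k$ is invertible modulo the order of $\field^\times/N_{\de}^\sigma(\field^\times)$ (a divisor of $\de$, because every $\de$th power of an element lies in the norm image via $x^\de=N_{\de}^\sigma(x)$ when $\sigma^{\de}=\id$ on the relevant piece), the relation $\tau(a)\equiv b^k$ can be inverted to express $b$ as a power of $\tau(a)$ modulo $N_{\de}^\sigma(\ring^\times)$, yielding $b\in\langle\tau(a),N_{\de}^\sigma(\ring^\times)\rangle$. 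Combining the two containments gives the equality of subgroups, and the final sentence of the statement follows since equality of the subgroups $\langle x,N\rangle$ for $x\in\{\tau(a),b\}$ is exactly equality of the cyclic subgroups generated by their images in $\ring^\times/N_{\de}^\sigma(\ring^\times)$. The main obstacle I anticipate is the reverse containment in the degree-$k$ case: I must argue carefully that $k$ acts invertibly on the quotient group $\ring^\times/N_{\de}^\sigma(\ring^\times)$, using that $\gcd(k,\de)=1$ and that the exponent of this quotient divides $\de$; for a general commutative ring $\ring$ (rather than a finite field) this requires justifying that $\de$th powers land in $N_{\de}^\sigma(\ring^\times)$, which follows from $x^{\de}=N_{\de}^\sigma(x)$ whenever $x\in\ring_0$, the case into which Theorem~\ref{prop:G} has already placed $a$ and $b$.
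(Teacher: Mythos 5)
Your proof is correct, and your forward containment (Proposition~\ref{L:Gtaualpha} for $k=1$, Theorem~\ref{prop:G} with its norm condition for $k>1$) is exactly the paper's argument; where you genuinely diverge is the reverse containment. The paper gets it by symmetry: since $G$ is a Hamming-weight preserving isomorphism, so is $G^{-1}$, hence $G^{-1}=G_{\tau^{-1},\beta,\ell}$ is again monomial, and running the same argument on it yields $\langle \tau^{-1}(b),N_{\de}^\sigma(\ring^\times)\rangle\subset\langle a,N_{\de}^\sigma(\ring^\times)\rangle$, which becomes the desired containment after applying $\tau$ and using $\tau(N_{\de}^\sigma(\ring^\times))=N_{\de}^\sigma(\ring^\times)$ (as $\tau$ commutes with $\sigma$). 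You instead invert the relation $\tau(a)\equiv b^k$ arithmetically: Theorem~\ref{prop:G} forces $b\in\ring_0^\times$, so $b^{\de}=N_{\de}^\sigma(b)\in N_{\de}^\sigma(\ring^\times)$, and writing $km=1+\de n$ (possible since $\gcd(k,\de)=1$) gives $b\equiv \tau(a)^m$ modulo norms. Both routes work; the paper's is shorter and uniform across the two cases (no Bezout needed), while yours avoids having to observe that the inverse of a monomial isomorphism is again monomial. One caution on phrasing: your parenthetical claim that the exponent of $\ring^\times/N_{\de}^\sigma(\ring^\times)$ divides $\de$ is neither justified in general nor needed --- the identity $x^{\de}=N_{\de}^\sigma(x)$ holds only for $x\in\ring_0$ --- but since you apply the inversion only to $b$, which lies in $\ring_0^\times$ in the $k>1$ case, the argument stands provided you state it as a claim about the order of the image of $b$ in the quotient rather than about the whole quotient group.
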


\begin{proof}
  Suppose there  exists a Hamming-weight preserving ring isomorphism
    given by  $G:\mathbb{S}_{t^\de-a} \to \mathbb{S}_{t^\de-b}
    %\Sone{a}\to \Sone{b}
    $
    such that $G|_S=\tau$ and $G(t)=\alpha t^k$ for some $\alpha\in S^\times$, $0<k\leq \de-1$.

 We now distinguish two cases, employing Theorem~\ref{prop:G}:

 If $a\in S\setminus S_0$ or $a\in S_0^\times$ but $\os\nmid\de$ then $k=1$ and so $\tau(a)=N_{\de}^\sigma(\alpha) b$
by Proposition~\ref{L:Gtaualpha},
hence $\tau(a)$ lies in the subgroup of $S^\times$ that is generated by $b$ and $N_{\de}^\sigma( S^\times) $. This means $\langle \tau(a), N_{\de}^\sigma( S^\times) \rangle \subset \langle b, N_{\de}^\sigma( S^\times ) \rangle$.

If $a,b\in S_0^\times$ and $\os|\de$ then $k>1$ is possible, and we have by Theorem~\ref{prop:G} that
 $k\equiv 1 \mod \os$ and $\tau(a)=N_{\de}^\sigma(\alpha)b^k$.
This implies that $gcd(k,\de)=1$ and thus
 $\tau(a)$ lies in the subgroup of $S^\times$ that is generated by $b$ and $N_{\de}^\sigma( S^\times)$. This means $\langle \tau(a), N_{\de}^\sigma( S^\times) \rangle \subset \langle b, N_{\de}^\sigma( S^\times ) \rangle$.

In both cases, the inverse isomorphism is given by $G_{\tau^{-1},\beta,\ell}$ for some $\beta\in \ring^\times$ and $0<\ell\leq \de-1$, so applying the same arguments yields $\langle \tau^{-1}(b),N_{\de}^\sigma(\ring^\times)\rangle \subset \langle a, N_{\de}^\sigma(\ring^\times)\rangle.$  Since $\tau$ commutes with $\sigma$, $\tau(N_{\de}^\sigma(\ring^\times))=N_{\de}^\sigma(\ring^\times)$ and thus we obtain the desired equality.
\end{proof}

As pointed out in Example~\ref{Eg:1}, the statement in the lemma is not an equivalence, in general, not even over finite fields.

\begin{proposition}\label{t:code}
Let  $\sigma\in {\rm Aut}(\ring)$ have order $\os$, fixed field $\ring_0$, and let $a,b,\alpha\in S^\times$.  Let  $\tau\in \Aut(\ring)$ be any automorphism that commutes with $\sigma$.
\begin{enumerate}[(a)]
\item
 Suppose  $G_{\tau,\alpha,k}:\mathbb{S}_{t^\de-a} \to \mathbb{S}_{t^\de-b} %\Sone{a}\to\Sone{b}
 $ is an isomorphism with $k>1$ for some $\alpha\in \ring^\times$.  Then $k\equiv 1 \mod \os$,  $\gcd(k,\de)=1$, and $G_{\tau,\alpha}:\mathbb{S}_{t^\de-a} \to \mathbb{S}_{t^\de-b^k} %\Sone{a}\to \Sone{b^k}
 $ is also an isomorphism.

\item  Then $G_{\tau,\alpha}:\mathbb{S}_{t^\de-a} \to \mathbb{S}_{t^\de-b} %\Sone{a}\to \Sone{b}
$ is an isomorphism   if and only if $G_{id,\alpha}:\mathbb{S}_{t^\de-\tau(a)} \to \mathbb{S}_{t^\de-b} %\Sone{\tau(a)}\to \Sone{b}
$ is an isomorphism.

\item
 Also, for any $a\in \mathrm{Fix}(\tau)=\{b\in \ring:\tau(b)=b\}$, we have $G_{\id,\alpha}:\mathbb{S}_{t^\de-a} \to \mathbb{S}_{t^\de-b} 
 $ is an isomorphism if and only if $G_{\tau,\alpha}:\mathbb{S}_{t^\de-a} \to \mathbb{S}_{t^\de-b}$  is an isomorphism.
 \end{enumerate}
\end{proposition}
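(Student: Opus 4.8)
The plan is to prove the three parts of Proposition~\ref{t:code} by reducing each to the explicit conditions furnished by Theorem~\ref{prop:G} and Proposition~\ref{L:Gtaualpha}, since these together characterize exactly when monomial maps $G_{\tau,\alpha,k}$ are isomorphisms. For part (a), I would start from the hypothesis that $G_{\tau,\alpha,k}:\Sone{a}\to\Sone{b}$ is an isomorphism with $k>1$. By Theorem~\ref{prop:G}, being an isomorphism of degree $k>1$ forces conditions (\ref{C:k=1})--(\ref{C:relprime}) to hold; in particular $k\equiv 1\bmod\os$, $\gcd(k,\de)=1$, $\os\mid\de$, $a,b\in\ring_0$, and $(N_{\ring/\ring_0}(\alpha))^{\de/\os}b^k=\tau(a)$. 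The goal is then to show $G_{\tau,\alpha}:\Sone{a}\to\Sone{b^k}$ is an isomorphism of degree one. By Proposition~\ref{L:Gtaualpha}, this holds if and only if $\tau(a)=N_{\de}^\sigma(\alpha)\,b^k$ together with $\alpha\in\ring^\times$. Since $\os\mid\de$, we have $N_{\de}^\sigma(\alpha)=(N_{\ring/\ring_0}(\alpha))^{\de/\os}$, so the degree-$k$ norm condition (\ref{C:norm}) becomes precisely the degree-one condition for the target $b^k$. Thus part (a) reduces to a direct substitution once the norm identity $N_{\de}^\sigma=(N_{\ring/\ring_0})^{\de/\os}$ is invoked.

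For part (b), I would use Proposition~\ref{L:Gtaualpha} in both directions. The map $G_{\tau,\alpha}:\Sone{a}\to\Sone{b}$ is an isomorphism if and only if $\tau(a)=N_{\de}^\sigma(\alpha)b$ and $\alpha\in\ring^\times$, while $G_{\id,\alpha}:\Sone{\tau(a)}\to\Sone{b}$ is an isomorphism if and only if $\id(\tau(a))=N_{\de}^\sigma(\alpha)b$ and $\alpha\in\ring^\times$. Since $\id(\tau(a))=\tau(a)$, these two conditions are literally identical, so the equivalence is immediate. Part (c) is similar: when $a\in\mathrm{Fix}(\tau)$, the condition for $G_{\id,\alpha}:\Sone{a}\to\Sone{b}$ to be an isomorphism is $a=N_{\de}^\sigma(\alpha)b$, while for $G_{\tau,\alpha}$ it is $\tau(a)=N_{\de}^\sigma(\alpha)b$; since $\tau(a)=a$ by hypothesis, these coincide, so the two maps are isomorphisms under exactly the same circumstances.

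The expected main obstacle, such as it is, lies in part (a): one must verify that the target algebra $\Sone{b^k}$ is genuinely the one whose defining relation matches the degree-one norm condition. The subtlety is ensuring the exponent $k$ on $b$ transfers correctly — that is, confirming that the reduction of the degree-$k$ condition (\ref{C:norm}) to the degree-one condition of Proposition~\ref{L:Gtaualpha} is valid precisely because $b\in\ring_0$ (so $b^k$ is unaffected by the twist $\sigma^s$ in the various powers, and $N_{\de}^\sigma(\alpha)=(N_{\ring/\ring_0}(\alpha))^{\de/\os}$ collapses cleanly). I would also check that $\Sone{b^k}$ is again proper nonassociative or associative consistently with $b^k\in\ring_0^\times$, but since $b\in\ring_0^\times$ implies $b^k\in\ring_0^\times$, the target is well-formed and $\alpha$ remains invertible throughout, so no further case analysis is needed. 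Parts (b) and (c) are essentially immediate applications of Proposition~\ref{L:Gtaualpha} and present no real difficulty.
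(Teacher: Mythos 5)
Your proposal is correct and takes essentially the same route as the paper's proof: part (a) combines Theorem~\ref{prop:G} with Proposition~\ref{L:Gtaualpha}, using the identity $N_{\de}^\sigma(\alpha)=(N_{\ring/\ring_0}(\alpha))^{\de/\os}$ (valid since $\os\mid\de$) to match the degree-$k$ norm condition to the degree-one condition for the target $\Sone{b^k}$, and parts (b) and (c) are immediate from the criterion $\tau(a)=N_{\de}^\sigma(\alpha)b$ of Proposition~\ref{L:Gtaualpha}. If anything, you are more explicit than the paper, which invokes the norm identity only implicitly when restating condition (4) of Theorem~\ref{prop:G} as $\tau(a)=N_{\de}^\sigma(\alpha)b^k$.
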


\begin{proof}
(a)  Suppose there  exists an isomorphism $$
G_{\tau,\alpha,k}:\mathbb{S}_{t^\de-a} \to \mathbb{S}_{t^\de-b} 
$$ for some $\alpha\in S^\times$, $1< k<\de$.
By Theorem~\ref{prop:G}, since $k>1$, we must have
 $a,b\in S_0^\times$ and $\os|\de$.  Moreover,  by Theorem~\ref{prop:G} we know
 $k\equiv 1 \mod \os$,  $\gcd(k,\de)=1$, and $\tau(a)=N_{\de}^\sigma(\alpha)b^k$.
Therefore by Proposition~\ref{L:Gtaualpha},
$$G_{\tau,\alpha}: \mathbb{S}_{t^\de-a} \to \mathbb{S}_{t^\de-b^k} 
$$ is  an isomorphism.

\noindent (b) By Proposition~\ref{L:Gtaualpha}, $G_{\tau,\alpha}$ is an isomorphism if and only if  $\tau(a)= N_{\de}^\sigma(\alpha)b$. This is equivalent to
$$G_{id,\alpha}: \mathbb{S}_{t^\de-\tau(a)} \to \mathbb{S}_{t^\de-b} 
$$
being a ring isomorphism that restricts to the identity on $\ring_0$.

\noindent (c) is a special case of (b).
\end{proof}

Let us point out that Proposition~\ref{t:code} (b) highlights the advantage of choosing equivalence  over $(\de,\sigma)$-equivalence: when $\tau(a)= N_{\de}^\sigma(\alpha)b$ then the class of skew $(\sigma,a)$-constacyclic codes is equivalent to the class of skew $(\sigma,b)$-constacyclic codes,
but so will be the class of
$(\sigma,\tau(a))$-constacyclic codes for any $\tau$ commuting with $\sigma$.

It also shows that  the class of skew $(\sigma,a)$-constacyclic codes is equivalent to the class of skew $(\sigma,b)$-constacyclic codes via $G_{\tau,\alpha}$, if and only if  the class of skew $(\sigma,\tau(a))$-constacyclic codes is
$(\de,\sigma)$-equivalent to the class of $(b,\sigma)$-constacyclic codes.

Moreover,
(a) implies  that two classes of skew constacyclic codes  that are $(\de,\sigma)$-isometric and related to principal left ideals in $\mathbb{S}_{t^\de-a}$,  %$\Sone{a}$,
respectively to principal left ideals in $\mathbb{S}_{t^\de-b}$, 
will generate two classes of codes in $\mathbb{S}_{t^\de-a}$,
respectively, in $\mathbb{S}_{t^\de-b}$, 
that are $(m,\sigma)$-equivalent.

In \cite[Theorem 4]{OuazzouNajmeddineAydin2025} the authors claim that the number of $(\de,\sigma)$-isometry classes is equal to the number of divisors of $\gcd([\de]_s,p^r-1)$, but as this was based on the false equivalence (1) $\iff$ (2), the answer was incorrect.

We derive the answer for the case that $\mathbb{S}_{t^\de-a}=\Sone{a}$ is nonassociative, that is, when $(\de,\sigma)$-equivalence and $(\de,\sigma)$-isometry coincide.  Note that the proof does not require that $t^n-a$ is reducible.

\begin{theorem}\label{c:Ouazzoufinite}
    Suppose $K=\mathbb{F}_{p^r}$ and $\sigma(x)=x^{p^s}$ with $s|r$ so that $\os=r/s$ and $\field_0=\mathbb{F}_{p^s}$.
    The number of distinct $(\de,\sigma)$-isometry classes of families of skew $(\sigma,a)$-constacyclic codes arising from nonassociative rings $\Kone{a}$ is $N$ where
    $$
     N= \begin{cases}
     \gcd([\de]_{s},p^r-1)&\text{if $\os\nmid \de$; and}\\
    \left(1-\frac{1}{[\os]_s}\right)\gcd([\de]_{s},p^r-1) & \text{if  $\os|\de$.}
    \end{cases}
    $$
    There are additionally $\gcd([\de]_{s},p^r-1)/[\os]_s$ different skew $(\de,\sigma)$-equivalence classes (and thus at most this many $(\de,\sigma)$-isometry classes) of families of skew $(\sigma,a)$-constacyclic codes arising from associative rings $\Kone{a}$, that is, for which $\os\mid \de$ and $a\in \field_0$.
    \end{theorem}

\begin{proof}
We start by recalling that the notions of  $(\de,\sigma)$-equivalence and $(\de,\sigma)$-isometry coincide  when $\Konenew{b}:=\Kone{b}$ is a proper nonassociative ring.

    For any $a,b\in K$, the  class of skew $(\sigma,a)$-constacyclic codes is $(\de,\sigma)$-equivalent to the class of skew $(\sigma,b)$-constacyclic codes if and only if
    $$
    b \in aN_{\de}^\sigma(\alpha)
    $$
    for some $\alpha \in K^\times$, that is, if and only if $a$ and $b$ lie in the same coset of the subgroup $N_{n}^\sigma(K^\times)$.
    Let $\xi$ be a primitive element of $K$; it has order $p^r-1$.  Then $N_{\de}^\sigma(K^\times)$ is generated by $N_{\de}^\sigma(\xi) = \xi^{[\de]_s}=\xi^w$ with $w=\gcd([\de]_s,p^r-1)$, and thus the order of this group is
    $$
    \frac{p^r-1}{w}.
    $$

    First suppose that $\os\nmid \de$, so that all of the rings $\Konenew{a}$ are not associative.  In this case, every coset $aN_{\de}^\sigma(K^\times)$ corresponds to a class of $(\sigma,a)$-constacyclic codes such that $\Konenew{a}$ is not associative, and there are
    $$
    |K|/|N_{\de}^\sigma(K^\times)|= \gcd([\de]_s,p^r-1)=w
    $$
    such classes.

    Now suppose $\os|\de$, so that we are in the nonassociative case if and only if $a,b\in \field\setminus \field_0$. Recall $\os=r/s$.  Then $N_{\de}^\sigma(\mathbb{F}_{p^r})\subset \mathbb{F}_{p^s}$, so that every coset of $N_{\de}^\sigma(\mathbb{F}_{p^r})$ is either entirely contained in $\mathbb{F}_{p^s}$  or entirely disjoint from it (so corresponds to a nonassociative Petit ring).

Note that $\os|\de$ implies  $r|s\de$ so that  $p^r-1$ divides $p^{s\de}-1$.  Since $[\de]_s=\frac{p^{s\de}-1}{p^s-1}$ we deduce that $[\os]_s=(p^r-1)/(p^s-1)$ divides $w=\gcd([\de]_s,p^r-1)$.  With this in mind we now separate the two cases.

    The total number of cosets contained in $\mathbb{F}_{p^s}$, and thus corresponding to associative rings $\Kone{a}$ with $a\in \mathbb{F}_{p^s}$, is equal to
    $$
    |\mathbb{F}_{p^s}^\times|/|N_{\de}^\sigma(K^\times)| = \frac{p^s-1}{(p^r-1)/w} = \frac{w}{[\os]_s} \in \mathbb{N}.
    $$
In that case, we have shown that the notions of $(\de,\sigma)$-isometry and $(\de,\sigma)$-equivalence need not coincide.  Thus these $(\de,\sigma)$-isometry classes may be bigger (thus fewer in number), yielding the last sentence of the theorem.

Finally, subtracting these $\frac{w}{[\os]_s}$ cosets from the total number of cosets gives the number of $(\de,\sigma)$-isometry classes for nonassociative rings in the case that $\os|\de$, that is, that $N=w-w/[\os]_s$, as required.
\end{proof}

This recovers, in particular, \cite[Theorem 6]{OuazzouNajmeddineAydin2025} as a special case, as this means that the number of  $(\de,\sigma)$-equivalence classes is $\gcd([\de]_{s},p^r-1)$.

\begin{example}
At one extreme, suppose that $N_{\de}^\sigma(\ring^\times)=\ring^\times$. Then  all of the rings $\mathbb{S}_{t^\de-a}=\Sone{a}$ are $(\de,\sigma)$-equivalent, and  in particular all  are $(\de,\sigma)$-equivalent to $\mathbb{S}_{t^\de-1}$ 
which corresponds to the class of  skew cyclic codes.  Moreover, equivalence and $(\de,\sigma)$-equivalence coincide in this case. When  $\os\nmid \de$, then the notions of equivalence,  isometry and  $(\de,\sigma)$-isometry coincide as well.
Hence for all $a\in S^\times$, the skew $(\sigma,a)$-constacyclic codes of length $\de$ over $S$ are equivalent to the skew cyclic codes of length $\de$.

As a special case, let $\field=\mathbb{F}_{p^r}$, $\sigma(x)=x^{p^s}$ with $s|r$  and $\field_0=\mathbb{F}_{p^s}$. If $\gcd([\de]_s,p^r-1)=1$
then $N_{\de}^\sigma(\field^\times)=\field^\times$ so  all $\Kone{a}$ are  $(\de,\sigma)$-equivalent, as was already observed in \cite[Corollary 2]{OuazzouNajmeddineAydin2025} and \cite[Proposition 1]{BoulanouarBatoulBoucher2021}.
\end{example}

\begin{example}\label{e:important}
At the other extreme, when
$\gcd([\de]_s,p^r-1)=p^r-1$ (as in Example~\ref{Eg:1}), then $N_{\de}^\sigma(\field^\times)=\{1\}$ and so no two distinct $a,b\in \field^\times$ will be  $(\de,\sigma)$-equivalent. Thus, there are a plethora of distinct classes of skew constacyclic codes up to $(\de,\sigma)$-equivalence.  That said, there always $r$ choices for $\tau\in {\rm Gal}(\mathbb{F}_{p^r}/\mathbb{F}_{p})$.

In this example, we deduce that the set $\{a^{p^{v}}\mid 0\leq v < r\}\subset \mathbb{F}_{q^r}$ is an equivalence class with $r$ elements.  That is, all of the corresponding $(\sigma,a^{p^v})$-constacyclic codes are equivalent. Thus there are in fact  fewer isometry classes overall. This clearly demonstrates that  using our refined notion of equivalence and isometry gives a tighter classification of these classes of codes.
    \end{example}

On the other hand,  when $\mathbb{S}_{t^\de-a}$ 
is associative we can prove a powerful equivalence (inspired by the flavour of \cite[Theorem 4]{OuazzouNajmeddineAydin2025}) that holds for  any  commutative ring $S$ and any $\tau$ commuting with $\sigma$.

\begin{theorem}\label{t:Ouazzou}
Suppose that $a,b\in S_0^\times$ and $\os|\de$.
 Assume that $\tau$ commutes with $\sigma$.
Then the following statements are equivalent for any integer $1\leq k<\de$:
\begin{enumerate}[(i)]
\item $\Sone{a}$ and $\Sone{b}$
are isometric via $G_{\tau,\alpha,k}$;
\item $\Sone{b^k}$ and $\Sone{a}$
are equivalent via $G_{\tau,\alpha}$, where $k$ satisfies $k\equiv 1 \mod \os$ and $\gcd(k,\de)=1$.

\item  $\tau(a)=N_{\de}^\sigma(\alpha)  b^k$ where $k\equiv 1 \mod \os$ and  $\gcd(k,\de)=1$.
\end{enumerate}
\end{theorem}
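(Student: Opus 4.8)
The plan is to treat condition (iii) as a central hub and establish the two equivalences (i) $\Leftrightarrow$ (iii) and (ii) $\Leftrightarrow$ (iii); since (iii) is a bare equation in $\ring^\times$ referring to no ambient algebra, it is the natural pivot between the two classification results at our disposal. The one preliminary I would record first is that, under the standing hypothesis $\os\mid\de$, the full norm collapses to a power of the relative norm: writing $\de=\os\cdot(\de/\os)$ and using $\sigma^{\os}=\id$ together with \eqref{E:Normrelation},
$$
N_\de^\sigma(\alpha)=\prod_{i=0}^{\de/\os-1}\sigma^{i\os}\left(N_\os^\sigma(\alpha)\right)=\left(N_{\ring/\ring_0}(\alpha)\right)^{\de/\os}.
$$
Consequently condition (iii) is literally condition~(\ref{C:norm}) of Theorem~\ref{prop:G}, which is the bridge that lets the two results speak to one another. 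I would also note at the outset that, by \cite[(15)]{Petit1967}, all of $\Sone{a}$, $\Sone{b}$, and $\Sone{b^k}$ are associative here, since their defining constants lie in $\ring_0$ and $\os\mid\de$; in particular $b^k\in\ring_0^\times$, so $\Sone{b^k}$ is a legitimate target algebra.

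For (i) $\Leftrightarrow$ (iii) I would split on the degree. If $k=1$ then $G_{\tau,\alpha,1}=G_{\tau,\alpha}$, and Proposition~\ref{L:Gtaualpha} gives that $G_{\tau,\alpha}\colon\Sone{a}\to\Sone{b}$ is an isomorphism exactly when $\tau(a)=N_\de^\sigma(\alpha)b$, which is (iii) with the congruence and coprimality conditions vacuously true. If $k\geq 2$ I would invoke Theorem~\ref{prop:G}: the map is an isomorphism if and only if its conditions (\ref{C:k=1})--(\ref{C:relprime}) all hold, and of these (\ref{C:n|m}) and (\ref{C:inF}) are automatic from the hypotheses $\os\mid\de$ and $a,b\in\ring_0^\times$, leaving precisely $k\equiv1\bmod\os$, $\gcd(k,\de)=1$, and $(N_{\ring/\ring_0}(\alpha))^{\de/\os}b^k=\tau(a)$ (the condition $\alpha\in\ring^\times$ being automatic, since $N_\de^\sigma(\alpha)=\tau(a)b^{-k}$ is then a unit); by the norm identity this last equation is $\tau(a)=N_\de^\sigma(\alpha)b^k$, i.e.\ (iii). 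As both Theorem~\ref{prop:G} and Proposition~\ref{L:Gtaualpha} are stated as biconditionals, this argument runs in both directions.

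For (ii) $\Leftrightarrow$ (iii) the key is to apply Proposition~\ref{L:Gtaualpha} with the target constant $b$ replaced by $b^k$: it then states that $G_{\tau,\alpha}\colon\Sone{a}\to\Sone{b^k}$ is an isomorphism if and only if $\tau(a)=N_\de^\sigma(\alpha)b^k$ and $\alpha\in\ring^\times$. The displayed equation is verbatim condition~(iii), and the unit hypothesis on $\alpha$ is either part of the word ``equivalent'' (Definition~\ref{D:isometric}) or, when starting from (iii), is forced by $N_\de^\sigma(\alpha)=\tau(a)b^{-k}\in\ring^\times$, since a product of the $\sigma^j(\alpha)$ being a unit makes each factor, and hence $\alpha$, a unit. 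The conditions $k\equiv1\bmod\os$ and $\gcd(k,\de)=1$ are simply carried along in both (ii) and (iii), so the equivalence is immediate once the correct target $\Sone{b^k}$ is identified.

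The one genuinely substantive point---what I would flag as the main obstacle---is conceptual rather than computational: recognizing that a degree-$k$ isometry into $\Sone{b}$ is the same datum as a degree-one equivalence into the \emph{different} algebra $\Sone{b^k}$. Once this shift of target is made, the remainder is bookkeeping: the norm identity reconciling $N_\de^\sigma(\alpha)$ with $(N_{\ring/\ring_0}(\alpha))^{\de/\os}$, the harmless $k=1$ versus $k\geq 2$ split forced by the standing hypothesis $k\geq 2$ in Theorem~\ref{prop:G}, and the automatic propagation of the unit condition on $\alpha$ through the norm. No nonassociativity subtleties intervene, precisely because the standing hypotheses place us in the associative regime.
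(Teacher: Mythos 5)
Your proposal is correct and rests on exactly the same machinery as the paper's proof: Theorem~\ref{prop:G}, Proposition~\ref{L:Gtaualpha}, and the identity $N_{\de}^{\sigma}(\alpha)=(N_{\ring/\ring_0}(\alpha))^{\de/\os}$ available because $\os\mid\de$, with the same observation that the unit condition on $\alpha$ propagates automatically through the norm equation. The only difference is organizational: you pivot both equivalences through (iii), whereas the paper proves (i)$\Leftrightarrow$(ii) directly by citing Proposition~\ref{t:code}(a) (whose content you have in effect inlined, since that proposition is itself deduced from Theorem~\ref{prop:G} and Proposition~\ref{L:Gtaualpha}) and then obtains (ii)$\Leftrightarrow$(iii) from Proposition~\ref{L:Gtaualpha}, so the two arguments coincide up to this re-triangulation.
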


\begin{proof}
The case $k=1$ is trivial, so we may assume $k>1$.

\noindent $(i) \Rightarrow (ii)$: If there exists an isomorphism
  $G_{\tau,\alpha,k}:\Sone{a}\to\Sone{b}$  with $k>1$ for some $\tau$ that commutes with $\sigma$ and some $\alpha\in \ring^\times$,  then
  $k\equiv 1 \mod \os$ and $\gcd(k,\de)=1$ by Theorem~\ref{prop:G} and
  we can construct a degree one isomorphism  $G_{\tau,\alpha}:\Sone{a}\to\Sone{b^k}$ by Proposition~\ref{t:code} (a).

  \noindent $(ii) \Rightarrow (i)$: If we have a degree one isomorphism  $G_{\tau,\alpha}:\Sone{a}\to\Sone{b^k}$
  such that $k\equiv 1 \mod \os$,  $\gcd(k,\de)=1$
  then $\tau(a)=N_{\de}^\sigma(\alpha)  b^k$
  implies that $\tau(a)=(N_{\ring/\ring_0}(\alpha))^{\de/\os}b^k$ holds,  hence $G_{\tau,\alpha,k}:\Sone{a}\to\Sone{b}$ is an isomorphism by Theorem~\ref{prop:G}.

  \noindent $(ii)\iff (iii)$ This is Proposition~\ref{L:Gtaualpha}.
\end{proof}

\section{On non monomial homomorphisms}\label{S:nonmonomial}

Having established the scarcity of monomial homomorphisms, we now show further that if $\ring$ is an integral domain (e.g., a field), then in a large number of cases every homomorphism between proper nonassociative rings of the form $\mathbb{S}_{t^\de-a}$ %$\Sone{b}$
is Hamming-weight preserving --- in fact, monomial of degree one, that is, is of the form $G_{\tau,\alpha}$ for some $\tau\in \Aut(\ring)$ commuting with $\sigma$ and nonzero $\alpha \in \ring$.

We begin by proving a generalization to the polynomial case of Lemma~\ref{L:k=1}.  We assume in this section that $\ring$ has no nonzero zero divisors, that is, it is an integral domain, such as a field.

\begin{theorem}
    \label{P:polyndivm}
    Suppose $f(t)\in \Stsigma$ is a monic polynomial of degree $\de$ and $b\in \ring^\times$.  Suppose that $G:\mathbb{S}_f \to \mathbb{S}_{t^\de-b}$
    %$G:\Sf\to \Sone{b}$
    is a nonzero ring homomorphism  whose restriction to $\ring$ is given by some automorphism $\tau$ commuting with $\sigma$, such that $G(t)$ is not a monomial.  Then $\os|\de$ and
    $G(t) = p(t^{\os})t$ for some $p\in \ring[x]$ of degree less than $\frac{\de}{\os}$.
\end{theorem}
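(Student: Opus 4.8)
The plan is to analyze the homomorphism $G$ by writing $G(t) = \sum_{j=0}^{\de-1} c_j t^j$ and exploiting the multiplicative compatibility of $G$ with the skew-commutation relation $t u = \sigma(u) t$ in $\Sf$. First I would apply $G$ to the defining relation $G(tu) = G(\sigma(u)t)$ for all $u \in \ring$. Since $G|_\ring = \tau$ and $G(t) = \sum_j c_j t^j$, this forces, coefficient by coefficient in powers of $t^j$ in $\Sone{b}$, the identity $c_j \sigma^j(\tau(u)) = \tau(\sigma(u)) c_j$ for every $u \in \ring$ and every $j$. Because $\tau$ commutes with $\sigma$, this rearranges to $c_j\bigl(\sigma^j(\tau(u)) - \sigma(\tau(u))\bigr) = 0$, and since $\ring$ is an integral domain and $\tau(u)$ ranges over all of $\ring$, each nonzero coefficient $c_j$ must satisfy $\sigma^j = \sigma$ on $\ring$, i.e. $j \equiv 1 \pmod{\os}$. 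This is the crucial step: it immediately restricts the support of $G(t)$ to those exponents $j$ congruent to $1$ modulo $\os$, so that $G(t) = \sum_{j \equiv 1\, (\os)} c_j t^j = \bigl(\sum_i c_{1+i\os} t^{i\os}\bigr) t = p(t^{\os}) t$ for a polynomial $p$, provided such nonconstant $p$ can occur at all.

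Next I would show that $\os \mid \de$. The idea is that $G(t)$ must land in $\Sone{b}$, a space of polynomials of degree at most $\de - 1$, so the allowed exponents $j \equiv 1 \pmod{\os}$ that appear must all be strictly less than $\de$; and one must check that the relation $t^{\de} = b$ (equivalently the vanishing of $f$ in $\Sf$) is respected by $G$. The natural approach is to use $G(f(t)) = 0$ in $\Sone{b}$ together with the hypothesis that $G(t)$ is genuinely non-monomial (so $p$ has degree $\geq 1$), and to track the degrees of $G(t)^{\de}$ modulo $t^{\de}-b$. Because $G(t) = p(t^{\os})t$, all of its powers remain supported on exponents $\equiv$ (power)$\pmod{\os}$, and for the image to be consistent with the constacyclic wraparound $t^{\de} = b$, one needs the residue bookkeeping to close up, which forces $\os \mid \de$. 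The bound $\deg p < \de/\os$ then follows because the highest exponent $1 + (\deg p)\os$ appearing in $G(t)$ must be at most $\de - 1$, giving $\deg p \leq (\de-2)/\os < \de/\os$.

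The step I expect to be the main obstacle is establishing $\os \mid \de$ rigorously, and with it ruling out the possibility that the non-monomial structure is incompatible with $G$ being a homomorphism unless $\os \mid \de$. The coefficient argument above cleanly controls $G(t)$ as a function of how $G$ interacts with scalars, but verifying the full multiplicativity $G(t^s \cdot t^\ell) = G(t^s)G(t^\ell)$ across the wraparound at $t^{\de}$ requires care: in $\Sf$ the powers of $t$ need not be associative when $\Sf$ is proper nonassociative, so I cannot freely manipulate $G(t)^{\de}$, and I must instead work with the products $t \cdot t^{\de-1}$ versus $t^{\de-1} \cdot t$ and their images, leveraging the specific form $f(t) = t^{\de} - f_0(t)$ as in Section~\ref{S:associativity}. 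I anticipate that reconciling these two associations of $G$ applied to a degree-$\de$ relation is exactly what forces the residues of all exponents to be compatible modulo $\os$ and hence $\os \mid \de$; the earlier power-associativity machinery of Theorem~\ref{T:powerassociative} and Lemma~\ref{L:keystep} may supply the needed identities, though here they must be adapted to a polynomial rather than monomial $G(t)$, which is where the real work lies.
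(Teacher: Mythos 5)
Your first step is correct and is exactly the paper's: applying $G$ to $tu=\sigma(u)t$ and comparing coefficients (no reduction occurs, since $\deg G(t)<\de$) kills every $c_j$ with $j\not\equiv 1 \bmod \os$, so $G(t)=p(t^{\os})t$, and your derivation of $\deg p<\de/\os$ is also fine. The gap is in the part you yourself flag as the obstacle, the claim $\os\mid\de$, and the route you sketch for it would not go through. First, for a general monic $f$ there is no usable relation ``$G(f(t))=0$'': in the nonassociative quotient $\Sf$ the avatar of $f$ is the family of relations $t^i\circ t^j=f_0(t)$ for $i+j=\de$, so applying $G$ produces identities $G(t)^iG(t)^j=\sum_l \tau(a_l)G(t)^l$ that drag in the unknown coefficients of $f_0$; and computing ``$G(t)^{\de}$'' from $(tp(t^{\os}))^{\de}\in\Stsigma$ would require many reduction steps modulo $t^{\de}-b$. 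Second, and more fundamentally, degree bookkeeping alone cannot force $\os\mid\de$: there is no a priori reason the wraparound $t^{\de+r}=\sigma^r(b)t^r$ must ``close up'' residues mod $\os$. The missing engine is the extension of your own step-1 argument to higher powers: for every $s<\de$ the relation $t^sc=\sigma^s(c)t^s$ holds in $\Sf$, so applying $G$ forces \emph{every} monomial with nonzero coefficient in $G(t^s)=G(t)^s\in\Sone{b}$ to have degree $\equiv s \pmod{\os}$. You never state this, and without it ``consistency with the wraparound'' has no content.

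With that relation in hand, the paper's argument is short and avoids $f_0$ entirely: take $k=\deg G(t)>1$ and $s>1$ minimal with $sk\geq\de$. Then $(s-1)k<\de$, so every monomial of $(tp(t^{\os}))^s\in\Stsigma$ has degree less than $2\de$, all of these degrees are $\equiv s\pmod{\os}$, and a \emph{single} reduction step computes $G(t)^s\in\Sone{b}$. The top term $t^{sk}$ reduces to degree $sk-\de$, and one must then split into two cases that your sketch does not see: either its coefficient survives the reduction, in which case the scalar relation above gives $sk-\de\equiv s\pmod{\os}$, i.e.\ $\os\mid\de$; or it is canceled, which (since $b\in\ring^\times$ and $\ring$ is a domain) can only happen against a term of degree $sk-\de$ already present in the unreduced expansion, so that $sk-\de\equiv s\pmod{\os}$ again. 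Finally, the machinery you hoped to adapt is not what is used here: comparing $t\cdot t^{\de-1}$ with $t^{\de-1}\cdot t$ is vacuous in the domain (both equal $f_0(t)$) and only yields that $G(t)$ commutes with $G(t)^{\de-1}$, which is the mechanism behind Theorem~\ref{thm:main} (where it proves $b\in\ring_0$ under the hypothesis \eqref{stareq}), not behind this statement; Theorem~\ref{T:powerassociative} and Lemma~\ref{L:keystep} are likewise not invoked in the paper's proof of this result.
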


\begin{proof}
    Write
    $$
    G(t) = \sum_{i=0}^{\de-1}c_i t^i
    $$
    for some $c_i\in \ring$.
    As in the proof of Lemma~\ref{L:k=1}, which holds for any commutative unital ring $S$, the condition $G(td)=G(\sigma(d)t)$ for all $d\in \ring$ implies that $c_i=0$ for all $i$ such that $i\not\equiv 1 \mod \os$. Therefore $G(t)$ takes the form given in the statement.  It remains to prove that $\os|\de$.

    As compared with the monomial case, the proof now is more intricate because each term of $G(t)^s$ is a complex polynomial.
    In the following, we differentiate between the expression $(tp(t^{\os}))^s$ in $\Stsigma$ and the value of $G(t)^s\in \mathbb{S}_{t^\de-b} %\Sone{b}
    $, that is, after reduction ${\rm mod}_r(t^{\de}-b)$ (whose well-definedness is the key question).

    Let $k=\deg(G(t))>1$.  Let $s>1$ be the least value for which $sk\geq \de$; since $k>1$ we have $s<\de$. Then since $G$ is homomorphism, $G(t)=tp(t^{\os})$ must be power associative for all powers less than $\de$ (since this is true in $\mathbb{S}_{f}$) %$\Sf$)
    and we have $G(t^s)=G(t)^s$.  Furthermore, since $G(t)^{s-1}=(tp(t^{\os}))^{s-1}$ has degree less than $\de$, all monomials appearing with nonzero coefficient in $(tp(t^{\os}))^s\in \Stsigma$ have degree less than $2\de$.  Thus we may compute $G(t)^s\in \mathbb{S}_{t^\de-b} %\Sone{b}
    $ by first evaluating the expression $(tp(t^{\os}))^s$ in $\Stsigma$ and then reducing modulo the relation $t^{\de+r}\equiv t^rb\equiv \sigma^r(b)t^r$.

    Write $\mathcal{D}^s$ for the set of all $d\in \mathbb{N}$ that occur as the degree of some monomial appearing in the expansion of $(tp(t^{\os}))^s\in \Stsigma$ and $\mathcal{B}^s$ for the set of all degrees appearing in $G(t)^s\in \mathbb{S}_{t^\de-b} %\Sone{b}
    $. Then $\mathcal{B}^s \subset \{0,1,\cdots,\de-1\}$ and $\mathcal{D}^s\subset \mathcal{B}^s + (\de+\mathcal{B}^s)$.

    Since each monomial of $G(t)=tp(t^{\os})$ has degree of the form $1+\ell \os$ for some $\ell\geq 0$, it follows that for each $d\in \mathcal{D}^s$ we have  $d\equiv s\mod \os$.
    For each such $d$ that satisfies $d\geq \de$, we obtain a monomial term $\lambda t^{d-\de}$ in the expression for $G(t)^s\in \mathbb{S}_{t^\de-b} $, for some $\lambda \in \ring$.

    If $\lambda = 0$, then some cancellation occurred, which necessarily implies that $d-\de=d'$ for some $d'\in \mathcal{D}^s$ satisfying $d'<\de$.  Subtracting $s$ from both sides yields $\de\equiv 0 \mod \os$, or $\os|\de$.

    If $\lambda \neq 0$, then we may write
    $$
    G(t)^s = \lambda t^{d-\de} + q(t)
    $$
    where $q(t)$ contains no monomials of degree $d-\de$.  Let $c\in \ring$.  Then the relation $t^sc=\sigma^s(c)t^s$ holds in the domain of $G$; since $G$ is a homomorphism, this implies $G(t)^s\tau(c)=\tau(\sigma^s(c))G(t)^s$.  This scalar relation holds on each monomial term and so in particular one must have
    $$
    \lambda t^{d-\de} \tau(c) = \tau(\sigma^s(c))\lambda t^{d-\de}.
    $$
    The expression on the left evaluates to  $\lambda \sigma^{d-\de}(\tau(c))t^{d-\de}$, whence, since $\lambda$ is not a  zero divisor, we infer
    $$
    \sigma^{d-\de}(\tau(c))=\tau(\sigma^s(c))
    $$
    for all $c\in \ring$. Since $\tau$ commutes with $\sigma$, this implies $d-\de\equiv s\mod \os$, and so again we deduce  $\de\equiv 0\mod \os$, or $\os|\de$, as required.
\end{proof}

\begin{corollary}\label{c:new}
If   $a,b\in \ring^\times$ and $\os \nmid\de$, then any nonzero homomorphism
 $G:\mathbb{S}_{t^\de-a} \to \mathbb{S}_{t^\de-b}$  whose restriction to $\ring$ is given by some automorphism $\tau$ commuting with $\sigma$, must be monomial of degree one.
\end{corollary}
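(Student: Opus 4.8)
The plan is to obtain this corollary as a synthesis of two earlier results: Theorem~\ref{P:polyndivm}, which already does the heavy lifting of constraining the shape of $G(t)$, and Theorem~\ref{prop:G}, which classifies monomial homomorphisms. Throughout we work in the integral domain $\ring$ of this section, and we identify $\Sone{a}$ with $\Sf$ for the monic polynomial $f(t)=t^{\de}-a$ of degree $\de$, so that Theorem~\ref{P:polyndivm} applies verbatim (its standing hypotheses $b\in\ring^\times$, $\tau$ commuting with $\sigma$, and $\ring$ an integral domain are all satisfied).

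First I would invoke Theorem~\ref{P:polyndivm} in contrapositive form. That theorem shows that a nonzero homomorphism $G:\Sf\to\Sone{b}$ with $G|_{\ring}=\tau$, where $\tau$ commutes with $\sigma$, and with $G(t)$ \emph{not} a monomial, forces $\os\mid\de$. Since we assume $\os\nmid\de$, the only way to avoid this conclusion is for $G(t)$ to be a monomial, say $G(t)=\alpha t^k$ with $0\le k<\de$. I would then record that $\alpha\neq 0$: otherwise $G(t)=0$ gives $\tau(a)=G(a)=G(t)\,G(t^{\de-1})=0$, using $t\cdot t^{\de-1}=a$ in $\Sone{a}$, which is impossible since $a\in\ring^\times$ and $\tau$ is an automorphism. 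Because $\ring$ is an integral domain, $\alpha\neq 0$ means $\alpha$ is not a zero divisor, so $G$ is exactly the monomial homomorphism $G_{\tau,\alpha,k}$ (the values on all $t^i$ being forced by multiplicativity) with cancellable leading coefficient.

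Next I would pin the degree to $k=1$. If $k\geq 2$, then Theorem~\ref{prop:G} states that $G_{\tau,\alpha,k}$ being a homomorphism requires condition~(2), namely $\os\mid\de$, contradicting our hypothesis. The degenerate case $k=0$ must be excluded separately: there $G(t)=\alpha$ is constant, and the relation $G(t)G(d)=G(\sigma(d))G(t)$ collapses, since $\ring$ is commutative and $\alpha$ is cancellable, to $\tau(d)=\tau(\sigma(d))$ for all $d\in\ring$, forcing $\sigma=\id$ and hence $\os=1\mid\de$, again a contradiction. Therefore $k=1$ and $G=G_{\tau,\alpha}$ is monomial of degree one, as claimed.

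I do not expect a serious obstacle here, as the real content lives entirely in Theorem~\ref{P:polyndivm}; the corollary is a bookkeeping argument that feeds its conclusion into Theorem~\ref{prop:G}. The only points requiring genuine care are the two boundary cases — verifying $\alpha\neq 0$ so that $G(t)$ is a nonzero monomial with cancellable coefficient, and ruling out the constant case $k=0$ — since these are precisely the situations the cited theorems do not cover directly.
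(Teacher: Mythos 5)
Your proof is correct and takes essentially the same route as the paper's: Theorem~\ref{P:polyndivm} (in contrapositive) rules out non-monomial $G(t)$ when $\os\nmid\de$, and the classification of monomial homomorphisms --- Theorem~\ref{prop:G}, which the paper invokes through Corollary~\ref{C:onlyweightone} --- forces the degree to be one. Your explicit handling of the boundary cases ($\alpha\neq 0$, so that the leading coefficient is cancellable, and the exclusion of $k=0$), which the paper's two-line proof leaves implicit, is a welcome addition of rigor.
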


\begin{proof}
    The theorem shows that there are no polynomial homomorphisms in this case and  Corollary \ref{C:onlyweightone} proves that under the same hypothesis any Hamming-weight-preserving homomorphism has degree one.
\end{proof}

 We now have another more precise result about the nonexistence of nonmonomial homomorphisms, subject to a technical hypothesis \eqref{stareq}.

\begin{theorem}\label{thm:main}
    Suppose $G: \mathbb{S}_{f}\to \mathbb{S}_{t^\de-b} $ is a non monomial homomorphism satisfying $G|_{\ring}=\tau$ for some $\tau\in \Aut(\ring)$ commuting with $\sigma$, as in Theorem~\ref{P:polyndivm}. Let $k=\deg(G(t))$.  Suppose further that
    \begin{equation}\label{stareq}
        \tag{$\star$} \text{if $s>1$ is the least value for which $sk\geq \de$, then $(s+1)k\leq 2\de$}.
    \end{equation}
    Then $\mathbb{S}_{t^\de-b}$ 
    must be associative, that is, $b\in \ring_0$ and $\os|\de$.
\end{theorem}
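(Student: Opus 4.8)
The plan is to leverage Theorem~\ref{P:polyndivm}, which already supplies $\os\mid\de$ and $G(t)=tp(t^{\os})$ with $p\in\ring[x]$ of degree $d'\geq 1$ (as $G$ is non monomial); writing $k=\deg G(t)=1+\os d'$, it remains only to prove $\sigma(b)=b$, for then $b\in\ring_0$ and Petit's criterion \cite[(15)]{Petit1967} forces $\Sone{b}$ to be associative. I would work throughout with the canonical left-$\Stsigma$-linear reduction map $\rho:\Stsigma\to\Sone{b}$, $\rho(g)=g\ \mathrm{mod}_r(t^{\de}-b)$, whose defining properties are that $\rho(xw)=\rho(x\,\rho(w))$ for all $x,w\in\Stsigma$ and that $\rho(x(t^{\de}-b))=0$.

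The first key step is a direct computation of how $t^{\de}-b$ fails to commute past $G(t)$ after reduction: namely
\[
\rho\bigl((t^{\de}-b)\,G(t)\bigr)=(\sigma(b)-b)\,G(t).
\]
This holds because $t^{\de}G(t)=\sum_j p_j t^{\de+1+j\os}$ reduces term-by-term via $t^{\de+r}\equiv\sigma^r(b)t^r$ to $\sum_j p_j\sigma^{1+j\os}(b)t^{1+j\os}=\sigma(b)G(t)$, using $\sigma^{\os}=\id$ and $\sigma^{\de}=\id$ (here $\os\mid\de$ enters), while $bG(t)$ has degree $k<\de$ and is unchanged. Thus the defect $\mu:=\sigma(b)-b$ is exposed as exactly the obstruction to $t^{\de}-b$ passing through $G(t)$.

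Next I would pin $\mu$ down using power associativity. Let $s>1$ be least with $sk\geq\de$; since $(s-1)k<\de$, the element $R:=(tp)^s\in\Stsigma$ has degree $sk<2\de$ and equals $G(t)^s$ before reduction, so right division gives $R=q\,(t^{\de}-b)+\overline R$ with $\overline R=\rho(R)=G(t)^s\in\Sone{b}$ and $q\neq 0$ of degree $sk-\de$ (leading coefficient $N_s^{\sigma^k}(c)\neq0$, where $c$ is the leading coefficient of $G(t)$). Because $s+1\leq\de$, the power $G(t)^{s+1}$ is well defined in $\Sf$ and bracketing-independent, so $G(t)\cdot\overline R=\overline R\cdot G(t)$ in $\Sone{b}$. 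Evaluating each side gives $G(t)\,\overline R=\rho(G(t)R)-\rho\bigl(G(t)q(t^{\de}-b)\bigr)=\rho\bigl((tp)^{s+1}\bigr)$, since the second term lies in the left ideal, whereas $\overline R\,G(t)=\rho(R\,G(t))-\rho\bigl(q(t^{\de}-b)G(t)\bigr)=\rho\bigl((tp)^{s+1}\bigr)-\rho\bigl(q(t^{\de}-b)G(t)\bigr)$. Equating and invoking left-linearity with the first step yields
\[
0=\rho\bigl(q(t^{\de}-b)G(t)\bigr)=\rho\bigl(q\,\rho((t^{\de}-b)G(t))\bigr)=\rho\bigl(q\,\mu\,G(t)\bigr).
\]

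Finally I would conclude by a degree count, which is exactly where hypothesis \eqref{stareq} enters. If $\mu\neq 0$, then $q\mu G(t)$ has degree $(sk-\de)+k=(s+1)k-\de\leq\de$ by \eqref{stareq}. When this degree is strictly less than $\de$, $\rho$ acts as the identity, so $q\mu G(t)=0$ in the domain $\Stsigma$, forcing $\mu=0$ since $q,G(t)\neq0$ — a contradiction. In the boundary case $(s+1)k=2\de$, the product $q\mu G(t)$ has degree exactly $\de$ but minimal degree $\geq 1$ (every monomial of $G(t)=tp(t^{\os})$ has degree $\geq 1$), hence carries no constant term; its unique degree-$\de$ monomial reduces under $\rho$ to a nonzero scalar multiple of $b$ in degree $0$, which nothing can cancel, again contradicting $\rho(q\mu G(t))=0$. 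Either way $\mu=\sigma(b)-b=0$, completing the argument. I expect the main obstacle to be the bookkeeping of the two evaluations of $G(t)^{s+1}$ — in particular verifying that their discrepancy is captured cleanly by $\rho(q(t^{\de}-b)G(t))$ — together with the careful treatment of the boundary degree case; the left-module structure of $\rho$ and the hypothesis \eqref{stareq} are precisely what make both tractable.
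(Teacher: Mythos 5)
Your proof is correct, and although it shares the paper's skeleton --- both invoke Theorem~\ref{P:polyndivm} to get $\os\mid\de$ and $G(t)=tp(t^{\os})$, and both exploit that $s+1\le \de$ makes $G(t)^{s+1}$ bracketing-independent, so that $G(t)^s\circ G(t)=G(t)\circ G(t)^s$ --- your execution is genuinely different. The paper expands the two bracketings monomial-by-monomial, indexes the products of terms of $G(t)$ by sets $I_d$, observes that only the unique top-degree product is reduced into degree $(s+1)k-\de$, and equates its two coefficients, $\sigma^s(b)N^\sigma_{s+1}(c_\ell)$ versus $N^\sigma_{s+1}(c_\ell)\sigma^{s+1}(b)$. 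You instead treat the reduction map $\rho$ as a left-$\Stsigma$-linear projection killing the left ideal $\Stsigma(t^{\de}-b)$: writing $(tp)^s=q(t^{\de}-b)+\overline{R}$, the bracketing identity collapses to $\rho\bigl(q(t^{\de}-b)G(t)\bigr)=0$, the commutation defect $\rho\bigl((t^{\de}-b)G(t)\bigr)=(\sigma(b)-b)G(t)$ isolates $\mu=\sigma(b)-b$, and a degree count --- the only place \eqref{stareq} and the integral-domain hypothesis enter --- forces $\mu=0$. What your route buys: it avoids the combinatorial bookkeeping entirely (no need to know that $I_{d_{max}}$ is a singleton), and it cleanly covers the boundary case $(s+1)k=2\de$, which \eqref{stareq} permits (e.g.\ $\os=3$, $\de=6$, $k=4$, $s=2$) but which the paper's proof passes over by asserting $(s+1)k<2\de$; in that case the degree-$\de$ monomial of $q\mu G(t)$ reduces to a nonzero constant term that nothing else can cancel, exactly as you argue. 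What the paper's route buys is explicitness about exactly which coefficients disagree. Two points you should spell out when writing this up: the claim $\rho(R)=G(t)^s$ needs a one-line justification (apply $\rho(xw)=\rho(x\rho(w))$ repeatedly to the right-nested product, then use bracketing-independence of $G(t)^s$, which holds because $G$ is a homomorphism and $t^s$ is unambiguous in $\Sf$), and the assertion $s+1\le\de$ needs the one-line check $s-1<\de/k\le\de/3$ (using $k\ge 1+\os\ge 3$; the case $\os=1$ is trivial since then the conclusion holds automatically).
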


For example, hypothesis \eqref{stareq} holds whenever $k<\de/2$.

\begin{proof}
By Theorem~\ref{P:polyndivm}, the degrees of each of the monomials appearing in $G(t)=tp(t^{\os})$ are congruent to $1$ mod $\os$, and $\os|\de$.

We can write  $k=1+\ell \os =\deg(G(t))$.  We have $1\leq k \leq \de-\os+1$.  Since we assume $\os\geq 2$, we have $k\geq 3$ so  $\de\geq 4$.  Let $s>1$ be the least value for which $sk\geq \de$; considering $k=1+\os$ gives the upper bound  $s<\frac{\de}{\os+1}+1\leq \frac{\de}{3}+1$.  Since $s$ is an integer this implies for all $\de\geq 4$ that $s \leq \de-2$.  Therefore  $s+1<\de$ and since $G$ is a homomorphism we have $G(t^{s+1})=G(t)^{s+1}$.    We now compute $G(t)^{s+1}$ in two ways:  as $G(t)^sG(t)$ and as $G(t)G(t)^s$.

We first set some notation.  Let $\mathcal{D}^{s+1}$ be the set of all degrees of monomials appearing in $H(t)=(t(p(t^{\os}))^{s+1}\in \Stsigma$.  Then the maximal element of $\mathcal{D}^{s+1}$ is $d_{max}=(s+1)k$.
Write
$$
G(t) = \sum_{i=0}^{\ell-1}c_it^{1+i\os}
$$
for some $c_i\in \ring$.  For any $d$, let
$$
I_{d} = \{\gamma = (\gamma_1,\gamma_2,\cdots,\gamma_{s+1})\mid c_{\gamma_j}\neq 0, \sum_j (1+\gamma_j \os) = d\}.
$$
These index all the different products of terms of $G(t)$ that give rise to a monomial of degree $d$ in the product $H(t)\in \Stsigma$.  Specifically, given $\gamma\in I_{d}$, the corresponding (noncommutative!) product
is
$$
c_\gamma t^d = \prod_{j=1}^{s+1} (c_{\gamma_j}t^{1+\gamma_j \os}) = \prod_{j=1}^{s+1} \sigma^{j-1}(c_{\gamma_j})t^d
$$
where this is well-defined in $\Stsigma$. If $d<\de$, then $c_\gamma$ is also the coefficient corresponding to this monomial product in $\mathbb{S}_{t^\de-b}$, %$\Sone{b}$,
since products with total degree less than $\de$ are associative.  If $d>\de$, however, then the value of the coefficient will \emph{a priori} depend on the order of associators.

By \eqref{stareq}, we have $(s+1)k<2\de$.  Then $d'=d_{max}-\de=(s+1)k-\de$ is the degree of a monomial appearing in $G(t)^{s+1}\in \mathbb{S}_{t^\de-b} % \Sone{b}
$ and we wish to compare its coefficients as computed in the two distinct ways ($G(t)G(t)^s$ and $G(t)^sG(t)$).  The coefficient of $t^{d'}$ in $G(t)^{s+1}$ in either case is the sum of all monomials in $I_{d'}$, plus additional terms that correspond to products in $I_{d_{max}}$ that were reduced by the relation in $\Sone{b}$ to terms of degree $d'$.  By construction, however, $I_{d_{max}}$ is a singleton, being composed only of the products of the top degree terms:
$$
I_{d_{max}} = \{ (\ell,\ell,\cdots,\ell) \} = \{\gamma_{max}\}.
$$
We evaluate the corresponding term in $\mathbb{S}_{t^\de-b}$ 
in two ways as follows.
First note that
$$
(c_\ell t^{k})^s = N_{s}^\sigma(c_\ell)t^{sk}= N_{s}^\sigma(c_\ell)t^{sk-\de}b=N_{s}^\sigma(c_\ell)\sigma^{s}(b)t^{sk-\de}
$$
where we have used right division by $t^{\de}-b$ in $\mathbb{S}_{t^\de-b}$
and that $sk-\de\equiv s\mod \os$.
Thus the monomial corresponding to $\gamma_{max}$ in the product $G(t)^sG(t)$ is
$$
c_{left}t^{d'}=(c_\ell t^{k})^s(c_\ell t^{k}) = N_{s}^\sigma(c_\ell)\sigma^{s}(b)t^{sk-\de}c_\ell t^{k}=
\sigma^s(b)N_{s+1}^\sigma(c_\ell)t^{(s+1)k-\de}
$$
whereas the monomial corresponding to $\gamma_{max}$ in the product $G(t)G(t)^s$ is
$$
c_{right}t^{d'}=(c_\ell t^{k})(c_\ell t^{k})^s=c_\ell t^{k}N_{s}^\sigma(c_\ell)\sigma^{s}(b)t^{sk-\de}=
N_{s+1}^\sigma(c_\ell)\sigma^{s+1}(b))t^{(s+1)k-\de}.
$$
Therefore, since $G(t)^sG(t)=G(t)G(t)^s$, the coefficients of $t^{d'}$ in each of these expressions must be equal, yielding
$$
 \sum_{\gamma \in I_{d'}}c_\gamma + c_{left}=\sum_{\gamma \in I_{d'}}c_\gamma+c_{right}.
 $$
As the sums are equal this reduces to the requirement that $c_{left}=c_{right}$, which implies by cancellability of $c_\ell\neq 0$ that $\sigma^s(b)=\sigma^{s+1}(b)$ or $b\in \ring_0$.
\end{proof}

\begin{corollary}
    \label{c:main}
    Suppose $G: \mathbb{S}_{f}\to \mathbb{S}_{t^\de-b} $ is a non monomial isomorphism satisfying $G|_{\ring}=\tau$ for some $\tau\in \Aut(\ring)$ commuting with $\sigma$, and whose degree $k$ satisfies \eqref{stareq}, as in Theorem~\ref{P:polyndivm}.
    Then  $a,b\in \ring_0$ and $\os|\de$.
\end{corollary}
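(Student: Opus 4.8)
The plan is to reduce everything to Theorem~\ref{thm:main} together with Petit's associativity criterion \cite[(15)]{Petit1967}, using the tacit identification $f(t)=t^{\de}-a$ so that $\Sf=\Sone{a}$. First I would observe that a non-monomial \emph{isomorphism} is in particular a non-monomial homomorphism satisfying the standing hypotheses of Theorem~\ref{thm:main}: its restriction to $\ring$ is the automorphism $\tau$ commuting with $\sigma$, and its degree $k$ satisfies \eqref{stareq}. Hence Theorem~\ref{thm:main} applies verbatim and yields directly that $b\in\ring_0$ and $\os\mid\de$; equivalently, the target algebra $\Sone{b}$ is associative. (Note that $\os\mid\de$ is in any case already guaranteed by Theorem~\ref{P:polyndivm}.)

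It then remains only to deduce $a\in\ring_0$, and for this I would transfer associativity across the isomorphism rather than re-run the combinatorial argument on $G^{-1}$. Since $G:\Sone{a}\to\Sone{b}$ is bijective and multiplicative, for all $x,y,z\in\Sone{a}$ we have
$$
G\bigl((xy)z-x(yz)\bigr)=\bigl(G(x)G(y)\bigr)G(z)-G(x)\bigl(G(y)G(z)\bigr)=0,
$$
the final equality holding because $\Sone{b}$ is associative; injectivity of $G$ then forces $(xy)z=x(yz)$, so $\Sone{a}$ is associative as well. Applying \cite[(15)]{Petit1967} in the reverse direction, $\Sone{a}$ is associative precisely when $\sigma(a)=a$ and $\os\mid\de$; since $\os\mid\de$ is already in hand, this forces $\sigma(a)=a$, that is, $a\in\ring_0$. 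Combining the two conclusions gives $a,b\in\ring_0$ and $\os\mid\de$, as required.

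I do not anticipate any genuine obstacle here: all the substantive work is carried out in Theorem~\ref{thm:main}, and this corollary merely records the extra leverage afforded by bijectivity. The only points meriting care are the identification $\Sf=\Sone{a}$ needed for the statement about $a$ to be meaningful, and the observation that it is the \emph{isomorphism} property — not just the homomorphism property — that pulls the associativity of $\Sone{b}$ back to $\Sone{a}$.
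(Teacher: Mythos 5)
Your proof is correct; the paper states this corollary without an explicit proof, and your argument supplies exactly the intended deduction from Theorem~\ref{thm:main}: that theorem gives $b\in\ring_0$ and $\os\mid\de$, bijectivity pulls associativity back to $\Sone{a}$, and Petit's criterion then forces $a\in\ring_0$. One point worth highlighting in your favour: transferring associativity through the isomorphism, rather than applying Theorem~\ref{thm:main} to $G^{-1}$, is the right move, since the hypotheses guarantee \eqref{stareq} only for $k=\deg G(t)$ and not for the degree of $G^{-1}(t)$, so the inverse map need not satisfy that theorem's hypotheses.
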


On the other hand, in the associative case it is not difficult to produce a homomorphism $G: \Sone{a}\to \Sone{b}$ for which $G(t)$ is a non monomial polynomial.

\begin{example}
    The minimal example is $\os=2,\de=4$ and $G(t)=t+t^3$, with $\tau=\id$.  We choose $\field=\mathbb{F}_{25}$ and $\field_0=\mathbb{F}_5$ with $\sigma(x)=x^5$.  Then $\os|\de$ and $a=b=4\in \field_0$, so $\Kone{a}$ is associative.
Since in addition $G(t)^2=(t+t^3)^2=t^2+2t^4+4t^2=3$, we have $G(t)^4=4$ so $G$ is a well-defined (non-surjective!) polynomial morphism
    $$
    G: \Ktsigma/\Ktsigma(t^4-4)\to \Ktsigma/\Ktsigma(t^4-4).
    $$
\end{example}

The hypothesis \eqref{stareq} is a minor technical condition that should not be necessary.  We present work towards proving the following conjecture in \cite{NevinsPumpluen2026}.

\begin{conjecture}
    \label{conj:nonmonomial}
    Suppose $G: \mathbb{S}_{f} \to \mathbb{S}_{t^\de-b}$ is a non monomial isomorphism satisfying $G|_{\ring}=\tau$ for some $\tau\in \Aut(\ring)$ commuting with $\sigma$.  Then  $a,b\in \ring_0$ and $\os|\de$, that is, $\Sone{b}$ must be associative.
\end{conjecture}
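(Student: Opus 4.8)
The plan is to bypass \eqref{stareq} by reducing the problem to a single, concrete case and then analyzing the cube $z^3$ of $z:=G(t)$ directly. By Theorem~\ref{P:polyndivm} we may already assume $\os\mid\de$ and $G(t)=tp(t^{\os})$ with leading coefficient $c_\ell$ and degree $k=\deg G(t)\equiv 1\pmod{\os}$, $k>1$. The decisive structural observation is that every monomial occurring in the $\Stsigma$-expansion of any $z^s$ has $t$-degree $\equiv s\pmod{\os}$; since $\de\equiv 0\pmod{\os}$, each reduction step $t^{M}\mapsto\sigma^{M-\de}(b)t^{M-\de}$ therefore carries the \emph{uniform} twist $\sigma^{[s]_{\os}}(b)$. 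Equivalently, $u:=t^{\os}$ commutes with all scalars and generates a commutative associative subalgebra $\ring[u]/(u^{\de/\os}-b)$ of $\Sone{b}$ inside which reductions are $\sigma$-twist-free; this decouples the $b$-dependence of any product from the $c_i$-combinatorics, which take place entirely in the commutative ring $\ring[u]$. Granting $b\in\ring_0$ for a non-monomial homomorphism, the isomorphism hypothesis then yields $a\in\ring_0$ immediately: $\os\mid\de$ and $b\in\ring_0$ make $\Sone{b}$ associative, so $\Sone{a}\cong\Sone{b}$ is associative, whence $a\in\ring_0$ by the criterion \cite[(15)]{Petit1967}.

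The first step is to note that \eqref{stareq} fails in only one regime. If $s>1$ is least with $sk\geq\de$ then $sk<\de+k<2\de$, and a short estimate (comparing $2\de/(s+1)<k<\de/(s-1)$) shows that $(s+1)k>2\de$ forces $s<3$, hence $s=2$; that is, the sole case left open by Theorem~\ref{thm:main} is $2\de/3<k<\de$, where the leading monomial of the cube undergoes a \emph{double} reduction. Since $k\geq 1+\os\geq 3$ we have $\de\geq 4$, so $t^{3}$ is unambiguous in $\Sone{a}$ and the homomorphism property gives $(zz)z=G(t^2)G(t)=G(t^3)=G(t)G(t^2)=z(zz)$; thus the associator $(z,z,z)=(zz)z-z(zz)$ must vanish, and everything reduces to extracting $b\in\ring_0$ from this identity.

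Next I would compute the top-degree coefficient of this associator using the $u$-structure. The unique top monomial $\gamma_{\max}=(\ell,\ell,\ell)$ contributes $N_3^{\sigma}(c_\ell)t^{3k}$, and tracking its two double-reductions with the uniform twists gives, for the coefficient of $t^{[3k]_{\de}}$, the value $N_3^{\sigma}(c_\ell)\,\sigma^{[2]_{\os}}(b)\,\sigma^{[3]_{\os}}(b)$ from the left-nested product $(ww)w$ and $N_3^{\sigma}(c_\ell)\,\sigma^{[3]_{\os}}(b)^2$ from the right-nested product $w(ww)$, where $w=c_\ell t^{k}$. As $[3]_{\os}-[2]_{\os}\equiv 1\pmod{\os}$, equality of these two expressions already forces $\sigma^{[2]_{\os}}(b)=\sigma^{[3]_{\os}}(b)$, i.e.\ $\sigma(b)=b$, i.e.\ $b\in\ring_0$. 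The entire content of the conjecture is therefore to show that the \emph{remaining} contributions to this coefficient — those from intermediate formal degree $3k-\de$, where $zz$ is reduced once before the final multiplication — do not cancel this discrepancy. Here the uniform-twist principle is what makes the computation tractable: each such contribution carries exactly one factor $\sigma^{[2]_{\os}}(b)$ or $\sigma^{[3]_{\os}}(b)$ according to whether its reduction occurs at stage two or stage three, so after factoring one expects the whole coefficient to take the form $(\,\text{a single nonzero } \ring[u]\text{-combination of the }c_i\,)\cdot(\sigma^{[2]_{\os}}(b)-\sigma^{[3]_{\os}}(b))$, whose vanishing again yields $b\in\ring_0$.

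The hard part will be precisely this last clause: proving that the stage-two and stage-three contributions assemble with a common, nonzero $\ring[u]$-coefficient rather than partially canceling among themselves. I would attack it by writing $zz=t^{2}V(u)$ for an explicit $V\in\ring[u]$ of $u$-degree less than $\de/\os$ that already records the stage-two twist $\sigma^{[2]_{\os}}(b)$, and then comparing $(t^{2}V(u))\,z$ with $z\,(t^{2}V(u))$; because $V$ and $p$ commute in $\ring[u]$, the two products can differ only through the position of the single remaining reduction relative to the final factor of $t$, which is exactly the $\sigma^{[2]_{\os}}(b)$-versus-$\sigma^{[3]_{\os}}(b)$ ambiguity and not a genuine combinatorial cancellation. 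Carrying this bookkeeping through for all admissible $p$ — and, more generally, iterating it across several reductions should the non-zero-divisor hypothesis on $c_\ell$ be relaxed so that larger $s$ reappear — is the technical obstacle that we defer to \cite{NevinsPumpluen2026}.
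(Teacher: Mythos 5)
Your proposal does not prove the statement; it re-derives, in the one remaining case, what the paper already proves in Theorem~\ref{thm:main}, and then defers exactly the point that makes the statement a conjecture. To be clear about what is right: the statement has no proof in the paper (it is explicitly left open, with partial progress deferred to \cite{NevinsPumpluen2026}), and the paper's closest result is Theorem~\ref{thm:main}, valid under \eqref{stareq}. Your observation that \eqref{stareq} can only fail when $s=2$ --- equivalently $2\de/3<k<\de$, so that the cube of $z=G(t)$ is the first power requiring a double reduction --- is correct and genuinely sharpens the paper's remark that \eqref{stareq} holds whenever $k<\de/2$. Your uniform-twist principle (valid because Theorem~\ref{P:polyndivm} already gives $\os\mid\de$, so every reduction of a stage-$s$ term carries the twist $\sigma^{[s]_{\os}}(b)$), and your two evaluations of the doubly-reduced leading term, $N_3^{\sigma}(c_\ell)\,\sigma^{[2]_{\os}}(b)\,\sigma^{[3]_{\os}}(b)$ from $(ww)w$ versus $N_3^{\sigma}(c_\ell)\,\sigma^{[3]_{\os}}(b)^2$ from $w(ww)$, are also correct; this is the same mechanism that drives the proof of Theorem~\ref{thm:main}.

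The gap is your final non-cancellation claim, and it is not a technicality: it \emph{is} the open content of the conjecture. Equating the coefficients of $t^{3k-2\de}$ in $(zz)z$ and $z(zz)$ gives, after the factoring you describe,
\begin{equation*}
\Bigl(\sum_{\gamma\in J}A_\gamma \;+\; N_3^{\sigma}(c_\ell)\,\sigma^{[3]_{\os}}(b)\Bigr)\,\bigl(\sigma^{[2]_{\os}}(b)-\sigma^{[3]_{\os}}(b)\bigr)\;=\;0,
\end{equation*}
where $J$ indexes the singly-reduced products whose reduction occurs at stage two and $A_\gamma$ is the corresponding twisted product of the $c_i$'s. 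Contrary to your description, the bracketed factor is \emph{not} an $\ring[u]$-combination of the $c_i$ alone: it contains $b$ through the doubly-reduced term, and nothing you write excludes its vanishing. In your own $u=t^{\os}$ formalism the obstruction is sharp: writing $z=tp(u)$, $Q=p^{\sigma^{-2}}p^{\sigma^{-1}}$, and letting $q_1$ be the quotient upon dividing $Q$ by $u^{\de/\os}-\sigma^{-1}(b)$, associativity of the cube is equivalent to $\bigl(q_1p \bmod (u^{\de/\os}-b)\bigr)\,(b-\sigma^{-1}(b))=0$ in $\ring[u]/(u^{\de/\os}-b)$. The flaw in your fourth paragraph is that reduction mod $u^{\de/\os}-b$ does \emph{not} commute with the coefficient twist $\sigma^{-1}$ unless $\sigma(b)=b$ --- which is precisely what is to be proved --- so commutativity of $\ring[u]$ cannot by itself rule out cancellation. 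What must be shown is the divisibility statement $(u^{\de/\os}-b)\nmid q_1p$ in $\ring[u]$ whenever $b\notin\ring_0$; since in this regime $\deg(q_1p)\geq \de/\os$ and $u^{\de/\os}-b$ may be reducible over the domain $\ring$, this is a genuine unresolved question, which you assert ("not a genuine combinatorial cancellation") but do not prove, deferring it to the same reference \cite{NevinsPumpluen2026} to which the paper defers the conjecture. Your proposal therefore reduces the conjecture to a clean divisibility claim but leaves the statement open.
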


As mentioned, having such a complete understanding of all homomorphisms between Petit rings has significant impact to the classification of various associated  error-correcting codes.  In addition to the conjecture above, we identify that it is possible to extend the results of our paper to the case of noncommutative rings.  It would also be interesting to see how our methods can be applied when we
look at homomorphisms between general Petit rings.

\section{Compliance with ethical standards}

\subsection*{Funding sources} This paper was written during the second author's stay as a Simons Professor in Residence at the University of Ottawa. She gratefully acknowledges the support of CRM and the Simons Foundation, and thanks the Department of Mathematics and Statistics for its hospitality and its congenial and inspiring atmosphere.

\subsection*{Disclosure of potential conflicts of interest} The authors have no relevant financial or non-financial interests to disclose.

\subsection*{Data Availability Statement} No data were generated or used for this research.

\newcommand{\etalchar}[1]{$^{#1}$}
\providecommand{\bysame}{\leavevmode\hbox to3em{\hrulefill}\thinspace}
\providecommand{\MR}{\relax\ifhmode\unskip\space\fi MR }
% \MRhref is called by the amsart/book/proc definition of \MR.
\providecommand{\MRhref}[2]{%
  \href{http://www.ams.org/mathscinet-getitem?mr=#1}{#2}
}
\providecommand{\href}[2]{#2}

\end{document}